\def\commentflag{3}
\def\ct{3}
\newtheorem{theorem}{Theorem}[section]
\newtheorem{lemma}[theorem]{Lemma}
\newtheorem{proposition}[theorem]{Proposition}
\newtheorem{corollary}[theorem]{Corollary}
\numberwithin{equation}{section}
\theoremstyle{remark}
\newtheorem{remark}[theorem]{Remark}
\theoremstyle{definition}
\newtheorem{definition}[theorem]{Definition}
\newcommand{\norm}[1]{\left\lVert #1 \right\rVert}
\newcommand{\mb}[1]{{\color{black} #1}}
\newcommand{\mbb}[1]{{\color{black} #1}}
\def \sss{\scriptscriptstyle}
\def\q {\quad}
\def \l{\langle}
\def \r{\rangle}
\def\bb{\begin{equation}
  \left\{\ 
   \begin{aligned} }
\def\ee{   \end{aligned}
  \right.
  \end{equation}}
\def\mm{ \left[
 \begin{matrix}}
\def\nn{\end{matrix} \right] } 
\def\p{\partial}
\def \dd{\cdot}
\def \t{\times}
\def\n{\nu}
\def \w {\widetilde}
\def \h{\hat}
\def\d{\delta}
\def \vp{\varphi}
\def \na{\nabla}
\def \ep{\varepsilon}
\def \lad{\lambda}
\def\re{\mathfrak{Re}}
\def\im{\mathfrak{Im}}
\def \curl{\nabla \times}
\def \ccurl{{\rm  curl}}
\def \ddiv {{\rm div}}
\def \td{\mathcal{T}_D^\ww}
\def \tbb{\mathcal{T}_D}
\def \kd{\mathcal{K}_D^\ww}
\def \wg{\w{\gamma}_n^{-1}}
\def \R{\mathbb{R}}
\def \C{\mathbb{C}}
\def \S{\mathbb{S}}
\def \AA{\mathbb{A}}
\def \P{\mathbb{P}}
\def \xx {\mathbb{X}}
\def \I{\mathbb{I}}
\def \TT{\mathcal{T}}
\def \W{\mathcal{W}}
\def \np{\mathcal{K}}
\def \ss{\mathcal{S}}
\def \M{\mathcal{M}}
\def \A{\mathcal{A}}
\def \L{\mathcal{L}}
\def \rr{\mathcal{R}}
\def \H{{\rm {\bf H}}}
\def \ww{\omega}
\def \wt{\omega_\tau}
\def \di{{\rm{\bf d}}}
\def \e{{\rm{\bf e}}}
\def \F{\mathcal{F}}
\def \ynm {Y_n^m}
\def \unm {U_n^m}
\def \vnm {V_n^m}
\def \ete {E^{TE}_{n,m}}
\def \etm {E^{TM}_{n,m}}
\def \eth {E^h_{n,m}}
\def \wete {\w{E}^{TE}_{n,m}}
\def \wetm {\w{E}^{TM}_{n,m}}
\def \anm {\alpha_{n,m}}
\def \bnm {\beta_{n,m}}
\def \gnm {\gamma_{n,m}}
\def \enm {\eta_{n,m}}
\def \hh{\mathcal{H}}
\def \jj{\mathcal{J}}
\def \hzz {{\bf H}_0(\ddiv 0, D)}
\def \hbb {H_{00}^{-1/2}(\p D)}
\def \hz {{\bf H}(\ddiv 0, D)}
\def \pd {\mathbb{P}_{{\rm d}}} 
\def \pw {\mathbb{P}_{{\rm w}}}
\def \poo {\P_{\lad_0}}
\def \ed {{\rm {\bf p}}}
\def \md {{\rm {\bf m}}}
\def \eq {{\rm {\bf Q}}}
\def \mq {{\rm {\bf M}}}
\def \eo {{\rm {\bf O}}}
\def \ei {{\rm {\bf E}}_0^i}
\def \ap {{\rm {\bf A(\vp)}}}
\newcommand{\tb}[1]{\mathcal{T}_{D,#1}}
\newcommand{\kb}[1]{\mathcal{K}_{D,#1}}
\def \kbdd {\mathcal{K}^{{\rm d,d}}_{D}}
\def \kbdw {\mathcal{K}^{{\rm d,w}}_{D}}
\def \kbwd {\mathcal{K}^{{\rm w,d}}_{D}}
\DeclareMathOperator{\tr}{tr}
\DeclareMathOperator{\ran}{ran}
\title{Mathematical analysis of electromagnetic scattering \\ by dielectric nanoparticles with high refractive indices}
\begin{document}
\author{
Habib Ammari\footnote{Department of Mathematics, ETH Z\"{u}rich, R\"{a}mistrasse 101, CH-8092 Z\"{u}rich, Switzerland. 
The work of this author is partially supported by the
Swiss National Science Foundation (SNSF) grant 200021-172483.
(habib.ammari@math.ethz.ch).}
\and Bowen Li\footnote{Department of Mathematics, The Chinese University of Hong Kong, Shatin, N.T., Hong Kong. (bwli@math.cuhk.edu.hk).}
\and Jun Zou\footnote{Department of Mathematics, The Chinese University of Hong Kong, Shatin, N.T., Hong Kong.
The work of this author was
substantially supported by Hong Kong RGC grant (Projects 14306719 and 14306718).
(zou@math.cuhk.edu.hk).}
}
\date{}
\maketitle

\begin{abstract}
In this work, we are concerned with the mathematical modeling of the electromagnetic (EM) scattering by 
arbitrarily shaped non--magnetic nanoparticles with high refractive indices. 
When illuminated by visible light, such particles can exhibit a very strong isotropic magnetic response, resulting from the coupling of the incident wave with the circular displacement currents of the EM fields. The main aim of this work is to mathematically illustrate this phenomenon.
We shall first introduce the EM scattering resolvent and the concept of dielectric subwavelength resonances. Then we 
derive the a priori estimates for the subwavelength resonances and the associated resonant modes. We also show the existence of resonances and obtain their asymptotic expansions in terms of the small particle size and the high contrast parameter. After that, we investigate the enhancement of the scattering amplitude and the cross sections when the resonances occur.
In doing so, we develop a novel multipole radiation framework  that directly separates the electric and magnetic multipole moments and allows us to clearly see their orders of magnitude and blow--up rates. We prove that at the dielectric subwavelength resonant frequencies, 
the nanoparticles with high refractive indices behave like the sum of the electric dipole and the resonant magnetic dipole. Some explicit calculations and numerical experiments are also provided to validate our general results and formulas. 

\end{abstract}

\section{Introduction}

In the past decades, the plasmonic resonant nanostructures with localized surface plasmons have been extensively studied in nanophotonics and used as the building blocks to make novel optical devices and complex metamaterials \cite{sarid2010modern}. The plasmonic materials that are usually made of metals may have  negative real permittivity in the visible spectral range and can be strongly coupled with the electric part of the incident field, inducing light enhancement and confinement \cite{jain2006calculated,baffou2010mapping}. It has been mathematically demonstrated that the quasi-static plasmonic resonance can be treated as an eigenvalue problem of the Neumann--Poincar\'{e} operator \cite{ammari2016surface,ammari2016plasmaxwell,ammari2017mathematicalscalar, add2,add7}. However, the metal structures suffer from a significant heat dissipation caused by the imaginary part of the electric permittivity and the anisotropic magnetic response. These physical properties severely limit the efficiency and functionality of the optical plasmonic nanodevice, which motivate 
the search for alternatives to the metal subwavelength resonators \cite{evlyukhin2012demonstration}. 

There is an immensely increasing interest in recent years in the study of the dielectric and semiconductor resonators with high refractive indices. These  
dielectric resonant structures have low absorbing properties and isotropic magnetic responses, making it
possible to complement the plasmonic elements in many potential applications \cite{kuznetsov2016optically,krasnok2012all}. In particular, the silicon nanoparticle has emerged as a popular choice of the subwavelength resonator for designing the dielectric metamaterials because of its high electric permittivity and very low dissipative loss
\cite{staude2017metamaterial}. It was observed experimentally \cite{evlyukhin2012demonstration,kuznetsov2016optically,garcia2011strong} that for a single spherical silicon nanoparticle, when the wavelength inside the particle is comparable to its diameter, the electric and magnetic dipole radiations can have comparable strengths, and the magnetic dipole resonance can be excited in the visible region. 
Nevertheless, compared to considerable evidence from engineering and physics literature, the mathematical understanding of the origin of the dielectric resonance and the mechanism underlying the strong magnetic response is still limited, apart from the case of spherical nanoparticles which has been well illustrated and studied by the Mie scattering theory
\cite{tzarouchis2018light}. In  \cite{ammari2019subwavelength}, the authors considered the Helmholtz equation and used asymptotic analysis to characterize the quasi--static dielectric resonant frequencies for the high contrast nanoparticles of arbitrary shape and obtained the corresponding first--order corrections. In \cite{meklachi2018asymptotic}, the case of nonlinear scatterers of the Kerr type was investigated and the corresponding asymptotic formulas for the resonances were derived.

This work is devoted to modeling the EM scattering of strongly coupled nanoparticles with high refractive indices by using the full Maxwell system. We shall provide a solid mathematical framework generalizing the existing ones for the scalar case \cite{meklachi2018asymptotic,ammari2019subwavelength,add1} to analyze the dielectric subwavelength resonances and the scattering amplitude. The newly obtained results can help rigorously confirm the aforementioned physical phenomena, which are unique to the electromagnetic wave, from the mathematical point of view. \mb{
Our analysis is based on the volume potential operators and the Lippmann--Schwinger equation, which can be naturally connected with the scattering resolvent associated with the Maxwell operator. It is possible to analyze the problem via other approaches, such as the variational formulation with the matched asymptotic expansions method \cite{add8,add9,add10,add11} and the boundary integral equation with the asymptotic analysis \cite{add12,add2,ammari2016surface}. But for both approaches, the calculations and analyses would be much more involved. More precisely, for the variational method we need to consider the transparent boundary condition involving DtN operators,
while for the boundary integral equation there are additional unknowns on the boundary to solve before we treat the original unknowns (EM fields).}

\mb{
In this work, we first define the  scattering resolvent of the Maxwell operator associated with high refractive indices as an operator--valued meromorphic function by exploiting the Lippmann--Schwinger representation and analytic Fredholm theorem. As in the scalar wave case \cite{lax1990scattering,gopalakrishnan2008asymptotic,zworski2017mathematical}, the EM scattering resonances are then defined as the poles of the (Maxwell) scattering resolvent.} In the current work, we are mainly interested in the dielectric subwavelength resonances, namely, those poles of the resolvent in the high contrast and quasi--static regime (cf.\,Definition \ref{def:drs_1}). It turns out that the integral operator $\td$ (cf.\,\eqref{def:electricpotential}) and its resolvent $(\lad - \td)^{-1}$ essentially determine the properties of the scattering resolvent and hence are of central importance for our purposes. 
We would like to stress that both the surface plasmon resonances and the dielectric subwavelength resonances are closely related to the integral operator $\td$, but with a very different underlying mechanism; see the discussion after Lemma \ref{lem:tbz1}. Regarding the spectral properties of $\td$, in \cite{costabel2010volume,costabel2012essential,costabel2015volume}, the authors gave a complete characterization of the essential spectrum of various integral operators arising from the EM scattering problems for both smooth and non--smooth (Lipschitz) domains. Along this direction, we showed in \cite{ammari2020superli} that the spectrum of $\td$ is a disjoint union of the essential spectrum on the real axis and the eigenvalues of finite type in the upper--half plane with $l^2$-summable imaginary parts. 


To analyze the dielectric subwavelength resonances, we consider the meromorphic family of operators $(\tau^{-1} - \TT_D^{\ww})^{-1}$ of the complex variable $\ww$ with the characteristic size $\d = 1$ by scaling. \mb{
By the known spectral results of $\td$ and the theory of trace class operators, we characterize the resonances in the concerned regime as zeros of an analytic function (cf.\,Proposition \ref{prop:gloana}). Therefore, one may regard subwavelength resonances $\ww(\tau)$ as a multivalued function (curve) with respect to $\tau$. Our first aim is to study the analytical properties of 
the subwavelength resonances $\ww(\tau)$, in particular, 
we will establish their a priori estimates and their existence, 
and then derive their asymptotic behaviors as $\tau \to \infty$.}  To do so, we first note that the standard asymptotic analysis was performed in the scalar case \cite{meklachi2018asymptotic,ammari2019subwavelength} to determine the limiting eigenvalue problem and further calculate the quasi--static resonances and their first--order corrections. However, it is highly nontrivial to extend such analysis to Maxwell's equations. 
The main difficulty comes from the fact that the leading--order operator of $\td$ has an infinite--dimensional kernel consisting of magnetostatic fields (cf.\,Lemma \ref{lem:tbz1}). To cope with this technical issue,
we shall utilize the Helmholtz decomposition 
for divergence--free vector fields to reformulate the considered operator--valued analytic function as a spectrally equivalent analytic family of operator matrices  (cf.\,\eqref{eigen:target0}). Then we perform the asymptotic analysis to 
derive the a priori estimates for the subwavelength resonances and the corresponding resonant modes in Theorem \ref{thm:priest0}, which shows that 
the subwavelength resonances $\ww$ are near the points $\sqrt{\lad_i \tau}^{\sss -1}$ and the resonant modes are almost transverse electric in the quasi--static and high contrast regime
and hence present the features of the magnetostatic fields. \mb{The a priori estimate \eqref{eq:prioriestfre1} suggests us to consider the scaled resonance $\h{\ww} = \sqrt{\tau} \ww$. With this new variable $\h{\ww}$, we are able to apply the  Gohberg--Sigal theory (i.e., generalized argument principle and generalized Rouch\'{e}'s theorem, see \cite{gohberg1990classes,gohberg1971operator}) to ensure the existence of dielectric subwavelength resonances (cf.\,Theorem \ref{thm:existissue}). Then by use of the theories of symmetric polynomials and algebraic functions, we prove that the multivalued functions $\ww(\tau)$ are algebraic functions with possible algebraic singularities at $\tau = \infty$ (cf.\,Theorem \ref{thm:splitting}).} \mb{We also discuss the physical role that the characteristic size $\d$ plays at the end of Section \ref{sec:resoana}. Briefly speaking, in order for the subwavelength resonances to be excited 
under visible light illumination, we need the condition $\tau \sim \d^2$, 
that is, the wavelength inside the particles is comparable to the size of the particles; see Appendix \ref{app:S_1}  for a set of experimental parameters.}  


\mb{Our second aim is to understand the structure of the scattering amplitude. To better connect the mathematical results with the physical phenomena, we restrict ourselves to the regime $\tau = \d^{-2}$ where the strong magnetic responses of the high contrast nanoparticles are experimentally observed \cite{evlyukhin2012demonstration,kuznetsov2016optically,garcia2011strong}, but we emphasize that our analysis does not essentially depend on this condition.} For our aim, we propose a new approach based on the Helmholtz decomposition to systematically investigate the multipolar response of nanoparticles of arbitrary shape, which is closely related to the classical Cartesian multipole moment expansion \cite{jackson1999classical}; see Remarks \ref{rem:cntnewandclas} and \ref{rem:compare_non}. By the new class of EM 
multipole moments in Definition \ref{def:EMmoment}
and the asymptotic analysis for the resolvent in Proposition \ref{prop:ledsolmain}, we give the quasi--static approximation of the scattering amplitude in Theorem \ref{thm:mainmultimom_3} which uniformly holds with respect to $\d$ and $\ww$. Then, under certain conditions, we  rigorously show in Corollary \ref{thm:mainmultimom_2} that 
when the incident frequency approximates the resonance, the scattering amplitude can be approximated by the resonant magnetic dipole radiation. \mb{This should be compared with 
the known result that the small particles typically behave 
like an electric dipole that scatters light symmetrically \cite{evlyukhin2016optical,add8,ammari2016surface}.} 
We shall also present an explicit calculation for the case of a single spherical nanoparticle, which generalizes some results in \cite{add1}
and also serves to validate our general results for particles of arbitrary shape. It is interesting to note that for the spherical domain, the quasi--static resonances can be characterized by the zeros of spherical Bessel functions (cf.\,Theorem \ref{thm:eigvalspher}).

The work is organized as follows. In Section \ref{sec:prosetanddef}, we formulate the EM scattering problem by nanoparticles with high refractive indices and 
introduce the scattering resolvent and 
the dielectric subwavelength resonances. Sections \ref{sec:resoana} and \ref{sec:multirad} are devoted to the resonance analysis and the far--field analysis for the scattering problem, respectively. The explicit calculations for the spherical nanoparticle are provided in Section \ref{sec:spernano} which helps validate our general results and formulas. We end this work with some concluding remarks and discussions. 


\mb{We close this section with a list of notation and conventions.}

\begin{itemize}
\mbb{
\item $d \sigma$ is the surface integral element. 
\item $\S$ is the unit sphere in $\R^3$, and $\S_r$ is the sphere in $\R^3$ with radius $r$ centered at the origin. For a vector $x \in \R^3$, we denote its polar form by $(|x|,\h{x})$ with $\h{x}:=x/|x| \in \S$. 
\item $B(x,r)$ is a ball in $\R^d$ or $\C$ with radius $r$ centered at the point $x$. The ambient space is clear from the context.  
\item  
For two vectors $u \in \R^n$ and $v \in \R^m$, $u \otimes v$ is a $n \times m$ matrix given by $(u \otimes v)_{ij} = u_i v_j$. 
\item For a (block) matrix $A$, we write $A_{i,j}$ for its $(i.j)$th (block) element. 
\item We use $(\dd,\dd)_H$ and $\oplus$ for the inner product of a Hilbert space $H$ and the orthogonal direct sum, respectively. We identify the direct sum with the product of its factors.
\item Let $f$ and $g$ be in a normed vector space and depend on a parameter $\epsilon$. We write $f(\epsilon) \backsimeq g(\epsilon)$, $\epsilon \to \epsilon_0$ if the relative error between $f$ and $g$ tends to zero:
\begin{align*} 
\lim_{\epsilon \to \epsilon_0} \frac{\norm{f(\epsilon) - g(\epsilon))}}{\norm{g(\epsilon)}} = 0\,.
\end{align*}
\item For $f$ in a normed vector space, we write $f = O(\epsilon)$ if $\norm{f}\le C |\epsilon|$ for some constant $C > 0$ independent of $\epsilon$, and $f = o(\epsilon)$ if $\norm{f}/|\epsilon| \to 0$ as $\epsilon \to 0$, and $f \sim \ep$ if $ C_1 \le \norm{f}/|\epsilon| \le C_2$ for $C_1, C_2 > 0$ independent of $\epsilon$. 
\item For real $a,b >0$, we write $a \ll b$ if $a \le C b$ for small enough $C$ independent of $a,b$, and $a \gg b$ if $a \ge C b$ for large enough $C$ independent of $a,b$.}
\mb{\item We use the bold typeface to indicate the spaces consisting of vector fields in $\R^3$. We shall need 
several important spaces in our subsequent studies: 
the space of $L^2$--integrable functions with compact support ${\bf L}^2_{{\rm comp}}(\R^3)$, 
the space of divergence--free vector fields ${\bf H}(\ddiv0, D) = \{\vp \in {\bf L}^2(D)\,;\ \ddiv \vp = 0\}$ and its subspace 
$${\bf H}_0(\ddiv0, D) = \{\vp \in {\bf H}(\ddiv0, D)\,;\ \n \dd \vp = 0\ \text{on}\ \p D\}\,,$$
the space of irrotational vector fields ${\bf H}(\ccurl0, D) = \{\vp \in {\bf L}^2(D)\,;\ \curl \vp = 0\}$, the space 
$$
\hbb = \{u \in H^{-1/2}(\p D)\,;\ \int_{\p D_i} u d \sigma = 0\,,\  D_i\ \text{is a connected component of}\ D\}\,,
$$
and the space ${\bf L}_{\rm T}^p(\S)$ of $p$--integrable tangential vector fields on $\S$. We would like to remark 
that the meaning of the notation $H^{\sss -1/2}_{00}(\p D)$ is completely different from the one used in the trace theory (cf.\cite{add14}).}
\mb{\item For an analytic family of linear operators $A(\lad)$ on an  open  set $V \subset \C^d$, $\lad$ is a characteristic value if $\ker A(\lad) \neq \{0\}$, and we call the elements of $\ker A(\lad)$ the eigenfunctions associated with $\lad$.}
\end{itemize}
 
 





\section{EM scattering resolvent and resonances}  \label{sec:prosetanddef}

In this section, we introduce the EM scattering problem by strongly coupled nanoparticles with high refractive indices in $\R^3$. We will mathematically define the scattering resolvent of the Maxwell operator associated with a dielectric obstacle, via the volume potential operators and the Lippmann--Schwinger equation. These concepts further allow us to introduce the dielectric subwavelength resonances, which is the main object of this work.

\mb{Without loss of generality, let $D$ be a (reference) bounded open set, containing the origin, with the smooth boundary $\p D$ and 
the exterior unit normal vector $\n$. We denote the union of all nanoparticles by 
\begin{align} \label{def:particlescaling}
    D_\d = \d D, 
\end{align}
where $0 < \d \ll 1$ is the characteristic size of the small particles. We also assume that both $D$ and $D^c := \R^3 \backslash \bar{D}$ may have multiple connected components; see Figure \ref{setup}.
\begin{figure}[!htbp]
 \centering
\includegraphics[clip,width=0.4\textwidth]{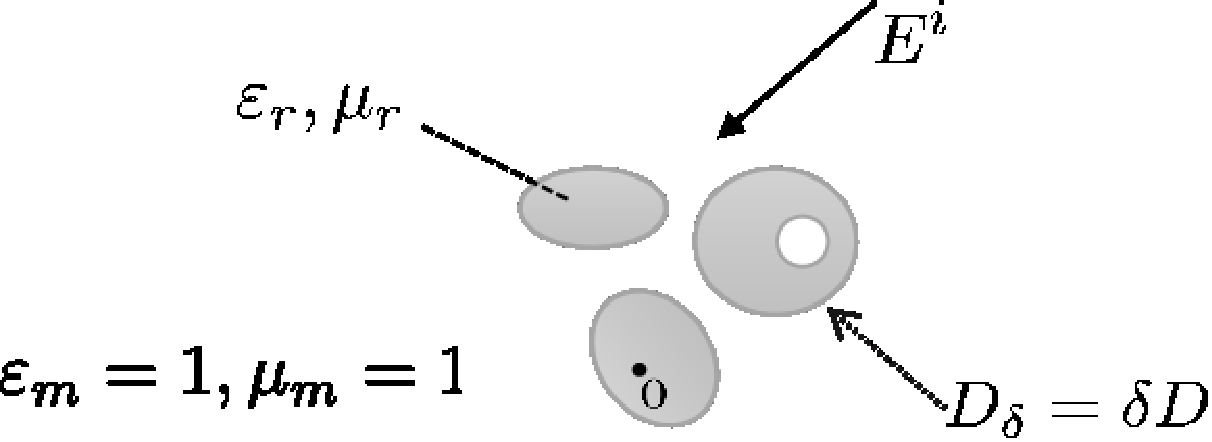}  
\caption{Setup for the scattering problem} \label{setup}
\end{figure}

}



By choosing the appropriate physical units, we let the electric permittivity $\ep_{{\rm m}}$
and the magnetic permeability $\mu_{{\rm m}}$ in the free space be \mb{the constant one}. The  physical properties of the nanoparticles are characterized by the relative electric permittivity $\varepsilon_r$ and the magnetic permeability $\mu_r$. To model high refractive indices,
we assume that the nanoparticles are 
non--magnetic, i.e., $\mu_r = 1$, and $\ep_r(x) \in L^\infty(\R^3)$ takes the form: 
\begin{align*}
    \ep_r = n_\d^2 = 1 + \tau \chi_{D_\d} \q \text{in}\ \R^3\,,
\end{align*}
where the contrast parameter $\tau$ satisfies 
\begin{equation} \label{asp:highcont}
  \mb{\tau \in S_{\tau_0} := \{\tau \in \C\backslash (-\infty, 0]\,;\ |\tau| \ge \tau_0\}\,,\q \tau_0 \gg 1\,.}
\end{equation}
Here $n_\d$ is the refractive index in $\R^3$ and $\chi_{D_\d}$ is the characteristic function of the set $D_\d$.
\mb{It is necessary to give some remarks on the above assumption. First, for a natural dielectric object, we typically have $\re \tau > 0$ and $\im \tau \ge 0$; see Figure
\ref{fig:parasili} for the possible values of $\ep_r$ of silicon. 
The assumption  \eqref{asp:highcont} is mainly for mathematical generality and also for ease of analysis. Second, the half--line $(-\infty,0]$ is removed so that $\sqrt{\tau}$ is a well--defined analytic function on $S_{\tau_0}$.}


Let $(E^i, H^i)$ be the incident plane wave: 
\begin{equation} \label{eq:indplawave}
    E^i = \ei e^{i\ww \di \dd x}, \quad H^i = \frac{1}{i\ww}\curl E^i = \di \t \ei e^{i \ww \di \dd x}\,,  \q \ww \neq 0\,,
\end{equation}
with the incident direction $\di \in \S$ and the polarization vector $\ei \in \S$ satisfying 
$\ei \dd \di = 0$. With these notions, the EM scattering problem, by nanoparticles with high refractive indices, is modelled by
\begin{equation}\label{eq:model}
\left\{ 
    \begin{array}{ll}
 \curl E = i \ww H &   \text{in} \ \R^3 \backslash \p D_\d\,, \\
   \curl H = - i \ww n_\d^2 E      &  \text{in} \ \R^3 \backslash \p D_\d\,, \\
   \left[\n \t E\right] = 0\,,\ \left[\n \t H\right] = 0 &  \text{on} \  \p D_\d \,,
    \end{array}
    \right. 
\end{equation}
where $[\dd] = \dd|_- - \dd|_+$ is the jump across the boundary 
and the subscripts $\pm$ denote the limits from outside and inside $D_\d$,  respectively. The system \eqref{eq:model} is complemented by the outgoing condition for the scattered wave $(E^s, H^s) := (E - E^i, H - H^i)$. For real $\ww > 0$, it is the well known Silver--M\"uller radiation condition: 
\begin{equation*} 
    \lim_{|x|\to\infty}|x|\dd\left(H^s \t \h{x} - E^s\right) = 0\,.
\end{equation*}
When $\ww$ is complex, one may specify the outgoing condition by the meromorphic continuation of the scattering resolvent, or, more simply, by reformulating \eqref{eq:model} as a Lippmann--Schwinger equation (cf.\cite{gopalakrishnan2008asymptotic, meklachi2018asymptotic}). \mb{This will become clear in the following exposition.}

We introduce the outgoing resolvent of the free (vector) Laplacian $- \Delta$: 
\begin{align*}
    \np^\ww[\vp]:= (- \Delta - \ww^2)^{-1}[\vp] = \int_{\R^3} g(x-y,\ww) \vp(y) dy\  :  \ {\bf L}^2_{{\rm comp}}(\R^3) \to {\bf H}^2_{{\rm loc}}(\R^3)\,,
\end{align*}
which is clearly an entire family of operators. Here, the Schwartz kernel of $\np^\ww$,
\begin{align} \label{def:freegreen}
    g(x,\ww) := \frac{e^{i\ww|x|}}{4 \pi |x|}\,, \q \ww \in \C\,,
\end{align}
is the (outgoing) Green's function of $-(\Delta + \ww^2)$ in $\R^3$. \mbb{Then the free resolvent of the Maxwell operator $\M_0: = \curl \curl$ can be defined by
\begin{align} \label{def:m0resol}
  (\M_0 - \ww^2)^{-1}[f] := \frac{1}{\ww^2} \TT^\ww[f]\,,\q  \ww \in \C \backslash \{0\}\,, 
\end{align}
where 
\begin{align} \label{def:tw}
  \TT^\ww[f]: = (\ww^2 + \nabla \ddiv) \np^\ww[\varphi]\  : \  {\bf L}_{{\rm comp}}^2 (\R^3) \to {\bf H}_{{\rm loc}}(\ccurl,\R^3)\,.
\end{align}
We note from the definitions \eqref{def:m0resol} and \eqref{def:tw} 
that $\TT^\ww$ is entire in $\ww$,  while $(\M_0 - \ww^2)^{-1}$ is only analytic on $\C \backslash \{0\}$ (one may also recall the closely related fact that $\M_0$ has an infinite--dimensional kernel \cite{birman19872}). For our subsequent use, we define volume potential operators:
\begin{align*}
    \kd: = \np^\ww|_{{\bf L}^2(D)}\,, \q     \td: = \TT^\ww|_{{\bf L}^2(D)}\,.
\end{align*}
\mb{By approximating ${\bf L}^2(D)$ by $C_{{\rm c}}^\infty(D)$ and using integration by parts, it is easy to verify} 
\begin{align} \label{auxpropty_1}
    \ddiv \td[E] = - \ddiv E \, : \, {\bf L}^2 (D) \to {\bf H}^{-1}(D)\,.
\end{align}
Then, it follows that 
\begin{align} \label{def:electricpotential}
   \td\, : \, {\bf H}(\ddiv0,D) \to {\bf H}(\ddiv0,D)\,.
\end{align}
$\np_{D_\d}^\ww$ and $\TT_{D_\d}^{\ww}$, $0 < \d \ll 1$, can be defined in the same way.}

For $\ww \neq 0$, eliminating the magnetic field $H$ from the system \eqref{eq:model}, we obtain the second--order equation for the scattered electric field $E^s$: 
\begin{align} \label{eq:model2}
    \curl \curl E^s - \ww^2 E^s = \ww^2 \tau \chi_{D_\d} E\,.
\end{align}
Then, using $\TT_{D_\d}^\ww$ defined above, we readily write 
the Lippmann--Schwinger equation for \eqref{eq:model2}: 
\begin{equation}
    E^s(x) = E(x)-E^i(x) = \tau \TT_{D_\d}^\ww[E](x)\,, \quad x \in \R^3\,,
\end{equation}
which directly gives us the electric field inside $D_\d$ \mb{(if $1 - \tau \td$ is invertible)}:
\begin{equation}  \label{eq:liprep_1}
    E = (1 - \tau \TT_{D_\d}^\ww)^{-1}[E^i] \q \text{in}\  D_\d\,,
\end{equation}
and the scattered field outside $D_\d$:
\begin{equation} \label{eq:liprep_2}
     E^s = \tau \TT_{D_\d}^\ww [E] \q \text{in}\   \R^3 \backslash \bar{D_\d}\,.
\end{equation}

\mbb{
We next define the scattering  resolvent for the problem \eqref{eq:model2} (equivalently, for \eqref{eq:model}). For this, we consider the general form of \eqref{eq:model2}: 
\begin{equation} \label{mod:eq} 
\left\{ 
\begin{aligned}
   &\curl  \curl E - \ww^2 n_\d^2 E = f \q \text{in} \ \R^3\,,   \\
     &  E \ \text{satisfies the outgoing condition\,,}
\end{aligned} \right.
\end{equation}    
where $f \in {\bf L}^2_{{\rm comp}}(\R^3)$ is the given source. In the case of incident plane wave, it is clear that $f = \ww^2 \tau \chi_{D_\d} E^i$.  By the Lax--Milgram theorem, the equation \eqref{mod:eq} is uniquely solvable in ${\bf H}(\ccurl,\R^3)$ for $\im \ww > 0$, $\tau > 0$, and $f \in {\bf L}^2(\R^3)$. In this case, the resolvent is defined as the solution operator of \eqref{mod:eq}:
\begin{align*}
    E = (\M_0 - \ww^2 n_\d^2)^{-1}[f] \, :\,  {\bf L}^2(\R^3) \to {\bf H}(\ccurl,\R^3)\,.
\end{align*}
Moreover, since $g(x,\ww)$ is exponentially decaying for $\im \ww > 0$, $\TT^\ww$
defined in \eqref{def:tw} can be extended to ${\bf L}^2(\R^3)$. Then the Lippmann--Schwinger representation for the problem \eqref{mod:eq} follows: 
\begin{align} \label{auxeq_1}
    E = \tau \TT_{D_\d}^\ww [E] + \ww^{-2}\TT^\ww[f] \in {\bf H}(\ccurl,\R^3)\,, \q \im \ww > 0 \,, \ \tau > 0 \,.
\end{align}
A direct reformulation of \eqref{auxeq_1} gives
\begin{align} \label{eq:lseres}
   E = (\M_0 - \ww^2 n_\d^2)^{-1}[f] :=  \TT_{D_\d}^{\ww} \w{\A}_\d(\tau,\ww)^{-1} \big[\ww^{-2} \chi_{D_\d} \TT^\ww[f]\big] + \ww^{-2}\TT^\ww[f]\,,
\end{align}
with 
\begin{align} \label{eq:tautd}
    \w{\A}_\d(\tau,\ww) := \tau^{-1} - \TT_{D_\d}^{\ww}  \, :\,  {\bf L}^2(D_\d) \to {\bf H}(\ccurl,D_\d)\,.
\end{align}

To meromorphically extend
$(\M_0 - \ww^2 n_\d^2)^{-1}$ with respect to $(\ww,\tau)$ from the regime: $\im \ww > 0$, $\tau > 0$, we recall that the essential spectrum of $\TT_{D_\d}^\ww$, for any $\ww \in \C$, \mb{is given by (cf.\cite{costabel2012essential})}
\begin{equation*}
    \sigma_{ess}(\TT_{D_\d}^{\ww}) = \{-1,-\frac{1}{2},0\}\,, 
\end{equation*} 
which implies that, for a given contrast $\tau$ with $\tau^{-1} \in \C \backslash  \sigma_{ess}(\TT_{D_\d}^{\ww})$, $\w{\A}_\d(\tau, \ww)$ is a Fredholm operator. Hence, $\w{\A}_\d(\tau, \ww)$ is an analytic family of Fredholm operators on ${\bf L}^2(D_\d)$ for $(\tau,\ww) \in \C \backslash \{-1,-2\} \t \C$. We observe  from Lemma \ref{lem:tbz1} below that  $\w{\A}_\d(\tau, 0) = \tau^{-1} -  \TT^0_{D_\d}$ is invertible for $\tau > 0$. As a consequence of the analytic Fredholm  theorem \cite{gohberg1990classes} and its  multidimensional version  \cite{kuchment2012floquet,stessin2011analyticity,MichaelTaylor2010s}, we have 
the following result. 

\begin{lemma} \label{thm:dsrdiscrete} $ $ 
\begin{enumerate}
    \item For a given contrast $\tau \in \C \backslash \{ -1, -2\}$, the operator--valued function $\w{\A}_{\d}(\tau,\dd)^{-1}$ can be extended to a meromorphic function on $\C$ with a discrete set of poles given by the characteristic values of $\w{\A}_\d(\tau,\dd)$.
Moreover, there hold
\begin{align*}
     {\rm ind}\,\w{\A}_\d(\tau,\ww) = 0\q \text{and} \q \dim \ker \w{\A}_\d(\tau, \ww) < +\infty\,, \q \text{for}\ \ww \in \C\,.
\end{align*}
\item The set
$$   
\{(\tau,\ww) \in (\C \backslash \{-1,-2\}) \t \C\,; \  \w{\A}_\d(\tau,\ww)\ \text{is not invertible}\}
$$
is either empty, or an analytic subset of $\C^2$ of codimension 1. 
\end{enumerate}
\end{lemma}

As a corollary of the above lemma, we see from \eqref{eq:lseres} that, for any $\tau \notin \{-1,-2\}$, $(\M_0 -\ww^2 n_\d^2)^{-1}$ has a meromorphic continuation to $\C \backslash \{0\}$: 
\begin{align} \label{def:resolmep}
     (\M_0 - \ww^2 n_\d^2)^{-1} = & \ww^{-2} \left\{ \TT_{D_\d}^\ww \w{\A}_\d(\tau, \ww)^{-1} (\chi_{D_\d}\TT^\ww) + \TT^\ww\right\} \notag \\&  : \,  {\bf L}_{{\rm comp}}^2(\R^3) \to {\bf H}_{{\rm loc}}(\ccurl,\R^3) \,, \q \ww \in \C \backslash \{0\}\,.
\end{align}
\begin{definition}
For $\tau \notin \{-1,-2\}$, we call $(\M_0 - \ww^2 n_\d^2)^{-1}$ defined in \eqref{def:resolmep} the scattering resolvent of the Maxwell operator with the refractive index $n_\d$, and call its poles the EM scattering resonances.
\end{definition}

\mb{Before proceeding further, we note from \cite[Lemma\,3.1]{ammari2020superli} that 
$\ker(\TT_{D_\d}^\ww) =\ker (\TT_{D_\d}^{- \bar{\ww}}) = 0$, and hence there holds $\overline{\ran(\chi_{D_\d}\TT^\ww)} = \ker (\TT_{D_\d}^{- \bar{\ww}})^{\perp} = {\bf L}^2(D_\d)$.} It then follows that $(\M_0 - \ww^2 n_\d^2)^{-1}$, 
as a function of $\ww$, has the same poles as $\w{\A}_\d(\tau,\dd)^{-1}$. 
We also observe that $\ww$ is a pole of $\w{\A}_\d(\tau, \dd)^{-1}$ if and only if the eigenvalue problem:  
\begin{equation*}   
 (1 - \tau \TT_{D_\d}^{\ww})[E] = 0
\end{equation*}
has nontrivial solutions $E \in {\bf L}^2(D_\d)$, which, \mb{by \eqref{auxpropty_1} and $\tau \neq -1$},
must be divergence--free, i.e., $E \in {\bf H}(\ddiv0, D_\d)$. Moreover, the following scaling property: 
\begin{equation*}
    \TT_{D_\d}^{\ww}\big[E\big](x) = \TT_{D_\d}^{\ww}\big[E\big]\big(\d \w{x}\big) = \mathcal{T}_D^{\d\ww}\big[\w{E}\big]\big(\w{x}\big)\,, 
\end{equation*}
can be easily checked by change of variables: $x = \d \w{x}$ and $\w{E}\big(\w{x}\big) := E\big(\d \w{x}\big)$. Therefore, we define (cf.\,\eqref{def:electricpotential})
\begin{align} \label{redef:Ad}
      \A_\d(\tau,\ww) := \tau^{-1} -  \TT_{D}^{\d \ww}  \, :\, {\bf H}(\ddiv0, D) \to {\bf H}(\ddiv0, D)\,,
\end{align}
and have the following result.



\begin{proposition} \label{coro:discrete}
For $\tau \notin \{-1,-2\}$, the EM scattering resonances are given by the poles of $\A_\d(\tau,\dd)^{-1}$. 
\end{proposition}}

In this work, we are interested in the properties of the resolvent $\A_\d(\tau,\ww)^{-1}$ and its poles in the 
high contrast regime \eqref{asp:highcont} and the quasi--static regime (i.e., the wavelength in the free space is much \mb{larger than} the characteristic size of the nanoparticles):
\begin{equation} \label{eq:subwave}
S_\d := \left\{\ww\in \C\,;\ |\ww|\ll 2\pi\d^{-1}\right\},
\end{equation}
which are fundamental for illustrating the strong magnetic radiation by nanoparticles with high refractive indices \cite{kuznetsov2016optically}. For ease of exposition, we introduce the following definition.

\begin{definition}  \label{def:drs_1}
\mbb{For a given $\tau \in S_{\tau_0}$, we define the set of dielectric subwavelength resonances by} 
\begin{align*}
    \mbb{\Omega(\d,\tau) := \{\ww \in S_\d\,; \ \ww \ \text{is a pole of}\ \A_\d(\tau,\dd)^{-1}\}\,.}
\end{align*}
\end{definition}

\section{Dielectric subwavelength resonances}  \label{sec:resoana} 

\mb{One of the main aims of this work is to understand the structure of the set $\Omega(\d,\tau)$ of subwavelength resonances. By definitions, one may readily see that $\ww$ is a pole of $\A_\d(\tau,\dd)^{-1}$ if and only if $\d \ww$ is a pole of $\A(\tau,\dd)^{-1}$, where 
\begin{align} \label{def:adt}
    \A(\tau,\ww) := \A_\d(\tau,\ww)|_{\d = 1} = \tau^{-1} - \td\,.
\end{align}
Therefore, to analyze $\Omega(\d,\tau)$, it suffices to consider the case $\d = 1$.  

This section is devoted to the analysis of the set $\Omega(1,\tau)$. 
We first provide some necessary preliminaries in Section \ref{subsec:asypre}, and then derive the a priori estimates for dielectric subwavelength resonances 
and the associated eigenfunctions in Section \ref{subsec:priori}. 
We will address in Section \ref{sec:resoexit} the existence of subwavelength resonances, 
and consider their asymptotic behaviors as the contrast $\tau$ tends to infinity. At the end of this section, we discuss the role that the parameter $\d$ plays physically and how to translate the obtained results to the case $0 < \d \ll 1$ and the set $\Omega(\d,\tau)$.} 

\mbb{
Before we proceed to the next section, we consider the set $\Omega$:
\begin{align} \label{eq:targetdone}
\Omega := \{(\tau,\ww) \in (\C \backslash \{-1,-2\}) \t \C\,;\  \A(\tau,\ww) \ \text{is not invertible}\}\,.
\end{align}
Similarly to Lemma \ref{thm:dsrdiscrete}, $\Omega$ is an analytic set of codimension $1$, and hence is locally given by the zero set of an analytic function of variables $\ww$ and $\tau$. We now claim that the analytic function can be globally defined. 

\begin{proposition} \label{prop:gloana}
There exists an analytic function $f(\tau,\ww)$ on $(\C \backslash (-\infty, -1]) \t \C$ such that 
\begin{align*} 
    \Omega \bigcap ((\C \backslash (-\infty, -1]) \t \C) = \{(\tau,\ww)\,; \ f(\tau,\ww) = 0\}.
\end{align*}
\end{proposition}

\begin{proof}
Note that $\tau^{-1} - \TT_D$, $\tau \in \C \backslash (-\infty, -1]$, is invertible (see Lemma \ref{lem:tbz1} below). We write 
\begin{align*}
    \A(\tau,\ww) =  (\tau^{-1} - \TT_D) (1 + (\tau^{-1} - \TT_D)^{-1}  (\TT_D - \TT_D^{\ww})),  
\end{align*}
and see that $\Omega \cap ((\C \backslash (-\infty, -1]) \t \C)$ can also be given by 
\begin{align*}
 \left\{ (\tau,\ww) \in (\C \backslash (-\infty, -1]) \t \C\,;\ \W(\tau,\ww) := 1 + (\tau^{-1} - \TT_D)^{-1}  (\TT_D - \TT_D^{ \ww}) \ \text{is not invertible}\right\}.
\end{align*}
Since $\TT_D - \TT_D^{\ww}$ has a smooth integral kernel, we conclude that it is a trace class operator (cf.\cite[Chapter 30]{pdlax}). Recall the facts that the trace class is a two--sided ideal in the space of bounded linear operators; for a trace class operator $A$, the Fredholm determinant of $1 + A$, $\det (1 + A)$, is well--defined; and $I + A$ is singular if and only if $\det (I + A) = 0$; see \cite{pdlax,gohberg1990classes}. Therefore, we can define 
$\det (\W(\tau,\ww))$, which
is analytic by \cite[Lemma 2]{stessin2011analyticity} and gives the desired function.   
\end{proof}

Let $\Omega(\tau) := \{\ww \in \C\,;\ (\tau,\ww) \in \Omega\}$ be the section of $\Omega$ at $\tau \in \C \backslash \{-1,-2\}$. By Definition \ref{def:drs_1} and Proposition \ref{prop:gloana} above, it is clear that the set of subwavelength resonances $\Omega(1,\tau)$ are given by the elements of $\Omega(\tau)$ near the origin for $\tau \in S_{\tau_0}$, which are discrete with finite cardinality (could be empty) and 
can be characterized by the zeros of $f(\tau,\dd)$. In view of this, it may be convenient and illuminating to regard $\Omega(1,\tau)$ as a multivalued function.}

\subsection{Asymptotics and preliminaries} \label{subsec:asypre}
Let us first prepare some analysis tools and recall some useful known spectral results. To start with, we give the asymptotics of the operators $\np_D^{\ww}$ and $\TT_D^{\ww}$, which follows from the Taylor expansion of the Green's function $g(x, \ww)$:
\begin{equation} \label{eq:expgreengn}
    g(x, \ww) = \sum_{n = 0}^\infty \ww^n g_n(x) \quad \text{with}\q g_n(x) := \frac{i^n|x|^{n-1}}{4 \pi n!}\,.
\end{equation}
\begin{lemma} \label{lem:asytbw}
 For the bounded linear operators $\np_D^{\ww}$ on ${\bf L}^2(D)$ and $\TT_D^{\ww}$ on $\bf{H}(\ddiv0,D)$, $\ww \in \C$, it holds that
\begin{align*}
   \np_D^{\ww} = \sum_{n = 0}^\infty \ww^n \kb{n} \,, \quad \TT_D^{\ww} = \sum_{n = 0}^\infty  \ww^n \tb{n}\,,
\end{align*}
where the series converge in the operator norm
for each $\ww$.
The operators $\kb{n}$, $n \ge 0$, are defined by 
\begin{equation*}
    \kb{n}[\vp] := \int_D g_n(x-y) \vp(y) d y\,,
\end{equation*}
with $g_n$ being given in \eqref{eq:expgreengn}. The operators $\tb{n}, n \ge 0$, are defined by 
\begin{align}
    &\tb{0}[\vp]  := \na \ddiv \int_D \frac{1}{4 \pi |x-y|}\vp(y)dy\,, \quad \tb{1}[\vp] := i  \na \ddiv \frac{1}{4 \pi}\int_D \vp(y)dy = 0\,, \notag \\
    & \tb{n}[\vp] := \kb{n-2}[\vp] + \na \ddiv \kb{n}[\vp]\,, \quad n \ge 2\,. \label{eq:operexp}
\end{align}
\end{lemma}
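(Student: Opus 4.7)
The plan is to substitute the pointwise Taylor expansion \eqref{eq:expgreengn} of $g(x,\d\ww)$ into the definitions of $\np_B^{\d\ww}$ and $\td$, identify the resulting operators term by term, and then verify operator-norm convergence on $L^2(B,\R^3)$; analyticity in $\d\ww$ follows immediately once each object is represented as an operator-valued entire power series. For the scalar potential, I would exploit the fact that $|x-y|\le d:=\mathrm{diam}(B)$ on $B\times B$, so $|g_n(x-y)|\le d^{n-1}/(4\pi n!)$ for $n\ge 1$, while $g_0(x-y)=1/(4\pi|x-y|)$ is only weakly singular. A Schur-test estimate (or an $L^\infty$-in-one-variable bound combined with $|B|<\infty$) gives $\|\kb{n}\|_{L^2\to L^2}\le C d^{n-1}/n!$ uniformly in $n\ge 1$, so the series $\sum_n(\d\ww)^n\kb{n}$ converges absolutely in the operator norm for every $\d\ww\in\C$, and dominated convergence applied to the Taylor remainder of $g(\cdot,\d\ww)$ identifies its sum with $\np_B^{\d\ww}$.

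For $\td=((\d\ww)^2+\na\ddiv)\np_B^{\d\ww}$, I would substitute the series for $\np_B^{\d\ww}$ and regroup by powers of $\d\ww$ to obtain
\begin{align*}
\td[\vp] = \sum_{n\ge 2}(\d\ww)^n\kb{n-2}[\vp] + \sum_{n\ge 0}(\d\ww)^n\na\ddiv\kb{n}[\vp]\,,
\end{align*}
which, after noting that $\kb{1}[\vp]=(i/4\pi)\int_B\vp\,dy$ is constant in $x$ and hence $\na\ddiv\kb{1}=0$, reproduces \eqref{eq:operexp} with $\tb{0}=\na\ddiv\kb{0}$, $\tb{1}=0$, and $\tb{n}=\kb{n-2}+\na\ddiv\kb{n}$ for $n\ge 2$.

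The main obstacle is to justify the commutation of $\na\ddiv$ with the series, i.e.\ operator-norm convergence of $\sum_n(\d\ww)^n\na\ddiv\kb{n}$ on $L^2(B,\R^3)$. The delicate piece is $n=0$: $\tb{0}=\na\ddiv\kb{0}$ is the Hessian of the Newton potential, a Calder\'on--Zygmund-type operator which is bounded on $L^2$ but supplies no small prefactor to compensate a power of $\d\ww$. The $n=1$ term vanishes as above, and for $n\ge 2$ the smoother kernel $g_n(x)=i^n|x|^{n-1}/(4\pi n!)$ has locally integrable second derivatives with factorially decaying size in $n$; hence a Young convolution estimate (together with $|B|<\infty$) provides an operator-norm bound of the form $\|\na\ddiv\kb{n}\|_{L^2\to L^2}\lesssim d^{n-1}/(n-2)!$, which is more than summable after multiplication by $(\d\ww)^n$ on every compact subset of $\C$. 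With the termwise action of $\na\ddiv$ thereby justified (e.g.\ by testing against $C_c^\infty(\R^3;\R^3)$ and passing to the limit in the distributional identity), both series converge in the operator norm on all of $\C$, simultaneously establishing the expansions and the claimed entireness of $\np_B^{\d\ww}$ and $\td$ in $\d\ww$.
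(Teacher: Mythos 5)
Your argument is correct and coincides with the approach the paper intends (the paper states the lemma without a written proof, treating it as an immediate consequence of substituting the Taylor expansion \eqref{eq:expgreengn} of $g(x,\delta\omega)$ into the definitions of $\np_B^{\delta\omega}$ and $\TT_B^{\delta\omega}$). Your supplying of the kernel bounds, the Calder\'{o}n--Zygmund boundedness of $\tb{0}=\na\ddiv\kb{0}$, the observation that $\kb{1}[\vp]$ is constant so $\tb{1}=0$, and the justification for commuting $\na\ddiv$ with the norm-convergent series fills in precisely the details the paper leaves implicit.
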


In what follows, we shall denote $\tb{0} = \TT_D^0$ and $\kb{0} = \np_D^0$ by $\tbb$ and $\np_D$, respectively, for simplicity of notation. By Lemma \ref{lem:asytbw} above, 
we consider the asymptotic expansion of the operator $\tau^{-1} - \TT_D^{\ww}$: 
\begin{equation} \label{eq:asymeig}
    \tau^{-1} - \TT_D^{\ww} = \tau^{-1} - \tbb - \ww^2 \tb{2} - \ww^3 \tb{3} + O(\ww^4) \q \text{as} \ \ww \to 0\,.
\end{equation} 
As it was stated in the introduction, a major technical difficulty we will meet when using \eqref{eq:asymeig} to find the dependence of $\ww$ on $\tau$ is that the leading--order operator $\TT_D$ of $\tbb^{\ww}$ has an infinite--dimensional kernel (cf.\,Lemma \ref{lem:tbz1}), so that the standard perturbation argument, as well as the Gohberg--Sigal theory, can not be easily applied. To remedy this, we shall use the $L^2$--Helmholtz decomposition to transform the operator $\tbb^{\ww}$ into an operator matrix and then conduct the asymptotic analysis. This allows us to clearly see the asymptotic properties of $\tbb^{\ww}$ on different components of divergence--free vector fields in terms of $\tau$. \mb{We next briefly recall the Helmholtz decomposition; see \cite{amrouche1998vector,monk2003finite,girault2012finite} for more detailed discussion.}

\begin{lemma} \label{prop:helmddifree}
${\bf H}(\ddiv0,D)$ has the $L^2$--orthogonal decomposition:
   \begin{equation}\label{eq:helmddifree}
    {\bf H}(\ddiv0,D) = \hzz \oplus W = \ccurl X_N^0(D)\oplus K_T(D) \oplus W\,,
   \end{equation}
   where $W$ is the space consisting of the gradients of harmonic $H^1$--functions, the space $X_N^0(D)$ is defined by  
 \begin{equation*}
    X_N^0(D) := {\bf H}_0(\ccurl,D) \bigcap {\bf H}(\ddiv0,D)\,,
\end{equation*}
and $K_T(D)$ is the tangential cohomology space defined by
\begin{equation*}
    K_T(D) := \{u \in {\bf H}_0(\ddiv,D)\,;\, \curl u = 0,\ \ddiv u = 0\ \text{in} \ D\}\,,
\end{equation*}
with its dimension equal to the genus of $D$. A field $\phi \in \ccurl X_N^0(D)$ can uniquely determine a potential $A$ in the quotient space $\w{X}_N^0(D) : =  X_N^0(D)/K_N(D)$ such that $\phi = \curl A$, where $K_N(D): =\{u \in {\bf H}_0(\ccurl0,D)\,;\ \ddiv u = 0\ \text{in} \ D\}$ is the normal cohomology space.
\end{lemma}

We now introduce the Neumann--Poincar\'{e} operator $\np_{\p D}^*: H^{-1/2}(\p D) \to H^{-1/2}(\p D)$ and the single layer potential $\ss_{\p D} : H^{-1/2}(\p D) \to H^{1/2}(\p D)$ by 
\begin{equation*}
    \np_{\p D}^*[\vp] := \int_{\p D} \frac{\p}{\p \n_x} \frac{1}{4 \pi |x-y|}\vp(y) d\sigma(y)\q \text{{\rm and}}\q  \ss_{\p D}[\vp] := \int_{\p D}\frac{1}{4 \pi |x-y|}\vp(y) d\sigma(y) \,,
\end{equation*}
respectively. \mb{We recall the normal trace formula for $\na \ss_{\p D}$ \cite{colton2013integral,verchota1984layer,khavinson2007poincare}}:
 \begin{equation} \label{eq:trasinpo} 
    \frac{\p}{\p \n}\ss_{\p D}[\vp]_{\pm} = (\mp \frac{1}{2} + \np_{\p D}^*)[\vp] \,,\q \vp \in H^{-1/2}(\p D)\,,
\end{equation}
\mb{and the following basic result about the spectrum of the Neumann--Poincar\'{e} operator; see \cite{khavinson2007poincare} and \cite[Chapter 3]{folland1995introduction}.}

\begin{lemma} \label{lem:tbz2}
$\np_{\p D}^*$ is compact with $\sigma(\np_{\p D}^*) \subset[-\frac{1}{2},\frac{1}{2}]$ and  
\begin{align*}
    \dim \ker\big(\frac{1}{2}+\np_{\p D}^*\big) = m\,,\q \dim \ker\big(-\frac{1}{2}+\np_{\p D}^*\big) = n\,,
\end{align*}
\mb{where $m$ and $n + 1$ are the numbers of the connected components of $D$ and $D^c$, respectively.} Moreover, it holds that $\hbb$ is an invariant subspace of $\np_{\p D}^*$, and $\frac{1}{2}+\np_{\p D}^*$ is invertible on $\hbb$.  
\end{lemma} 

We denote by $\gamma_n \vp = \n \dd \vp$ the normal trace mapping on the space ${\bf H}(\ddiv, D)$. Then the restriction of $\gamma_n$ on $W$, denoted by $\w{\gamma}_n = \gamma_n|_W$, has the following properties.  

\begin{lemma} \label{lem:resfornormal}
$\w{\gamma}_n$ is an isomorphism from $W$ to $\hbb$ with $\wg [\phi]$ having the representation:
\begin{align} \label{auxeq_2}
\wg [\phi] = \na u \q \text{with}\q u = \ss_{\p D}(\frac{1}{2}+\np_{\p D}^*)^{-1}[\phi]\,,   
\end{align}
where $u$ is a solution to the interior Neumann problem:
\begin{equation} \label{eq:neumann}
    \Delta u = 0 \ \text{in}\ D\,,\q 
\frac{\p u}{\p \n} = \phi \ \text{on}\ \p D\,.
\end{equation}
\end{lemma}

In our subsequent analysis, we also need the spectral theory for the  leading--order operator $\tbb$ of $\TT_D^{\ww}$, which was given in \cite[Theorem 3.2]{costabel2012essential}; see also the earlier work \cite{friedman1984spectral} where the variational method was used. By 
Lemmas \ref{lem:tbz2}, \ref{lem:resfornormal} and \cite[Theorem 3.2]{costabel2012essential}, we arrive at the next result. 

\begin{lemma} \label{lem:tbz1} $ $ 
\begin{enumerate}
\item $\tbb: {\bf H}(\ddiv0, D) \to {\bf H}(\ddiv0, D)$ is self--adjoint with $W$ and $\hzz$ being its invariant subspaces. In particular, it holds that 
\begin{align} \label{auxeq_3}
    \tbb|_{\hzz} = 0\,, \q  \tbb|_W[\dd] = - \wg (\frac{1}{2} + \np_{\p D}^*) \w{\gamma}_n[\dd]\,.
\end{align} 
\item  $\sigma(\tbb|_W) \subset [-1,0)$ and $\sigma(\tbb) = \sigma(\tbb|_W) \cup\{0\}$. Hence, we have
\begin{equation} \label{eq:normtbwinver}
    \norm{ (\tbb|_W)^{-1}} =  \frac{1}{\min_{\lad \in \sigma(\tbb|_W)}|\lad|}\,.    
\end{equation}
Furthermore, $-\frac{1}{2}$ is 
the only possible accumulation point of $\sigma(\tbb)$, and there holds 
$$ \dim \ker(1+\tbb) = \dim \ker(1+\tbb|_W) = n\,, $$
where $n$ is the same as Lemma \ref{lem:tbz2}. 
\end{enumerate}
\end{lemma}

As a corollary, we can observe, from \eqref{auxeq_2} and \eqref{auxeq_3},  
\begin{equation}  \label{eq:inversetd}
    (\tbb|_W)^{-1}[\vp] = -\na \ss_{\p D}(\frac{1}{2} + \np_{\p D}^*)^{-2}[\n \dd \vp]\,,\q \vp \in W\,.
\end{equation}

We would like to point out that the connection between  $\tbb$ and $\np_{\p D}^*$ in \eqref{auxeq_3} 
was exploited in \cite{ammari2018shape} to develop
 a new approach to analyze the plasmonic resonances. From the physical perspective, $W$ corresponds to the electric components of the EM fields  (see the next paragraph), and hence it is the strong coupling between the metallic nanoparticles and the electric component of the incident wave that induces plasmonic resonances \cite{sarid2010modern,evlyukhin2012demonstration}. 
%
%

By contrast, for the dielectric nanoparticles with high refractive indices, we claim that the resonances are excited by the magnetic components of the incident EM fields (see, e.g., Corollary \ref{thm:mainmultimom_2}).
To give a glimpse of this essential difference between the plasmonic resonances and the dielectric resonances,  we recall from the Helmholtz decomposition (i.e., Lemma \ref{prop:helmddifree}) that for an open set $D$ with genus zero, $K_T(D)$ is a null space and then a divergence--free vector field $\psi \in {\bf H}(\ddiv 0, D)$ has the decomposition:
\begin{equation} \label{eq:helmgezero}
    \psi = \curl A + \na \phi\,, \q A \in \w{X}_N^0(D)\,, ~~\phi \in H^1(D)\,.
\end{equation}
One should note that these two fields $\curl A$ and $\na \phi$ present the forms of the magnetostatic and electrostatic fields, respectively, where $A$ is the so--called magnetic vector potential and $\phi$ is the electric scalar potential. Thus, the orthogonal projections associated with the decomposition \eqref{eq:helmddifree}:
\begin{equation*}  
    \pd: \hz \to  \hzz\,, \quad \pw : \hz \to W\,,
\end{equation*}
may be regarded as the projections from a divergence--free vector field to its magnetic and electric components, respectively. We shall see soon in Theorem \ref{thm:priest0} that it is exactly the space ${\bf H}_0(\ddiv 0, D)$ (the kernel of $\tbb$) that is responsible for the excitation of dielectric resonances and supports the (limiting) resonant modes.

For the sake of simplicity, in what follows, we use the shorthand notations $\TT^{{\rm s,t}}$ for $\P_{\rm s} \TT \P_{\rm t}$, ${\rm s,t = d, w}$, for a bounded linear operator $\TT$ on ${\bf H}(\ddiv0,D)$ (for instance, we write $\TT^{{\rm w,d}}$ for $\pw \TT \pd$). By definition and integration by parts, we have the useful relation for 
$\tb{n} (n \ge 2)$:
\begin{equation} \label{rela:simply}
     (\phi,\tb{n}[\vp])_{{\bf L}^2(D)} = (\phi, (\kb{n-2} + \na \ddiv \kb{n})[\vp])_{{\bf L}^2(D)} = (\phi,\np_{D,n-2}[\vp])_{{\bf L}^2(D)}\,,
\end{equation}
where $\phi \in {\bf L}^2(D)$, $\vp \in \hzz$, or $\vp \in {\bf L}^2(D)$, $\phi \in \hzz$. This relation enables us to write
\begin{equation} \label{rela:simply_2}
    \kb{n-2}^{{\rm s,t}} = \P_s \tb{n} \P_t \q \text{for}\q {\rm s,t = d, w} \q \text{except the case}\q {\rm s,t = w}\,.
\end{equation}
In particular, when $n = 3$, by noting that 
\begin{equation*}
    \kb{1}[\vp](x) = \frac{i}{4\pi}\int_D \vp(y)dy\,,
\end{equation*}
and making use of the Green's formula:
\begin{equation} \label{eq:greenfor_1}
    \int_D \vp(y) dy = - \int_D y \ddiv \vp(y) dy + \int_{\p D}\vp(y) \dd \n y d\sigma(y)\,,\q \vp \in {\bf H}(\ddiv,D)\,,
\end{equation}
\eqref{rela:simply} and \eqref{rela:simply_2} give us 
\begin{equation} \label{rela:tbb3}
    \kb{1}^{{\rm s,t}} = \P_s \tb{3} \P_t = 0 \q \text{for}\q {\rm s,t = d, w} \q \text{except}\q {\rm s,t = w}\,.
\end{equation}

Finally, we consider the compact self--adjoint 
operator $\kbdd = \pd \tb{2} \pd$.


\begin{lemma} \label{lem:specresotb2}
     $\kbdd: \hzz \to \hzz$ is a compact self--adjoint operator with the  eigen--decomposition: 
    \begin{equation*}
        \kbdd [E] = \sum_{n=0}^\infty \lad_n (E, E_n)_{{\bf L}^2(D)} E_n\,,\quad E \in \hzz\,,
    \end{equation*}
    where $\lad_0 \ge \lad_1 \ge ... \ge \lad_n \cdots > 0 $ are the eigenvalues of $\kbdd$, counted with their (finite) multiplicities,  with $0$ being the only accumulation point, and $\{E_n\}_{n\ge 0}$ consists of a complete orthonormal basis in $\hzz$. 
\end{lemma}
It is worth mentioning that in Lemma \ref{lem:specresotb2} above, we have concluded that  $0 \notin \sigma_p(\kbdd)$ and $\kbdd$ is positive semidefinite, and thus the sequence of positive eigenvalues $\{\lad_n\}$ exists and is infinite with $\lad_n \to 0$ as $n \to \infty$. These facts do not directly follow 
from the standard spectral theory. The first claim $0 \notin \sigma_p(\kbdd)$ will be justified later in the proof of Proposition \ref{prop:ledingqua}, while the second claim that $\kbdd$ is positive semidefinite can be checked as follows:
\begin{align*}
    (u, \np_D [u])_{{\bf L}^2(D)} & = \lim_{\eta \to 0^+} (u,\np_D^{i\eta}[u])_{{\bf L}^2(D)} = \lim_{\eta \to 0^+} (\F[u\chi_D],  \frac{1}{4\pi^2|\xi|^2+ \eta^2}\F[u\chi_D])_{{\bf L}^2(\R^3)} \ge 0 \,,\q u \in {\bf L}^2(D)\,.
\end{align*}
Here $\F$ denotes the $L^2$--Fourier transform \cite{folland1995introduction}. 

\subsection{A priori estimates} \label{subsec:priori}

\mb{This subsection aims for the a priori estimates for $\ww \in \Omega(1,\tau)$ with $\tau \in S_{\tau_0}$ and $|\ww| \ll 1$.}
For ease of exposition and analysis, we shall first write the operator--valued function $\A(\tau, \ww)$ in \eqref{def:adt} as an analytic family of operator matrices by the Helmholtz decomposition. 
We start with a concept of global equivalence from \cite{gohberg1990classes}, which  
generalizes the concept of similarity for linear operators and applies to nonlinear eigenproblems.

\begin{definition}\label{def:gloequi}
    Suppose that $V$ is an open set in $\C^d$, and $T(\lad):X_1 \to Y_1$ and $S(\lad):X_2 \to Y_2$ are  bounded linear operators acting on Banach spaces for each $\lad \in V$. The operator--valued functions $T(\dd)$ and $S(\dd)$ are called globally equivalent on $V$ if there exist  operator--valued analytic functions $P: V \to \L(X_1,X_2)$ and $Q: V \to \L(Y_2,Y_1)$, which are invertible for each $\lad \in V$, such that
    \begin{equation*}
        T(\lad) = P(\lad) S(\lad) Q(\lad), \quad \lad \in V\,.
    \end{equation*}
\end{definition}
We readily see from the above definition that if two analytic operator functions are globally equivalent on some open set $V$, then these two analytic functions must \mb{have the same} characteristic values on $V$. For our purpose, we introduce the product (Hilbert) space corresponding to the orthogonal direct sum \eqref{eq:helmddifree}:
\begin{equation*}
     \xx:= {\bf H}_0(\ddiv0,D) \t W\,, 
\end{equation*}
which is isomorphic with ${\bf H}(\ddiv0,D)$ via the mapping $\vp \to [\pd \vp, \pw \vp]$. Then it easily follows that $\A(\tau, \ww)$, $\tau \in \C \backslash \{0\}$, $\ww \in \C$, is globally equivalent to 
\begin{equation} \label{eq:equiA:matrix}
\mm \pd\A(\tau, \ww)\pd & \pd\A(\tau, \ww)\pw\\
\pw\A(\tau, \ww)\pd & \pw\A(\tau, \ww)\pw \nn.
\end{equation}
By multiplying the first row of \eqref{eq:equiA:matrix} by the factor $\tau$, we obtain the following system: 
\begin{equation} \label{eigen:target0}
  \AA(\tau, \ww) := \mm 1 - \tau \pd \TT_D^{\ww}\pd & - \tau \pd \TT_D^{\ww}\pw \\ - \pw \TT_D^{ \ww}\pd  & \frac{1}{\tau} - \pw \TT_D^{\ww}\pw \nn,
\end{equation}
which is also globally equivalent to $\A(\tau,\ww)$ on $(\C \backslash \{0\}) \t \C$.  

We remark that the choice of the form of $\AA(\tau,\ww)$ in \eqref{eigen:target0} is not necessary for the analysis in this subsection, but it is 
the most appropriate one for the arguments 
in Section \ref{sec:resoexit} below,
 compared to the form in \eqref{eq:equiA:matrix} and other possible choices, \mb{since the leading--order term of $\AA(\tau,\ww)$ is a triangular system (after a change of variable); see \eqref{eq:expa0}.}
 We specify it here to make our representation consistent. 

We now state and prove the main results of this subsection.

\begin{theorem} \label{thm:priest0}
\mb{There exist $0 < c_0 \ll 1$ and $\tau_0 \gg 1$ such that for any $\tau \in S_{\tau_0}$ and $\ww \in \Omega(1,\tau)$ with $|\ww| \le c_0$, the estimate holds}
\begin{equation}   \label{eq:prioriestfre1}
   |\frac{1}{\tau \ww^2} - \lad_i| \lesssim |\ww| + |\tau|^{-1}\,,
\end{equation}
\mb{where $\lad_i$ is given by}
\begin{align} \label{eq:eigladi}
   \mb{\lad_i = \arg\min\big\{\big|(\tau \ww^2)^{-1} - \lad_j\big|\,;\ \lad_j \in \sigma_p(\kbdd)\big\}.}
\end{align}
Moreover, the associated eigenfunctions $E^\ww$ of $\A(\tau,\ww)$ admit the estimate:
\begin{equation*} 
     \norm{\pw E^\ww}_{{\bf L}^2(D)} = \norm{E^\ww - \pd E^\ww}_{{\bf L}^2(D)} \lesssim |\tau|^{-1} + |\ww|^2\,. 
\end{equation*}

\end{theorem}

\begin{proof}
By the definition of $\AA$ in \eqref{eigen:target0}, we consider an eigenpair $(\ww, E^\ww) \in \C \times {\bf H}(\ddiv0, D)$ satisfying 
\begin{equation} \label{eigen:target}
    \AA(\tau, \ww) \mm \pd E^\ww \\ \pw E^\ww \nn = 0\q \text{with}\ \big\|E^\ww\big\|_{{\bf L}^2(D)} = 1\,.
\end{equation}
 We first show that $\pw E^\ww$ is a higher--order term in terms of $\tau^{-1}$. To do so, we apply 
the asymptotics of $\TT_D^{\ww}$ in Lemma \ref{lem:asytbw} to the 
second equation in \eqref{eigen:target} and 
see the following estimate:
\begin{align} \label{eq:prf1:1}
  \big\|\tau^{-1} \pw E^\ww -  \tbb[\pw E^\ww] - \ww^2 \pw \tb{2}[E^\ww]\big\|_{{\bf L}^2(D)} \lesssim |\ww|^3 \,,
\end{align}
for $|\ww| \le c_0$, where $c_0 > 0$ is a constant to be specified later. Here we have also used $\tbb|_{\hzz} = 0$. Since $\tbb$ is invertible on $W$ and $\big\|E^\ww\big\|_{{\bf L}^2(D)} = 1$, by \eqref{eq:normtbwinver} and \eqref{eq:prf1:1}, a simple use of triangle inequality
gives 
\begin{align} \label{eq:bounterm}
   \big\|\pw E^\ww\big\|_{{\bf L}^2(D)} &\lesssim \frac{1}{\min_{\lad \in \sigma(\tbb|_W)}|\lad|} \big(\big\|\tau^{-1} \pw E^\ww\big\|_{{\bf L}^2(D)} + \big\|\ww^2 \pw \tb{2}[E^\ww]\big\|_{{\bf L}^2(D)} + |\ww|^3 \big) \notag \\
    & \lesssim |\tau|^{-1} + |\ww|^2 \,.
\end{align}


We next derive the a priori estimate \eqref{eq:prioriestfre1} for the resonant frequency $\ww$. Considering the first equation in \eqref{eigen:target}, by the asymptotics of $\tbb^{\ww}$ and $\tbb|_{\hzz} = 0$, we can derive 
\begin{equation} \label{eq:prf1:2}
    \big\|\pd E^\ww - \tau \ww^2 \kbdd [E^\ww] - \tau \ww^2 \kbdw [E^\ww]\big\|_{{\bf L}^2(D)} \lesssim |\tau \ww^3|\,.
\end{equation}
Then, by the estimate \eqref{eq:bounterm} for $\pw E^\ww$, we readily see from \eqref{eq:prf1:2}: 
\begin{equation} \label{eq:projeqker}
    |\tau \ww^2| \big\|\frac{1}{\tau \ww^2 } \pd E^\ww - \kbdd [E^\ww]\big\|_{{\bf L}^2(D)} \lesssim  |\ww|^2 + |\tau \ww^3| \,.
\end{equation}
By the eigen--decomposition of $\kbdd$ in Lemma \ref{lem:specresotb2} and dividing \eqref{eq:projeqker} by $|\tau  \ww^2|$, we obtain
\begin{equation} \label{auxeq_4}
\big\|\sum_{j = 0}^\infty (\frac{1}{\tau  \ww^2} - \lad_j)(E^\ww,E_j)_{{\bf L}^2(D)} E_j \big\|_{{\bf L}^2(D)} \lesssim |\tau|^{-1} + |\ww|\,,
\end{equation}
which implies, by use of Parseval's identity, 
\begin{align} \label{eq:prf1:3} 
    \inf_{j}\big|\frac{1}{\tau  \ww^2} - \lad_j\big|^2 \sum_{j = 0}^\infty \big|(E^\ww,E_j)_{{\bf L}^2(D)}\big|^2 & \le   \sum_{j = 0}^\infty \Big|(\frac{1}{\tau  \ww^2} -\lad_j)(E^\ww,E_j)_{{\bf L}^2(D)}\Big|^2 \notag \\
    & \lesssim |\ww|^2 + |\tau|^{-2}\,.
\end{align}
Recalling \eqref{eq:bounterm} which yields 
\begin{equation*}
   \big\|\pd E^\ww\big\|^2_{{\bf L}^2(D)} = 1 - \big\|\pw E^\ww\big\|^2_{{\bf L}^2(D)} = \sum_{j = 0}^\infty \big|(E^\ww,E_j)_{{\bf L}^2(D)}\big|^2 \to 1 \q \text{as}\ \tau \to \infty\,, \ \ww \to 0\,,
\end{equation*}
by \eqref{eq:prf1:3},  we can choose large enough $\tau_0$ and small enough $c_0$ such that 
\begin{equation} \label{eq:prf1:4}
    \inf_{j}\big|\frac{1}{\tau \ww^2} - \lad_j\big|^2 \lesssim |\ww|^2 + |\tau|^{-2}\,,
\end{equation}
for $|\ww| \le c_0$ and $|\tau| \ge \tau_0$.
Since the sequence $\{\lad_j\}$ has only $0$ as its accumulation point, the infimum in \eqref{eq:prf1:4} is  attainable at some $\lad_i$, and then the desired estimate holds:
\begin{equation} \label{auxeq_prio}
   \big|\frac{1}{\tau \ww^2} - \lad_i\big|  =  \inf_{j}\big|\frac{1}{\tau \ww^2} - \lad_j\big|\lesssim |\ww| + |\tau|^{-1}\,.
\end{equation}
\end{proof}


We should note that there might be many other eigenvalues  of $\kbdd$ that also satisfy \eqref{eq:prioriestfre1}.
With a little more effort, we can have sharper estimates for subwavelength resonances and the associated eigenfunctions (see Corollary \ref{thm:priest1} below).
For this, we make some observations. Let $c_0$ and $\tau_0$ be the constants given by Theorem \ref{thm:priest0}. For a fixed $r_0 > 0$ and resonances $\ww \in \Omega(1,\tau)$ with $\tau \in S_{\tau_0}$ and $|\ww|\le c_0$, we define
\begin{equation} \label{def:sigmaspec}
    \Sigma := \big\{\lad_j \in \sigma_p(\kbdd)\,;\ \big|\frac{1}{\tau \ww^2}-\lad_j\big|\ge r_0\big\}\,,
\end{equation}
and the corresponding $L^2$--projection on ${\bf H}(\ddiv0,D)$:
\begin{equation}
    \P_{\Sigma} [\dd] := \sum_{\lad_j \in \Sigma} (\dd,E_j)_{{\bf L}^2(D)} E_j\,.
\end{equation}
By \eqref{auxeq_4} and the definition of $\Sigma$, we find
\begin{align} \label{eq:esteigenfun0} 
    \big\|\P_{\Sigma} E^\ww \big\|_{{\bf L}^2(D)} \lesssim  |\ww| + |\tau|^{-1}\,,
\end{align} 
which, along with \eqref{eq:bounterm}, implies that $E^\ww$ can approximated by $(1 - \pw - \P_{\Sigma})E^\ww$ with an error of order $|\ww| + |\tau|^{-1}$. Clearly, $1 - \pw - \P_{\Sigma}$ is nothing else than the spectral projection to the set:
\begin{align} \label{def:dominset}
\sigma_p(\kbdd) \backslash \Sigma = \big\{\lad_j \in \sigma_p(\kbdd)\,;\ \big|\frac{1}{\tau \ww^2} - \lad_j\big| < r_0\big\}\,.
\end{align}
This motivates the additional condition for $\lad_i$ in Corollary \ref{thm:priest1}, which in some sense controls the elements in $\sigma_p(\kbdd) \backslash \Sigma$. For our subsequent use, we introduce the eigen--projection $\P_{\lad}$ for the eigenvalue $\lad$ of $\kbdd$:
\begin{equation*}
    \P_\lad[\dd] = \sum_{\lad_j = \lad} (\dd,E_j)_{{\bf L}^2(D)}E_j\,.
\end{equation*}
\mbb{    
\begin{corollary}\label{thm:priest1}
Let $N$ be a positive integer. There exist $0 < c_0 \ll 1$ and $\tau_0 \gg 1$ such that for $\ww \in \Omega(1,\tau)$ with $\tau \in S_{\tau_0}$ and $|\ww| \le c_0$, 
 the following estimates hold \mb{when the index $i$ of $\lad_i$ defined in \eqref{eq:eigladi} satisfies $|i|\le N$}:
    \begin{equation} \label{eq:prioriestfre3}
        |1-\tau \ww^2 \lad_i| \lesssim |\tau|^{-1}\,,
    \end{equation}
and
\begin{equation} \label{eq:esteigenfun1}
     \norm{E^\ww - \mathbb{P}_{\lad_i}E^\ww}_{{\bf L}^2(D)} \lesssim |\tau|^{-1/2}\,.
\end{equation}
\end{corollary}

\begin{proof}
By assumptions, we have $c_0$ and $\tau_0$ such that the estimate  \eqref{eq:prioriestfre1} holds with the index of $\lad_i$ satisfying $|i|\le N$.
It follows that there exists $C_1 > 0$, depending on $N$ but independent of $\ww$ and $\tau$, such that 
\begin{equation} \label{eq:esttau}
 |\tau \ww^2|^{-1} \ge C_1\,.
\end{equation}
Indeed, by \eqref{eq:eigladi} and triangle inequality, we have 
\begin{align*}
\lad_i - \big|\frac{1}{\tau  \ww^2}\big| \le        \big|\frac{1}{\tau \ww^2} - \lad_i\big| \le  \lim_{j \to \infty} \big|\frac{1}{\tau \ww^2} - \lad_j\big|\,,
\end{align*}
and then \eqref{eq:esttau} follows, since $\lad_j \to 0$ as $j \to \infty$ 
and $\lad_i \ge \min \{\lad_j\,;\ j = 0,\ldots, N\} > 0$.





To obtain \eqref{eq:prioriestfre3}, we first prove \eqref{eq:esteigenfun1} and note 
the following estimate from \eqref{eq:prioriestfre1}: 
\begin{equation} \label{eq:prioriestfre2}
    |\frac{1}{\tau \ww^2} - \lad_i|\lesssim |\tau|^{-1/2}\,,
\end{equation}
by using $|\ww| \lesssim |\tau|^{-1/2}$ from \eqref{eq:esttau}. For the estimate \eqref{eq:esteigenfun1}, we define the constant $r_0$ in \eqref{def:sigmaspec} by 
\begin{equation} \label{eq:priest:1}
        r_0 := \frac{1}{2} \min\{|\lad_i - \lad_j|\,;\ i, j = 0,1,\cdots,N\,,\ \lad_i \neq \lad_j \}\,.
\end{equation}
Clearly, by \eqref{eq:prioriestfre2}, for $\tau_0$ large enough, the set $\sigma_p(\kbdd) \backslash \Sigma$ defined in \eqref{def:dominset} includes only one element $\lad_i$. Then, \eqref{eq:esteigenfun1}  
is a direct consequence of the arguments before Corollary \ref{thm:priest1}.

\if \commentflag = \ct
We let $\ww_i$ be $(\lad_i \tau \d^2 )^{-1/2}$ and define a complex--valued function $f(s)$ by
\begin{equation*}
    f(s) = \frac{1}{\tau \d^2 (\ww_i + s(\ww-\ww_i))^2}\,,\q s \in [0,1]\,.
\end{equation*}
Then we readily have $f(0) = \lad_i$ and the derivative of $f$: 
\begin{equation} \label{auxderiv}
    f'(s) = \frac{-2}{\tau \d^2 (\ww_i + s(\ww-\ww_i))^3} (\ww-\ww_i)\,.
\end{equation}
With these auxiliary functions, by \eqref{eq:prioriestfre1}, \eqref{eq:esttau} and a simple use of the mean--value theorem, we can derive
\begin{equation*}
   |f'(\xi)| = |f(1) - f(0)| =  |\frac{1}{\tau \d^2 \ww^2} - \lad_i| \lesssim |\d \ww|\,,
\end{equation*}
where $\xi \in (0,1)$. Then, by \eqref{auxderiv} and triangle inequality, it follows that 
\begin{align} \label{auxeq_11}
    |\ww-\ww_i| & = \frac{1}{2} |f'(\xi)| \dd |\tau \d^2 (\ww_i + \xi (\ww-\ww_i))^3| \notag \\
    & \lesssim |\d \ww| \dd |\tau \d^2| \dd (|\ww_i| + |\ww|)^3 \notag \\
     & \lesssim |\d \ww^2|\,,
 \end{align}
 which is our desired estimate \eqref{eq:prioriestfre2}.  In the last inequality of  \eqref{auxeq_11},  we have used the following estimates from \eqref{eq:esttau} and the definition of $\ww_i$:
 \begin{align*}
     |\tau \d^2| \dd (|\ww_i| + |\ww|)^2  \le C_3\,,
 \end{align*}
 and 
 \begin{align*}
    |\ww_i| + |\ww| \lesssim |\tau \d^2|^{-1/2} + |\ww| \lesssim |\ww|. 
 \end{align*}
 \fi
 

To complete the proof, we show \eqref{eq:prioriestfre3}. Since $\tau \ww^2 = O(1)$ holds by \eqref{eq:esttau}, it follows from \eqref{eq:asymeig} that 
    \begin{equation*}
        (1 - \tau \tbb - \tau \ww^2 \tb{2} - \tau \ww^3 \tb{3}) [E^\ww] = O(\ww^2)\,.
    \end{equation*}
    To get the equation for the resonant frequency $\ww$, we take the inner product of the above equation with $\mathbb{P}_{\lad_i}E^\ww$, and then obtain 
\begin{align} \label{eq:fstappfre0}
    (1-\tau \ww^2 \lad_i)(\mathbb{P}_{\lad_i}E^\ww, E^\ww)_{{\bf L}^2(D)} -\tau \ww ^3(\mathbb{P}_{\lad_i}E^\ww,\tb{3}[E^\ww])_{{\bf L}^2(D)} = O(\ww^2)\,,
\end{align} 
which further implies, by the formula \eqref{rela:tbb3},
\begin{equation} \label{eq:fstappfre1}
    (1-\tau \ww^2 \lad_i)(\mathbb{P}_{\lad_i}E^\ww, E^\ww)_{{\bf L}^2(D)} = O(\ww^2)\,.
\end{equation}
Note that \eqref{eq:esttau} and \eqref{eq:prioriestfre2} give $1 - \tau \ww^2\lad_i = O(\tau^{-1/2})$; and  \eqref{eq:esteigenfun1} allows us to write 
\begin{align} \label{auxeqqq}
    (\mathbb{P}_{\lad_i}E^\ww, E^\ww)_{{\bf L}^2(D)} = 1 + O(\tau^{-1/2}). 
\end{align}
Then, by \eqref{eq:esttau}, \eqref{eq:fstappfre1}, and \eqref{auxeqqq}, we have 
\begin{align*}
   |1-\tau \ww^2 \lad_i| \lesssim |\tau|^{-1}\,.
\end{align*}
The proof is now complete. 
\end{proof}}
The estimates \eqref{eq:prioriestfre2} and \eqref{eq:prioriestfre3} could be regarded as the 
quasi--static approximation and its first--order correction for the dielectric subwavelength resonances $\ww$ in terms of $\tau$, respectively. The above approach can also be easily generalized to derive the 
higher--order a priori estimates for the resonances $\ww$ and the eigenfunctions $E^\ww$.



\subsection{Existence and asymptotics of dielectric resonances} \label{sec:resoexit}

In the previous section, we have provided a priori estimates for the dielectric subwavelength resonances and the associated eigenfunctions. 
Nevertheless, it is still unknown whether the subwavelength resonances (i.e., the poles of $\A(\tau,\dd)^{-1}$ near the origin) exist or not for a given high contrast $\tau$. 
\mb{Besides, for a better understanding for the limiting behavior of subwavelength resonances when $\tau \to \infty$, it is necessary to investigate the asymptotic expansions of the resonances with respect to $\tau$.
These two issues will be the main focus of this subsection.} 



Recall that the operator--valued analytic function $\A(+\infty,\ww) = - \TT_D^{\ww}$ has a unique characteristic value $0$ but with an infinite--dimensional eigenspace. So the Gohberg--Sigal theory can not be directly applied to $\A(\tau,\ww)$
to guarantee the existence of resonances near the origin and further derive their asymptotic expansions. 
However, the estimate \eqref{eq:prioriestfre1} in Theorem \ref{thm:priest0} suggests that the resonance $\ww$ is near $\sqrt{\tau \lad_i}^{\sss -1}$ in some sense. Therefore, we are motivated to
introduce a new complex variable: 
\begin{align} \label{def:transtau}
    \h{\ww} = \sqrt{\tau} \ww\,,
\end{align}
and the corresponding analytic family of Fredholm operators: for $(\tau, \h{\ww}) \in S_{\tau_0} \t \C$, 
\begin{align*}
   \h{\A}(\tau, \h{\ww}) := \tau^{-1} - \TT_D^{\sqrt{\tau}^{-1}\h{\ww}}\, : \, {\bf H}(\ddiv0,D) \to {\bf H}(\ddiv0,D)\,.
\end{align*}
To analyze $\h{\A}(\tau,\h{\ww})$, as in \eqref{eigen:target0}, we introduce the operator matrix: 
\begin{equation} \label{def:aadw}
\h{\AA}(\tau,\h{\ww}) := \mm 1 - \tau \pd \TT_D^{\sqrt{\tau}^{-1} \h{\ww}}\pd & - \tau \pd \TT_D^{\sqrt{\tau}^{-1}\h{\ww}}\pw \\ - \pw \TT_D^{\sqrt{\tau}^{-1}\h{\ww}}\pd  & \frac{1}{\tau} - \pw \TT_D^{\sqrt{\tau}^{-1}\h{\ww}}\pw \nn \q \text{for} \ \tau \in S_{\tau_0},\  \h{\ww} \in \C\,,
\end{equation}
which is globally equivalent to $\h{\A}(\tau,\h{\ww})$.  It is easy to see that $\h{\A}(\tau,\dd)^{-1}$ and $\h{\AA}(\tau,\dd)^{-1}$ have the same poles that are also the zeros of 
the analytic function $\h{f}(\tau,\dd)$ which is defined by
\begin{align} \label{def:anatrans}
    \h{f}(\tau, \h{\ww}) := f(\tau, \sqrt{\tau}^{-1} \h{\ww}) \q \text{on}\ S_{\tau_0} \t \C\,,
\end{align}
where $f(\tau,\ww)$ is given in Proposition \ref{prop:gloana}. Once we understand the poles of $\h{\AA}(\tau,\dd)^{-1}$ well enough, we can readily go back to the original variable $\ww$ and obtain the desired result.


 
The expansion of $\h{\AA}(\tau,\h{\ww})$ with respect to $\tau$ follows from Lemma \ref{lem:asytbw}: 
\begin{align} \label{eq:asyeqahat}
    \h{\AA}(\tau,\h{\ww}) = \AA_0(\h{\ww}) + \sqrt{\tau}^{-1} \AA_1(\h{\ww}) + \tau^{-1} \AA_2(\h{\ww}) + O(\tau^{-3/2})\,,    
\end{align}
where the operators $\AA_0(\h{\ww})$ and $\AA_1(\h{\ww})$ are computed as follows, by using the relation \eqref{rela:tbb3},
\begin{equation} \label{eq:expa0}
\AA_0(\h{\ww}) = \mm 1 - \h{\ww}^2 \kbdd  & - \h{\ww}^2 \kbdw \\ 
0 &  - \tbb 
\nn \,, \q 
    \AA_1(\h{\ww}) = 0\,.
\end{equation}
Then, for the leading--order operator $\AA_0(\h{\ww})$, by the invertibility of $\tbb|_W$ and Lemma \ref{lem:specresotb2}, we obtain that $\hzz$ is an invariant subspace of $\AA_0(\h{\ww})$, and $\AA_0(\h{\ww})^{-1}$ is a meromorphic function on $\C$ with the poles given by 
\begin{align*}
\pm \h{\ww}_i := \pm (\lad_i)^{-1/2}\,, \q i = 0,1,2,\cdots\,,
\end{align*}
counted with their multiplicities. Without loss of generality, we proceed to consider the local behavior of $\AA_0(\h{\ww})^{-1}$ near the pole $\h{\ww}_0$. 
By a simple matrix computation, we derive
\begin{align}  \label{eq:invera00}
\AA_0(\h{\ww})^{-1}\mm f\\g \nn = \mm (1 - \h{\ww}^2 \kbdd )^{-1} [f - \h{\ww}^2 \kbdw \tbb^{-1} g] \\ - \tbb^{-1}g \nn ,
\end{align}
for $(f,g)\in \xx$ and $\h{\ww}$ in a small punctured neighborhood of $\h{\ww}_0$, which can be further factorized as:
\begin{align}  \label{eq:invera0}
   \AA_0(\h{\ww})^{-1}\mm f\\g \nn
     & = \mm (1 - \h{\ww}^2 \kbdd )^{-1}& 0 \\ 0 & I \nn \mm I & -  \h{\ww}^2 \kbdw \tbb^{-1} \\ 0 & - \tbb^{-1}\nn  \mm f \\ g \nn \notag\\
     & = \mm \frac{- \h{\ww}_0}{2(\h{\ww}- \h{\ww}_0)}\poo + \rr(\h{\ww}) & 0 \\ 0 & I \nn \mm I & - \h{\ww}^2 \kbdw \tbb^{-1}  \\ 0 & - \tbb^{-1}\nn  \mm f \\ g \nn\,,
\end{align}
where we have used the eigen--decomposition of $\kbdd$ in Lemma \ref{lem:specresotb2} and the identity:
\begin{equation*}
    \frac{1}{1 - \h{\ww}^2 \lad_0} = \frac{\h{\ww}_0^2}{\h{\ww}_0^2 - \h{\ww}^2} = \frac{-\h{\ww}_0}{2(\h{\ww} - \h{\ww}_0)} + \rr(\h{\ww})\,.
\end{equation*}
Here and throughout this work, $\rr(\dd)$ is a generic analytic function or operator--valued analytic function, whose exact definition may change from line to line. 



We summarize the above facts about $\AA_0(\h{\ww})$ in the following proposition.
\begin{proposition} \label{prop:polepena0}
   The operator $\AA_0(\h{\ww}):\xx \to \xx$ has the following inverse:
    \begin{align} \label{eq:fullinvera0}
        \AA_0(\h{\ww})^{-1} = \mm (1 - \h{\ww}^2 \kbdd )^{-1}& 0 \\ 0 & I \nn \mm I & -  \h{\ww}^2 \kbdw \tbb^{-1} \\ 0 & - \tbb^{-1}\nn\,,
    \end{align}
which is a meromorphic function on $\C$ with the simple poles $\pm \h{\ww}_i = \pm \sqrt{\lad_i}^{-1}$, $i = 0,1,2,\cdots$ satisfying 
$
    \pm \h{\ww}_i \to \pm \infty$ as $i \to \infty\,.
$
Moreover, the pole--pencil expansion of $\AA_0(\h{\ww})^{-1}$ near the pole $\h{\ww}_0$ is given by 
\begin{equation} \label{eq:polea0}
\AA_0(\h{\ww})^{-1}  = \frac{1}{\h{\ww} - \h{\ww}_0} \L_0  + \rr(\h{\ww})\,,
\end{equation}
for $\h{\ww}$ in a small punctured neighborhood $U$ of $\h{\ww}_0$, where the residue $\L_0$ at $\h{\ww}_0$ is given by 
\begin{equation} \label{def:lw}
    \L_0 =  -\frac{\h{\ww}_0}{2} \mm \P_{\lad_0} & - \h{\ww}_0^2\P_{\lad_0} \kbdw \tbb^{-1} \\ 0 & 0 \nn\,.
\end{equation}
\end{proposition}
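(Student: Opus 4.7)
The proof is essentially a direct computation, and my plan is to split it into three stages: invert $\AA_0(\ww)$ using its block upper-triangular structure, read off the meromorphic structure from the spectral resolution of $\kbdd$, and extract the pole-pencil expansion near $\ww_0$ by partial fractions.

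First, I would exploit the block upper-triangular form of $\AA_0(\ww)$ in \eqref{eq:expa0}. Solving $\AA_0(\ww)[x,y]^{T} = [f,g]^{T}$ block by block yields $y = -\tbb^{-1}g$ (using that $\tbb|_W$ is invertible by Proposition \ref{lem:tbz1}) and, back-substituting, $x = (1 - c_\tau\ww^2 \kbdd)^{-1}(f - c_\tau\ww^2 \kbdw \tbb^{-1}g)$ whenever $1-c_\tau\ww^2 \kbdd$ is invertible on $\hzz$. Assembling these two identities into a $2\times 2$ operator matrix gives precisely the factorization \eqref{eq:fullinvera0}.

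Second, I would apply the spectral resolution of $\kbdd$ from Proposition \ref{lem:specresotb2} to write
\begin{equation*}
(1 - c_\tau \ww^2 \kbdd)^{-1} = \sum_{i=0}^{\infty} \frac{1}{1 - c_\tau \ww^2 \lad_i}\, \mathbb{P}_{\lad_i}\,,
\end{equation*}
which converges in operator norm uniformly outside any fixed neighborhood of $\{\pm\ww_i\}$ because $\lad_i \to 0$. Combined with the boundedness of $\tbb^{-1}$ on $W$, this shows that $\AA_0(\ww)^{-1}$ extends to a meromorphic operator-valued function on $\C$ whose poles are exactly the simple zeros of the scalar factors $1-c_\tau\ww^2\lad_i$, namely $\pm\ww_i = \pm(c_\tau\lad_i)^{-1/2}$.

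Finally, for the pole-pencil expansion at $\ww_0$, I would insert the partial fraction
\begin{equation*}
\frac{1}{1-c_\tau \ww^2 \lad_0} = \frac{\ww_0^2}{\ww_0^2 - \ww^2} = \frac{-\ww_0}{2(\ww-\ww_0)} + \frac{\ww_0}{2(\ww+\ww_0)}
\end{equation*}
into \eqref{eq:fullinvera0} and isolate the singular parts. The $(1,1)$-block contributes residue $-\frac{\ww_0}{2}\mathbb{P}_{\lad_0}$, while in the $(1,2)$-block the analytic prefactor $-c_\tau\ww^2\kbdw\tbb^{-1}$ specializes to $-c_\tau \ww_0^2\kbdw \tbb^{-1}$ at $\ww = \ww_0$, yielding residue $\frac{c_\tau\ww_0^3}{2}\mathbb{P}_{\lad_0}\kbdw\tbb^{-1}$. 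These two entries match $\L_0$ in \eqref{def:lw} exactly, and the remaining contributions (the $(\ww+\ww_0)^{-1}$ piece, the $\lad_j\neq\lad_0$ summands, and the analytic $(2,2)$-block $-\tbb^{-1}$) are absorbed into the analytic remainder $\rrla(\ww)$. The argument is computational throughout; the main thing to be careful about is the bookkeeping of where the partial fraction introduces the prefactor $\ww_0^2$ in the $(1,2)$-entry and the verification that the block factorization of the first step is valid on the punctured neighborhoods claimed in \eqref{eq:fullinvera0}, which is automatic once the pole set has been identified.
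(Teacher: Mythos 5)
Your proposal follows essentially the same route as the paper: block upper-triangular inversion to obtain the factorization \eqref{eq:fullinvera0}, meromorphy from the spectral resolution of $\kbdd$, and the partial-fraction identity $\frac{\ww_0^2}{\ww_0^2-\ww^2}=\frac{-\ww_0}{2(\ww-\ww_0)}+\frac{\ww_0}{2(\ww+\ww_0)}$ to extract $\L_0$. The residue bookkeeping in the $(1,1)$- and $(1,2)$-blocks is correct and matches \eqref{def:lw}.

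One minor imprecision worth fixing: the series $\sum_{i}\frac{1}{1-c_\tau\ww^2\lad_i}\P_{\lad_i}$ does \emph{not} converge in operator norm, since its coefficients tend to $1$ (not $0$) as $\lad_i\to 0$; it converges only strongly. To get norm convergence you should instead write $(1-c_\tau\ww^2\kbdd)^{-1}=I+\sum_i\frac{c_\tau\ww^2\lad_i}{1-c_\tau\ww^2\lad_i}\P_{\lad_i}$, whose coefficients do vanish as $i\to\infty$, or simply invoke that $1-c_\tau\ww^2\kbdd$ is an analytic family of index-zero Fredholm operators invertible at $\ww=0$. Either way the claimed meromorphic extension and pole set $\{\pm\ww_i\}$ are correct, so the slip does not affect the conclusion.
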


Since $\pm \h{\ww}_i$ is a simple pole of $\AA_0(\h{\ww})^{-1}$, the geometric multiplicity of $\pm \h{\ww}_i$ (i.e., $\dim \ker (\AA_0(\pm \h{\ww}_i))$) is equal to its algebraic multiplicity denoted by  $M(\pm\h{\ww}_i,\AA_0(\dd))$ \cite[p.\,205]{gohberg1990classes}, which can  be computed as:
\begin{equation} \label{def:algmulti}
M(\pm\h{\ww}_i,\AA_0(\dd)) = \frac{1}{2\pi i} \tr \int_{\Gamma_0} \AA_0(\h{\ww})^{-1}\AA'_0(\h{\ww})d \h{w}\,,
\end{equation}
by the generalized argument principle \cite[p.\,206,\,Theorem 9.1]{gohberg1990classes}, where $\Gamma_0$ is a Cauchy contour enclosing only  $\pm \h{\ww}_i$ among all the poles of $\AA_0(\h{\ww})^{-1}$ and $\tr$ denotes the trace. We next use the generalized Rouch\'{e} theorem \cite[p.\,206,\,Theorem 9.2]{gohberg1990classes} to find the poles of $\h{\AA}(\tau,\dd)^{-1}$, for $\tau \in S_{\tau_0}$ with larger enough $\tau_0$, 
near the simple poles of $\AA_0(\h{\ww})^{-1}$. The result is given as follow; see \cite{gohberg1990classes,gohberg1971operator} for a review of 
Gohberg--Sigal theory and the related concepts used here.  



\begin{theorem}  \label{thm:existissue} 
Let $C_r > 0$ be a given constant with $C_r > |\h{\ww}_0|$ and such that $\AA_0(\h{\ww})$ is invertible on $\p B(0, C_r)$. We define the nonempty set: 
\begin{equation*}
   \{\pm \h{\ww}_i\,;\ i = 0,1,\cdots,n_0\} = \{\pm \h{\ww}_i\,;\ i = 0,1,\cdots\} \bigcap B(0,C_r) \subset \C\,.
\end{equation*}
For any small enough $\ep_0$ such that $B(\pm \h{\ww}_i, \ep_0) \subset B(0,C_r)$ for each $i \in \{0,1,\cdots,n_0\}$, and $\pm \h{\ww}_i$ is the only pole of $\AA_0(\h{\ww})^{-1}$ in $B(\pm \h{\ww}_i,\ep_0)$, there exists large enough $\tau_0$ depending on $\ep_0$ such that for $\tau \in S_{\tau_0}$, there are $m_i$ poles of $\h{\AA}(\tau,\dd)^{-1}$ in each $B(\pm \h{\ww}_i,\ep_0)$, counted with their algebraic multiplicities, where
$$
m_i := \dim \ker(\lad_i - \kbdd)\,.
$$
Moreover, these $\sum_{i = 0}^{n_0} 2 m_i$ poles give all the poles of $\h{\AA}(\tau,\dd)^{-1}$ in $B(0, C_r)$. 
\end{theorem}

\begin{proof}
By assumptions, for large enough $\tau_0$, there holds 
\begin{equation} \label{eq:auxeqq}
    \Big\|\big(\h{\AA}(\tau,\h{\ww}) - \AA_0(\h{\ww})\big)\AA_0(\h{\ww})^{-1}\Big\|< 1\,,\q \forall \h{\ww} \in \p B(\pm \h{\ww}_i, \ep_0)\,, \ \tau \in S_{\tau_0}\,.
\end{equation}
Then, the generalized Rouch\'{e} theorem gives
\begin{align*}
    M(\p B(\pm \h{\ww}_i, \ep_0), \h{\AA}(\tau, \dd))  = M(\pm \h{\ww}_i, \AA_0(\dd)) = m_i\,, \ i = 0,1,\cdots,n_0\,,
\end{align*}
where $M(\p B(\pm \h{\ww}_i,\ep_0),\h{\AA}(\tau,\dd))$ is given by 
\begin{equation*}
  M(\p B(\pm \h{\ww}_i,\ep_0),\h{\AA}(\tau,\dd)) = \frac{1}{2\pi i}\tr\int_{\p B(\pm \h{\ww}_i,\ep_0)} \h{\AA}(\tau,\h{\ww})^{-1} \frac{\p}{\p \ww}\h{\AA}(\tau,\h{\ww})  d\h{\ww}\,.
\end{equation*}
It follows that there exist $m_i$ poles of $\h{\AA}(\tau, \dd)^{-1}$ in $B(\pm \h{\ww}_i,\ep_0)$, $i = 0,1,\ldots,n_0$. Similarly, since $\AA_0(\h{\ww})$ is invertible on $\p B(0, C_r)$, if $\tau_0$ is large enough, 
\eqref{eq:auxeqq} holds for $\tau \in S_{\tau_0}$ and $\h{\ww} \in \p B(0, C_r)$, which, by the generalized Rouch\'{e} theorem, implies that these poles near $\pm \h{\ww}_i$ are all the poles of $\h{\AA}(\tau, \dd)^{-1}$ in $B(0,C_r)$. 
\end{proof}


\mbb{
Without loss of generality, we focus on the poles of $\h{\AA}(\tau,\dd)^{-1}$ near $\h{\ww}_0 = \sqrt{\lad_0}^{\sss -1}$, denoted by $\{\h{\ww}_{0,j}(\tau)\}_{j = 1}^{m_0}$. 
It is clear that $\sqrt{\tau}^{\sss -1}\h{\ww}_{0,j}$ are the poles of $\AA(\tau,\dd)^{-1}$ near the origin for large enough $\tau$, that is, $\sqrt{\tau}^{\sss -1}\h{\ww}_{0,j}$ are dielectric subwavelength resonances. This addresses the existence issue.

We re-arrange the order of $\h{\ww}_{0,j}$ such that $\{\h{\ww}_{0,j}(\tau)\}_{j = 1}^{m_\tau}$ gives the set of the distinct values of $\{\h{\ww}_{0,j}(\tau)\}_{j = 1}^{m_0}$, where $m_\tau \le m_0$
is a positive integer possibly
depending on $\tau$. 
We next consider the asymptotic behaviors of $\h{\ww}_{0,j}(\tau)$ with respect to $\tau$, and will see that when $\tau$ is large enough, $m_\tau$ is actually independent of $\tau$.

For our purpose, we follow the framework established in \cite{ammari2004splitting}. We define the functions, for $l \ge 1$, 
\begin{align} \label{def:powersum_1}
     \w{p}_l(\tau) &:= \frac{1}{2 \pi i} \tr \int_{\p B(\h{\ww}_0,\ep_0)} (\h{\ww} - \h{\ww}_0)^l \h{\AA}(\tau,\h{\ww})^{-1} \frac{\p}{\p \ww}\h{\AA}(\tau,\h{\ww})  d\h{\ww}\,.
\end{align}
By use of the general form of the argument principle \cite[Theorem 4.1]{gohberg1971operator}, we see that $\{\w{p}_l(\tau)\}_{l \ge 1}$ are nothing else than 
the power sum symmetric polynomials in variables  $x_j := \h{\ww}_{0,j}(\tau) - \h{\ww}_0$, $1 \le j \le m_0$: 
\begin{align} \label{def:powersum_2}
    \w{p}_l(\tau) = \sum_{j = 1}^{m_0} (\h{\ww}_{0,j}(\tau) - \h{\ww}_0)^l = \sum_{j = 1}^{m_0} x_j^l\,.
\end{align}
We introduce 
\begin{align} \label{def:powersum_3}
    z := \sqrt{\tau}^{-1}\,,\q p_l(z) := \w{p}_l(z^{-2})\q \text{for}\ \tau \in S_{\tau_0}\,.
\end{align}
In view of \eqref{eq:asyeqahat} and \eqref{def:powersum_1}, $p_l(z)$ has an analytic continuation into $B(0,\w{c}) = \{z \in \C\,; \ |z| \le \w{c}\}$ for small enough $\w{c}$. 
The key tool in the argument is Newton's identities, which relates the power sum symmetric polynomials and the elementary symmetric polynomials \cite{artin2011algebra}. 
We now state our result as follows. 

\begin{proposition} \label{prop:zeroasym}
Let the analytic functions $p_l(z)$, $l \ge 1$, be defined via \eqref{def:powersum_1} and \eqref{def:powersum_3} on $B(0,\w{c})$.  We recursively define $s_k(z)$, $0 \le k \le m_0$, by 
\begin{align} \label{def:recurela}
       k s_k + \sum_{i = 0}^{k-1} s_i p_{k - i} = 0 \q \text{with}\ s_0 = 1\,.
\end{align}
Then there hold
\begin{enumerate}
    \item the functions $s_i(z)$, $1 \le i \le m_0$, are analytic on $B(0,\w{c})$ and satisfy $s_i(0) = 0$;
    \item $\h{\ww}_{0,j}(\tau) - \h{\ww}_0$, $1 \le j \le m_0$, with $\tau = z^{-2}$, are all the zeros of the polynomial:
\begin{align} \label{def:poly} 
 \mathcal{Q}_z(\lad) := \sum_{i = 0}^{m_0} s_{i}(z) \lad^{m_0 - i} \,;
  \end{align}
 \item the analytic function $\h{f}$ in \eqref{def:anatrans} admits the factorization: $\h{f}(z^{-2},\h{\ww}) = \mathcal{Q}_z(\h{\ww} - \h{\ww}_0) g(z,\h{\ww})$ in a neighborhood of $(0,\h{\ww}) \in \C^2$, where $g$ is analytic in $z$ and $\h{\ww}$ and does not vanish everywhere. 
\end{enumerate}
\end{proposition}

\begin{proof}
It is easy to observe that if $p_i$, $1 \le i \le m_0$, are given, then one can readily use the recurrence relation \eqref{def:recurela} to find $s_i$, $1 \le i \le m_0$, by solving a triangular linear system,  which also implies that $s_i$ is the linear combination of $p_j$, $1 \le j \le i$, with rational coefficients. Moreover, since $p_l(z)$ is analytic in a small neighborhood of $z = 0$, so is $s_i(z)$. By \eqref{def:powersum_2} or \eqref{def:powersum_1},  $p_l(0)= 0$ for $l \ge 1$. Then \eqref{def:recurela} gives $s_i(0) = 0$ for $1 \le i \le m_0$. 

For the second statement, we define 
\begin{align*}
\w{\mathcal{Q}}_z(\lad) = \prod_{j = 1}^{m_0} (\lad - (\h{\ww}_{0,j}(z^{-2}) - \h{\ww}_0))\,,
\end{align*}
and it suffices to prove $\w{\mathcal{Q}}_z(\lad) = \mathcal{Q}_z(\lad)$. By elementary algebra, we have 
\begin{align*}
    \w{\mathcal{Q}}_z(\lad) =  \sum_{i = 0}^{m_0} (-1)^i \h{e}_i \lad^{m_0 - i} \,,
\end{align*}
where $\h{e}_i$ are elementary symmetric polynomials in variables  $x_j = \h{\ww}_{0,j}(z^{-2}) - \h{\ww}_0$. Then, noting the formula \eqref{def:powersum_2}, it follows that $p_l(z)$ and $\h{e}_i$ satisfy Newton’s identities \cite[p.\,505]{artin2011algebra}, which further yields $s_i(z) = (-1)^i \h{e}_i(z)$ by \eqref{def:recurela}.  

For the third statement, by definitions, for every $z \in B(0,\w{c})$, each zero of $\mathcal{Q}_z(\dd - \h{\ww}_0) 
$ is also a zero of $\h{f}(z^{-2},\dd)$. It follows that $g$ is well--defined pointwisely. To show that $g$ is analytic, it is sufficient to use Cauchy's integral formula:
\begin{align*}
    g(z,\h{\ww}) = \frac{1}{2 \pi i}\int_\Gamma \frac{\h{f}(z^{-2},\lad)}{\mathcal{Q}_z(\lad - \h{\ww}_0)} \frac{d \lad}{\lad - \h{\ww}}\,,
\end{align*}
and note that its right--handed side is analytic,
where $\Gamma$ is a suitably chosen Cauchy contour.
The proof is complete.  
\end{proof}

By Proposition \ref{prop:zeroasym} above, the splitting of $\h{\ww}_0$, when $\tau$ moves from $\infty$ to a finite value, is governed by the polynomial equation:  
\begin{align*}
    \mathcal{Q}_z(\lad) = \lad^{m_0} + s_1(z) \lad^{m_0 - 1} + \ldots + s_{m_0}(z) = 0\,, \q |z| \le \w{c}\,,\q s_i(0) = 0\,.
\end{align*}
Clearly, $\mathcal{Q}_z(\lad)$ is a Weierstrass polynomial (cf.\cite{suwa2007introduction}) and can be uniquely factorized as 
\begin{align} \label{eq:facqzlad}
     \mathcal{Q}_z(\lad) = \prod_{j = 1}^{l}  q_{z,j}^{k_j}(\lad) \,,
\end{align}
where $q_{z,j}$ are irreducible Weierstrass polynomials in $\lad$ of degree $d_j$ with $\sum_{j = 1}^l d_j k_j = m_0$, since the polynomial ring is a unique factorization domain. Then the equation $\mathcal{Q}_z(\lad) = 0$ is equivalent to the equations $q_{z,j}(\lad) = 0$, $1 \le j\le l $. By 
\cite[p.\,406,\,Theorem 3]{baumgartel1985analytic} or \cite[p.\,306,\,Theorem 4]{ahlfors1966complex},
the zeros of $q_{z,j}(\lad) = 0$ is given by a $d_j$--valued algebraic function $\{\lad_{j,i}(z)\}_{i = 1}^{\sss d_j}$ with the Puiseux expansion near $z = 0$. Moreover, since $q_{z,j}$ are mutually relatively prime, for small enough $z$ except $z = 0$, $q_{z,j}(\lad)$ have no common roots (cf.\cite[p.\,398]{baumgartel1985analytic}). Therefore, we arrive at the following result about the asymptotic behaviors of $\{\h{\ww}_{0,j}(\tau)\}_{j = 1}^{\sss m_0}$ for large enough $\tau$. 

\begin{theorem} \label{thm:splitting}
There exists large enough $\tau_0$ such that for $\tau \in S_{\tau_0}$, 
the $m_0$ poles $\{\h{\ww}_{0,j}(\tau)\}_{j = 1}^{m_0}$ of $ \h{\AA}(\tau,\dd)^{-1}$ near $\h{\ww}_0$ have $\w{m}_0$ different values, denoted by $\{\h{\ww}_{0,j}(\tau)\}_{j = 1}^{\w{m}_0}$,  with $\w{m}_0 \le m_0$ independent of $\tau$. Moreover, the set $\{\h{\ww}_{0,j}(\tau)\}_{j = 1}^{\w{m}_0}$ can be divided into $l$ subsets:
\begin{align*}
    \h{\ww}_{0,j}(\tau)\,,\q 1 + \sum_{s = 1}^{t - 1} d_s \le j \le \sum_{s = 1}^{t}  d_s\,, \q t = 1,\ldots, l\,, 
\end{align*}
with $d_s$ being positive integers satisfying $\sum_{s = 1}^l d_s = \w{m}_0$. Each subset forms a $d_t$--valued algebraic function with the Puiseux series expansion: for $\tau \in S_{\tau_0}$, 
\begin{align} \label{eq:exppui}
  \h{\ww}_{0,j}(\tau) - \h{\ww}_0 = \sum_{n = 2 d_t}^\infty c_{t,n} (\mu_t^i \tau^{- 1/(2 d_t)})^n\,,\q j = i + \sum_{s = 1}^{t - 1} d_s\,,  \q i = 1,2,\ldots, d_t\,,
\end{align}
where $\tau^{-1/(2d_t)}$ is a fixed root of $\tau^{-1/2}$ and $\mu_t$ is a fixed primitive $d_t$--th root of unity. 
\end{theorem}

\begin{remark}
To find the coefficients $c_{t,n}$ in 
\eqref{eq:exppui}, we need first to derive the Taylor expansion of the polynomial coefficients $s_i$ in \eqref{def:poly} with respect to $z$ (equivalently, $\tau$), which can be easily done by expanding $p_l$ and using the recurrence relation \eqref{def:recurela}. For a polynomial with degree $\le 4$, it is known that its zeros can be explicitly represented by the coefficients of the polynomial with only the algebraic and radical operations. Hence, if $m_0 \le 4$, we can use this approach to find the full expansions of $\h{\ww}_{0,j}(\tau) - \h{\ww}_0$ with respect to $\tau$. However, such calculations are not available for $m_0 > 5$ by Galois theory, so that in this case some numerical treatments are necessary; see \cite[Section 9]{ammari2004splitting} for a related discussion. 
\end{remark}

It is clear that the integers $d_s$ in Theorem \ref{thm:splitting} above
are given by the degrees of the prime factors $q_{z,j}$ of  $\mathcal{Q}_z$ in \eqref{eq:facqzlad}. 
To derive \eqref{eq:exppui}, we have also used the estimate \eqref{eq:prioriestfre3}. Indeed, it is easy to see that for $\{\h{\ww}_{0,j}(\tau)\}_{j = 1}^{m_0}$,
Corollary \ref{thm:priest1} holds with $N = 0$ and $\ww = \sqrt{\tau}^{\sss -1}\h{\ww}_{0,j}(\tau)$. Then it follows from \eqref{eq:prioriestfre3} that
\begin{equation} \label{eq:changepriori} 
    |\h{\ww}_{0,j}(\tau) - \frac{1}{\sqrt{\lad_0}}|\lesssim |\tau|^{-1}\,,
\end{equation}
which further implies that the coefficients of the expansion \eqref{eq:exppui} satisfy
\begin{equation*} \label{vanishcoeff}
    c_{t,n} = 0\,, \q 1 \le  t \le l  \,, \q    1 \le  n \le  2 d_t - 1\,.
\end{equation*}

We have now completed our understanding for the poles of $\h{\AA}(\tau,\dd)^{-1}$ for large enough $\tau$. Briefly and roughly speaking, 
these poles are given by the algebraic functions $\h{\ww}(\tau)$ near each point $\pm \h{\ww}_i = \pm \sqrt{\lad_i}^{\sss -1}$, $i = 0,1,2,\cdots$. 


Recalling Definition \ref{def:drs_1}, the assumption $\d = 1$ and the change of variable \eqref{def:transtau} we made above, by Theorem \ref{thm:existissue}, we can easily construct the dielectric subwavelength resonances by $\ww = \d^{\sss -1}\sqrt{\tau}^{\sss -1}\h{\ww}$, and obtain the following result. 

\begin{corollary}
Let $\tau_0$ and $\{\h{\ww}_{0,j}\}_{j = 1}^{m_0}$ be given by Theorem \ref{thm:splitting}. Then we have 
\begin{align} \label{eq:existence_resonance}
\ww_{0,j}(\d,\tau) := \d^{-1}\sqrt{\tau}^{-1}\h{\ww}_{0,j}(\tau) \in  \Omega(\d,\tau)\,, \q j = 1,\ldots, m_0\,,
\end{align}
for $\tau \in S_{\tau_0}$ and $0 < \d \ll 1$, where $ \Omega(\d,\tau)$ is defined in Definition \ref{def:drs_1}. 
\end{corollary}
To see \eqref{eq:existence_resonance}, we need to check: $\A_\d(\tau,\ww)$ is not invertible and $\ww \in S_\d$ (i.e., $|\d \ww| \ll 2 \pi$), both of which can be directly verified. We next give some remarks with physical significances. 

\begin{remark} \label{rem:resophy_2}
Similarly to Theorem \ref{thm:existissue}, as $\tau_0$ increases, one can show there are poles of $\h{\AA}(\tau,\dd)^{-1}$, $\tau \in S_{\tau_0}$, near more and more $\pm \h{\ww}_i$. All these poles can be scaled by the factor $\d^{\sss -1} \sqrt{\tau}^{\sss -1}$ and give the desired subwavelength resonances.
From the physical perspective, in this case, the ability of the nanoparticles to concentrate EM energy is enhanced \cite{evlyukhin2012demonstration}, since more resonances can be excited in a given incident frequency range (i.e., more resonances in a given neighborhood of the origin).



\end{remark}

\begin{remark} \label{rem:resophy_3}
We expect from the scaling factor $\d^{-1}\sqrt{\tau}^{-1}$ that with the increasing particle size $\d$, the resonant frequencies have the red--shift phenomenon, which has been reported in \cite{evlyukhin2012demonstration,kuznetsov2016optically}.
\end{remark}

\mb{In order for the resonances $\ww_{0,j}$ to be located in the visible light spectrum range (i.e., $\ww \sim 1$;
see Appendix \ref{app:S_1}) and to be observed in the experimental environment, we need the assumption:
\begin{align} \label{assp:taudelta}
    \tau \in \R \,,\q \tau \sim \d^{-2}\,.
\end{align}
The assumption $\tau \in \R$ is quite natural (see remarks after \eqref{asp:highcont}), while the other condition 
$\tau \sim \d^{-2}$, enforced by the condition $\ww = \d^{\sss -1}\sqrt{\tau}^{\sss -1} \h{\ww}\sim 1 $, means that the wavelength inside the nanoparticle is comparable with its characteristic size. In this configuration, 
the dielectric resonances were indeed experimentally observed in the visible light region \cite{kuznetsov2016optically}.}

In the remaining part of this work,
we mainly adopt and restrict ourselves to the following explicit form of the assumption \eqref{assp:taudelta}:
\begin{align} \label{asp:tauexpd}
    \tau = \d^{-2} \,, \q  0 < \d \ll 1\,,
\end{align}
for ease and clarity of exposition, although many results holds under the general one $\tau(\d) \sim \d^{-2}$ and even beyond this regime. We end this section by applying the results obtained above to the case \eqref{asp:tauexpd}. We first give a well known and important lemma \cite{colton2012inverse,dyatlov2019mathematical}. We reproduce its proof here for completeness.} 

\begin{lemma} \label{lem:location}
For the operators $\A_\d(\tau,\ww)$ in \eqref{redef:Ad} with $\tau > 0$ and $\d > 0$, the poles of $\A_\d(\tau,\dd)^{-1}$
are located in the lower half--plane $\{\ww \in \C\,;\  \im \ww < 0\}$, and are symmetric with respect to the imaginary axis.
\end{lemma}

\begin{proof}
The symmetry of the poles follows from the relation:
\begin{equation*}
    \overline{\A_\d(\tau,\ww)[E]} = (\tau^{-1} -  \tbb^{-\d \bar{\ww}}) \left[\bar{E}\right] = \A_\d(\tau,-\bar{\ww})\left[\bar{E}\right],
\end{equation*}
from which we can conclude if $\ww$ is a characteristic value, so is $-\bar{\ww}$. As a consequence of Rellich's lemma \cite[Theorem 6.10]{colton2012inverse}, we have 
that there are no real characteristic values of $\A_\d(\tau,\ww)$ for $\tau > 0$ (see \cite[Theorem 2.1]{costabel2011kleinman}). 
To finish the proof of the lemma, we show that if $(\ww,E^\ww) \in \C \times {\bf H}(\ddiv0,D)$ satisfies $\A_\d(\tau,\ww)[E^\ww] = 0$ with $\im \ww > 0$, then the associated eigenfunction $E^\ww$ must be zero. We let $u = \tbb^{\d\ww}[E] \in {\bf H}_{\rm loc}(\ccurl,\R^3)$, 
which exponentially decays as $|x| \to \infty$, due to $\im \ww > 0$. Then, integrations by parts outside and inside the domain lead to
\begin{equation} \label{eq:auxitp_1}
    \int_{\R^3\backslash \bar{D}} |\curl u| - (\d\ww)^2 |u|^2 dx = \int_{\p D}\n \t \curl u \dd \bar{u} d\sigma\,,
\end{equation}
and 
\begin{equation} \label{eq:auxitp_2}
     \int_D   |\curl u|^2 - (\d\ww)^2 (1 + \tau) |u|^2 dx  =  -\int_{\p D} \n \t \curl u \dd \bar{u} d \sigma\,,
\end{equation}
respectively. We hence have, by taking the imaginary parts of both sides of \eqref{eq:auxitp_1} and \eqref{eq:auxitp_2}, 
\begin{equation*}
    \im \int_{\p D} \n \t \curl u \dd \bar{u} d \sigma = - \im (\d\ww)^2 \int_{\R^3 \backslash \bar{D}} |u|^2 dx = \im (\d\ww)^2 (1 + \tau)\int_D |u|^2 dx  =0 \,.
\end{equation*}
If $\ww \neq i$ ($\im \ww^2 \neq 0$), the above formula directly yields $u = 0$. If $\ww = i$, then we observe from \eqref{eq:auxitp_1} and \eqref{eq:auxitp_2} that 
\begin{equation*}
    \int_{\R^3\backslash \bar{D}} |\curl u| + \d^2 |u|^2 dx =  \int_D   |\curl u|^2 + \d^2 (1 + \tau) |u|^2 dx = 0\,,
\end{equation*}
which also implies $ u = 0$. The proof is complete.
\end{proof}
\mbb{
We note that under the relation \eqref{asp:tauexpd}, \eqref{eq:existence_resonance} gives
\begin{align*}
\ww_{0,j}(\d) := \h{\ww}_{0,j}(\d^{-2}) \in \Omega(\d,\tau)\,, \q j = 1,\ldots, m_0\,,
\end{align*}
and then Theorem \ref{thm:splitting} readily applies to $\ww_{0,j}(\d)$. 
We summarize the properties of $\ww_{0,j}(\d)$ in the following theorem. 

}

\begin{theorem} \label{thm:exitphydsr}
\mbb{Suppose there holds $\tau = \d^{-2}$. Then for small enough $\d$, the meromorphic function $\A_\d(\d^{-2},\dd)^{-1}$ has $m_0$ poles $\{\ww_{0,j}(\d)\}_{j = 1}^{m_0}$ in the fourth quadrant $\{\ww\in\C\,;\ \re \ww > 0\,,\ \im \ww < 0\}$ near $\sqrt{\lad_0}^{\sss -1}$, which are the dielectric subwavelength resonances in the sense of Definition \ref{def:drs_1}. Moreover, 
$\ww_{0,j}(\d) =  \h{\ww}_{0,j}(\d^{-2})$ satisfies Theorem \ref{thm:splitting}. 
    That is, $\{\ww_{0,j}(\d)\}_{j = 1}^{m_0}$ has $\w{m}_0$ distinct values given by one or several algebraic functions with possible algebraic singularities at $\d = 0$. In particular, $\ww_{0,j}(\d)$ is differentiable at $\d = 0$ with the asymptotic estimate:}
    \begin{align} \label{aim1}
    \mbb{|\ww_{0,j}(\d) - \frac{1}{\sqrt{\lad_0}}|\lesssim \d^2\,.}
\end{align}
\end{theorem}

\section{Multipole radiation and scattering enhancement} \label{sec:multirad}

In this section we consider the behavior of the scattering amplitude of the scattered wave $E^s$ in \eqref{eq:model2}. It is known that for any EM radiating field $E$, its scattering amplitude $E_\infty$, as an analytic tangential vector field on $\S$, can be defined by the following asymptotic expansion:
\begin{equation} \label{addeqe}
    E = \frac{e^{i\ww|x|}}{|x|} E_\infty(\ww,\h{x}) +  O\Big(\frac{1}{|x|^2}\Big)\q \text{as}\ |x| \to \infty\,.
\end{equation}
For our problem \eqref{eq:model2}, recalling the Lippmann--Schwinger representation \eqref{eq:liprep_2}, and exploiting the far--field expansion of the EM Green's tensor \cite{colton2012inverse}:
\begin{equation*} \label{eq:asygreenfar}
    (\I + \frac{1}{\ww^2}\na_x \ddiv_x)g(x-y,\ww) = \frac{e^{i\ww|x|}}{4 \pi |x|}e^{-i\ww \h{x}\dd y}\left(\I - \h{x}\otimes \h{x}\right) + O\Big(\frac{1}{|x|^2}\Big)\q \text{as}\ |x|\to \infty\,,
\end{equation*}
we can easily derive the following integral representation for the scattering amplitude $E_\infty^s$ associated with $E^s$:
\begin{equation} \label{eq:intrepes}
    E^s_\infty(\ww,\h{x}) = \frac{\tau \ww^2}{4 \pi}(\I - \h{x}\otimes \h{x}) \int_{D_\d} e^{-i\ww \h{x}\dd y} E(y)dy\,,
\end{equation}
where the matrix $\I-\h{x}\otimes \h{x}$ is the projection on the tangent plane of the unit sphere $\S$ at $\h{x}$.


\mb{The main aim of this section is to derive the small volume expansion (approximation) of the scattering amplitude $E_\infty^s(\ww,\h{x})$ under the ${\bf L}^\infty_{\rm T}(\S)$-norm with respect to $\d$, which holds uniformly for
incident frequencies $\ww$ near resonances, and then to analyze the enhancement of the scattered field. 
We will also compute the corresponding extinction and scattering cross sections and analyze their blow--up rates. These results shall mathematically justify the strong magnetic responses of nanoparticles with high refractive indices.  Without loss of generality, we introduce the following assumptions for simplicity and clarity:}
\mbb{
\begin{enumerate}[label=\textbf{C.\arabic*},ref=C.\arabic*]
    \item Let the high contrast condition \eqref{asp:tauexpd}: $\tau = \d^{-2}$, $0 < \d \ll 1$, hold. 
    \label{1}
    \item Let $U$ be the punctured neighborhood of $\ww_0 := \sqrt{\lad_0}^{-1}$ such that \eqref{eq:polea0} holds.  \label{2}
    \item Let the reference set $D$ have genus zero and $m_0 = \dim \ker (\lad_0 - \kbdd) = 1$ hold with the corresponding normalized eigenfunction denoted by $E_0$. \label{3}
\end{enumerate}
We assume \eqref{1}, \eqref{2} and \eqref{3} hold in this whole section, but in order to avoid any confusion we shall explicitly point out which assumptions are used for each result and definition. 
In view of \eqref{1}, by abuse of notation, we define 
\begin{equation*}
    \AA(\d, \ww) := \h{\AA}(\d^{-2},\ww)\,, \q 0 < \d \ll 1\,,\ \ww \in \C\,.
\end{equation*}
where $\h{\AA}(\dd,\dd)$ is the analytic family of operators given in \eqref{def:aadw}. Then, \eqref{eq:asyeqahat} yields the expansion of $\AA(\d,\ww)$: 
\begin{align} \label{eq:asyeqahat22} 
    \AA(\d,\ww) = \AA_0(\ww) + \d^2 \AA_2(\ww) + O(\d^3) \q \text{as}\ \d \to 0\,.
\end{align}


As we have emphasized earlier (before Lemma \ref{lem:location}), although $\tau = \d^{-2}$ is assumed here, the general assumption $\tau \sim \d^{-2}$ in \eqref{assp:taudelta} is enough for most results and formulas below. In view of this, we keep the notation $\tau$ as much as possible in the following exposition.}




\subsection{Multipole expansion for the scattering amplitude} 

To find a uniformly valid small volume approximation of $E_\infty^s$, we shall first develop a new framework for the full Cartesian multipole expansions of the scattering amplitudes.
We start with the following formula from \eqref{eq:intrepes}, by the Taylor expansion of $e^{-i\ww\h{x}\dd y}$ and the change of variable $y = \d \w{y}$,
\begin{equation} \label{eq:cartesp}
    E^s_\infty(\ww,\h{x}) =  \frac{\tau \ww^2}{4 \pi}(\I - \h{x}\otimes \h{x}) \int_D \sum_{n = 0}^\infty \d^{n+3} \frac{(-i \ww \h{x}\dd \w{y})^n}{n!} \w{E}(\w{y})d\w{y}\q \text{with}\ \w{E}(\w{y}) := E(\d\w{y})\,,
\end{equation}
which is used in physics \cite{jackson1999classical,evlyukhin2016optical} as a starting point for deriving the classical multipole moment expansion; see \eqref{eq:multiradiation2} below. However, as pointed out in \cite{jackson1999classical}, to arrive at \eqref{eq:multiradiation2} from \eqref{eq:cartesp}, 
one needs different calculation techniques to disentangle the electric and magnetic multipole fields from each term in the expansion \eqref{eq:cartesp}, and the details and technicalities involved become 
increasingly prohibitive when the order $n$ increases.
Instead, we take a different approach for the multipole moment expansion, with the help of the Helmholtz decomposition \eqref{eq:helmddifree} and Lemma \ref{lem:forappall} below, the advantage of which is that 
it can handle all the orders in a unified manner.

The following lemma is a consequence of integration by parts.

\begin{lemma} \label{lem:forappall}
For $\vp \in {\bf H}_0(\ccurl, D)$, we have
\begin{align} \label{eq:maginbp}
     \int_D (\h{x}\dd y)^l  \curl \vp (y) dy = \int_D  l(\h{x}\dd y)^{l-1}\vp(y)  dy \t \h{x}\,, \q l \ge 1\,.
\end{align}
\end{lemma}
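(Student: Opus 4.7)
The plan is to apply a standard integration-by-parts identity for the curl to the scalar $f(y) = (\h{x}\dd y)^l$ against $\curl \vp$, and then exploit the vanishing tangential trace of $\vp$ on $\p B$ encoded in the space $H_0(\ccurl, B)$ to kill the resulting boundary term.

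First, I would establish the following componentwise identity: for any smooth scalar $f$ and any $\vp \in H_0(\ccurl, B)$,
\begin{equation*}
    \int_B f \,\curl \vp \, dy \,=\, \int_B \vp \t \na f \, dy \,+\, \int_{\p B} f\, (\n \t \vp) \, d\sigma\,.
\end{equation*}
This is obtained by writing $(\curl \vp)_i = \epsilon_{ijk}\p_j \vp_k$, integrating by parts in each component (which is justified for $\vp$ only in $H(\ccurl, B)$ by first approximating by smooth functions and then invoking the continuity of the tangential trace $\vp \mapsto \n \t \vp$ from $H(\ccurl, B)$ to $H^{-1/2}(\p B)$), and using $\epsilon_{ijk}\p_j f \vp_k = -(\vp\t \na f)_i$ together with $\epsilon_{ijk} n_j \vp_k = (\n\t\vp)_i$. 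Since $\vp \in H_0(\ccurl, B)$ precisely means $\n \t \vp = 0$ on $\p B$, the boundary term drops out.

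Next I would substitute $f(y) = (\h{x}\dd y)^l$, for which $\na f(y) = l (\h{x}\dd y)^{l-1} \h{x}$, and pull the constant vector $\h{x}$ out of the cross product:
\begin{equation*}
    \int_B (\h{x}\dd y)^l \curl \vp(y) \, dy \,=\, \int_B l(\h{x}\dd y)^{l-1}\, \vp(y) \t \h{x} \, dy \,=\, \Bigl(\int_B l(\h{x}\dd y)^{l-1}\vp(y)\, dy\Bigr) \t \h{x}\,,
\end{equation*}
which is the claimed identity.

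The only non-routine point is the justification of the boundary term's vanishing at the level of regularity $H_0(\ccurl, B)$; this is the main obstacle but is standard. One approximates $\vp$ in the $H(\ccurl, B)$-norm by fields from a dense subspace of $H_0(\ccurl, B)$ (e.g., $C_c^\infty(B,\R^3)$), passes to the limit in both sides of the identity using the continuity of the tangential trace mapping and of $\vp \mapsto \int_B \vp \t \na f \, dy$ with $f$ smooth, and obtains the formula in the full space $H_0(\ccurl, B)$.
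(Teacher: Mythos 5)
Your proof is correct, and since the paper states Lemma \ref{lem:forappall} without a proof, it is reasonable to assume the intended argument is exactly this: the vector integration-by-parts identity $\int_B f\,\curl\vp\,dy = \int_B \vp\times\nabla f\,dy + \int_{\partial B} f\,(\nu\times\vp)\,d\sigma$ combined with the vanishing of the tangential trace for $\vp\in H_0(\ccurl,B)$ and the observation $\nabla\big[(\h{x}\cdot y)^l\big]=l(\h{x}\cdot y)^{l-1}\h{x}$, with $\h{x}$ constant so it factors out of the integral. The density argument from $C_c^\infty(B,\R^3)$ to $H_0(\ccurl,B)$ and the continuity of both sides in the $H(\ccurl,B)$-norm are correctly invoked to justify the step at the claimed level of regularity.
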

For the open set $D$ with genus zero, we have written the following decomposition for $\psi \in {\bf H}(\ddiv0,D)$ in \eqref{eq:helmgezero}:
\begin{equation} \label{eq:decomstand}
\psi = \pd \psi + \pw \psi \q \text{with}\q  \pd \psi = \curl \vp\q \text{and}\q \pw \psi = \na p\,,
\end{equation}  
where $\vp$ is uniquely determined in $\w{X}_N^0(D)$ and $p \in H^1(D)$ is unique up to a constant. In view of the above decomposition \eqref{eq:decomstand} and the physical explanations provided after \eqref{eq:helmgezero}, it is natural for us to separate the physically different electric and magnetic radiations by writing 
\begin{equation} \label{eq:sepemet}
    \int_D (\h{x}
    \dd \w{y})^l \w{E}(\w{y})d\w{y} = \int_D (\h{x}
    \dd \w{y})^l \left(\pd\w{E}(\w{y}) + \pw \w{E}(\w{y})\right)d\w{y}\,, 
\end{equation} 
for each term in the sum \eqref{eq:cartesp}, and introduce the following new class of electric and magnetic multipole moments based on Lemma \ref{lem:forappall} and \eqref{eq:decomstand}.

\begin{definition} \label{def:EMmoment}
\mb{Under the condition \eqref{3},} for  a divergence--free vector field $\psi \in {\bf H}(\ddiv0,D)$ with the Helmholtz decomposition \eqref{eq:decomstand}, we define 
the magnetic $l-$moment, as an $l-$tensor on the 
 $l-$ary Cartesian power of $\R^3$: $\Pi^l \R^3$, by 
    \begin{equation*}
\mq^\psi_l: =  
    \int_D l \vp(y) \otimes \left(\otimes^{l-1}y\right) dy \,, \q  \ l = 1,2,\cdots \q \text{and}\q  \mq^\psi_l = 0\,,\q l = 0\,,
\end{equation*}
and the electric $l-$moment, as an $(l+1)-$tensor on $\Pi^{l+1} \R^3$, by
\begin{equation*}
   \eq^\psi_l := \int_{D} \na p(y) \otimes\left( \otimes^l y \right)d\sigma(y)\,,\q l = 0, 1,\cdots\,.
\end{equation*}
\end{definition}

We point out that $\mq^\psi_l$ is well--defined for $\vp \in \w{X}_N^0(D)$, since so is \eqref{eq:maginbp}.

\begin{remark} \label{rem:cntnewandclas}
 Definition \ref{def:EMmoment} may provide a new interpretation, as well as a new derivation, for those EM moments occurring in the physics literature. For instance,
 in classical electrodynamics, given a current distribution ${\bf J}$, the corresponding electric dipole moment $\ed$ and magnetic dipole moment $\md$ are given by 
\begin{equation*}
    \ed = \frac{1}{i\ww}\int x \ddiv {\bf J}(x) dx \q \text{and}\q  \md = \frac{1}{2}\int x \t {\bf J}(x)dx\,,
\end{equation*}
respectively; see \cite[Chapter 9]{jackson1999classical}. If the distribution ${\bf J}$ admits the decomposition ${\bf J} = \curl A + \na p$, a formal integration by parts gives
\begin{equation*}
    \ed = \frac{1}{i\ww}\int x \ddiv \na p = - \frac{1}{i\ww}\int \na p\,, \q \md = \frac{1}{2}\int x \t \left(\curl A\right) = \int A.
\end{equation*}
Thus these physically defined dipole moments are (at least formally) equal to our newly introduced  EM moments $\eq^{\psi}_0$ and $\mq^{\psi}_1$,  up to constant factors. A similar formal verification can be conducted for the higher--order moments.
\end{remark}

We remark that an $l-$tensor is a multilinear function on the product vector space. By convention, we identify the dual space of $\R^3$ with itself. Therefore, the linear mappings 
$\mq^\psi_l(\dd,\Pi^{l-1}\h{x})$ and $\eq^\psi_l(\dd,\Pi^{l}\h{x})$ on $\R^3$ 
can be regarded as vectors. 
With the help of these notions, there holds, by \eqref{eq:sepemet} and Definition \ref{def:EMmoment}, 
\begin{equation} \label{eq:cassepem}
\int_D (\h{x}
    \dd \w{y})^l \w{E}(\w{y})d\w{y} = \mq_l^{\w{E}}(\dd,\Pi^{l-1} \h{x})\t \h{x} +  \eq_l^{\w{E}}(\dd ,\Pi^l \h{x}) \,,
\end{equation}
which, combined with the expansion \eqref{eq:cartesp}, implies the following proposition.

\begin{proposition} \label{lem:samfuexp}
\mb{Under the condition \eqref{3},} the scattering amplitude $E_\infty^s(\ww,\h{x})$ corresponding to the problem \eqref{eq:model2} has the following multipole moment expansion:
\begin{equation} \label{eq:samfuexp}
    E^s_\infty(\ww,\h{x}) =  \frac{\tau \ww^2 \d^3}{4 \pi}(\I - \h{x}\otimes \h{x})\sum_{l = 0}^\infty  \frac{(-i\d\ww)^l}{l!} \left(\mq_l^{\w{E}}(\dd,\Pi^{l-1} \h{x})\t \h{x} +  \eq_l^{\w{E}}(\dd ,\Pi^l \h{x}) \right) \,,
\end{equation}
where the field $\w{E} \in {\bf H}(\ddiv0, D)$ is the solution to the equation: 
\begin{equation} \label{eq:targetsolve} 
    (1 - \tau \TT_D^{\d \ww})[\w{E}] = \w{E}^i \q \text{with}\q   \w{E}^i(\w{x}) = E^i(\d \w{x})\,.
\end{equation}
\end{proposition}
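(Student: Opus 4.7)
The plan is to start from the integral representation \eqref{eq:intrepes} and derive the stated multipole expansion by applying the Helmholtz decomposition term-by-term inside the Taylor series \eqref{eq:cartesp}. First, I would justify the equation \eqref{eq:targetsolve} satisfied by $\w{E}$: starting from the Lippmann-Schwinger representation \eqref{eq:liprep_1}, rescaling $x = \d\w{x}$ and using the scaling identity $\td[E](x) = \mathcal{T}_B^{\d\ww}[\w{E}](\w{x})$ noted in Section~\ref{sec:prosetanddef}, together with the fact that $E^i$ restricted to $D$ becomes $\w{E}^i$ on $B$ after this change of variables, immediately yields $(1 - \tau \TT_B^{\d\ww})[\w E] = \w E^i$ on $B$.

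Next, for the expansion itself, I would take \eqref{eq:cartesp} as the starting identity (itself obtained from \eqref{eq:intrepes} by Taylor expanding $e^{-i\ww\h x\cdot y}$ and changing variables $y=\d\w y$) and work term-by-term on
\begin{equation*}
I_l(\h x) \,:=\, \int_B (\h x\cdot \w y)^l \,\w E(\w y)\, d\w y.
\end{equation*}
Since $B$ is assumed to have genus zero, Proposition~\ref{prop:helmddifree} specializes to \eqref{eq:helmgezero}, so I can write $\w E = \pd \w E + \pw \w E = \curl \vp + \na p$ with $\vp \in \w X_N^0(B)$ and $p \in H^1(B)$. Splitting $I_l$ according to this decomposition gives two pieces: one with $\curl \vp$ and one with $\na p$. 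For the electric piece, by definition of $\eq_l^{\w E}$ in Definition~\ref{def:EMmoment} one directly obtains
\begin{equation*}
\int_B (\h x\cdot \w y)^l \,\na p(\w y)\, d\w y \,=\, \eq_l^{\w E}(\,\cdot\,, \Pi^l \h x),
\end{equation*}
after identifying the $(l+1)$-tensor $\eq_l^{\w E}$ with the vector obtained by contracting its last $l$ slots against $\h x$. For the magnetic piece, I apply Lemma~\ref{lem:forappall} (which is valid because $\vp \in \w X_N^0(B) \subset H_0(\ccurl,B)$ and the right-hand side of \eqref{eq:maginbp} is well-defined on the quotient) with $l$ replaced by the current Taylor index, obtaining
\begin{equation*}
\int_B (\h x\cdot \w y)^l \curl\vp(\w y)\, d\w y \,=\, \Bigl(\int_B l\,(\h x\cdot \w y)^{l-1}\vp(\w y)\,d\w y\Bigr)\t \h x \,=\, \mq_l^{\w E}(\,\cdot\,, \Pi^{l-1}\h x)\t \h x,
\end{equation*}
again identifying the resulting tensor contraction as a vector. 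The case $l=0$ is consistent with the convention $\mq_0^{\w E}=0$, since the corresponding integrand vanishes.

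Finally, I would substitute these two identifications into \eqref{eq:cartesp} and collect terms to obtain \eqref{eq:samfuexp}. The convergence and termwise validity of the operations (interchange of sum and integral, tensor identification) should cause no difficulty since the Taylor series for $e^{-i\d\ww\h x\cdot \w y}$ converges absolutely and uniformly on the bounded set $B$, and $\w E \in L^2(B,\R^3) \supset H(\ddiv 0,B)$ makes every integral finite. I expect the main conceptual obstacle to be purely notational rather than analytic: one must carefully track the tensorial interpretation of $\mq_l^{\w E}$ and $\eq_l^{\w E}$ and verify that the contraction against the $(l-1)$-fold and $l$-fold powers of $\h x$ reproduces exactly the vector $\int_B (\h x\cdot \w y)^l \w E(\w y) d\w y$ under the decomposition \eqref{eq:sepemet}. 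The rest of the derivation is a direct assembly.
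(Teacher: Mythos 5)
Your proposal is correct and follows essentially the same route as the paper: start from the integral representation \eqref{eq:intrepes}, Taylor expand to reach \eqref{eq:cartesp}, split each term via the genus-zero Helmholtz decomposition $\w{E}=\curl\vp+\na p$, identify the $\na p$ piece with $\eq_l^{\w E}(\cdot,\Pi^l\h x)$ directly from Definition~\ref{def:EMmoment}, apply Lemma~\ref{lem:forappall} to convert the $\curl\vp$ piece into $\mq_l^{\w E}(\cdot,\Pi^{l-1}\h x)\t\h x$ (noting well-definedness on the quotient $\w X_N^0(B)$), and reassemble to get \eqref{eq:samfuexp}; the verification of \eqref{eq:targetsolve} by rescaling the Lippmann--Schwinger equation is likewise the intended argument. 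No gaps.
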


Since we are working in the quasi--static regime (i.e., $\d$ is small enough), we may expect that the first several terms in \eqref{eq:cartesp} (or \eqref{eq:samfuexp}) are sufficient to characterize the behavior of $E^s_\infty$. Clearly, to approximate $E^s_\infty$ up to an error of a certain order, we need to further derive the asymptotic expansion of the solution $\w{E}$ to \eqref{eq:targetsolve} in terms of $\d$. Consider the system below that is equivalent to \eqref{eq:targetsolve}:
\begin{equation} \label{eq:targetsys}
    \AA(\d,\ww) \mm \pd \w{E}\\  \pw \w{E} \nn 
    = \mm \pd \w{E}^i \\ \tau^{-1} \pw \w{E}^i \nn.
\end{equation}
\mb{We have the following lemma regarding the inverse $ \AA(\d,\ww)^{-1}$, by a Neumann series argument.} 
\mbb{\begin{lemma} \label{lem:estreol}
Under the conditions \eqref{1} and \eqref{2},
      there exists constant $C_A$, depending on the neighborhood $U$, such that for $\ww \in U$, $0 \le \d \le \sqrt{C_A |\ww - \ww_0|}$, it holds that   
  \begin{align}  \label{est:neum_1}
    \norm{\AA(\d,\ww)^{-1}}_{\L(\xx,\xx)} \lesssim \frac{1}{|\ww - \ww_0|} \,,
\end{align}
\begin{align} \label{est:neum_2}
    \Big\| \Big(\sum_{n = k}^\infty (-1)^n \big(\AA_0(\ww)^{-1} (\AA(\d,\ww) - \AA_0(\ww))\big)^n \AA_0(\ww)^{-1} \mm f \\ g \nn \Big)_1 \Big\|_{{\bf L}^2(D)}  \lesssim \frac{ (2 C_A)^{-k} \d^{2k}}{|\ww - \ww_0|^{k+1}}(\norm{f}_{{\bf L}^2(D)} + \norm{g}_{{\bf L}^2(D)})\,, 
\end{align}
and  
\begin{align} \label{est:neum_3}
    \Big\| \Big(\sum_{n = k}^\infty (-1)^n \big(\AA_0(\ww)^{-1} (\AA(\d,\ww) - \AA_0(\ww))\big)^n \AA_0(\ww)^{-1} \mm f \\ g \nn \Big)_2 \Big\|_{{\bf L}^2(D)} \lesssim \frac{(2 C_A)^{-k} \d^{2k}}{|\ww - \ww_0|^{k}}(\norm{f}_{{\bf L}^2(D)} + \norm{g}_{{\bf L}^2(D)})\,.
\end{align}
\end{lemma}

\begin{proof}
We write $\AA(\d,\ww) = \AA_0(\ww) + \mathbb{B}(\d,\ww)$ with $\mathbb{B}(\d,\ww) =  \AA(\d,\ww) - \AA_0(\ww)$. By \eqref{eq:polea0} and \eqref{eq:asyeqahat22}, it is easy to see that for $\ww$ in the punctured neighborhood $U$ of $\ww_0$ and $0 \le \d \ll 1$, there holds 
\begin{align} \label{auxeqq_neu_3}
    \norm{\AA_0(\ww)^{-1}\mathbb{B}(\d,\ww)}_{\L(\xx,\xx)} \le \frac{(2 C_A)^{-1} \d^2}{|\ww - \ww_0|}\,,
\end{align}
with the constant $C_A$ independent of $\ww$ and $\d$. Then, let $\d_0 > 0$ be defined by $(2C_A)^{-1} \d_0^2/|\ww - \ww_0| = 1/2$. We have  
for $\ww \in U$ and $0 \le  \d \le \d_0$, $\AA(\d,\ww)^{-1}$ admits a Neumann series expansion:
\begin{align*}
\AA(\d,\ww)^{-1} = \sum_{n = 0}^\infty  (-1)^n \big(\AA_0(\ww)^{-1}\mathbb{B}(\d,\ww) \big)^n \AA_0(\ww)^{-1}\,,
\end{align*}
which readily implies, by \eqref{eq:polea0}, 
\begin{align*}
    \norm{\AA(\d,\ww)^{-1}}_{\L(\xx,\xx)} \lesssim \norm{\AA_0(\ww)^{-1}}_{\L(\xx,\xx)} \lesssim \frac{1}{|\ww - \ww_0|} \,.   
\end{align*}
For the estimates \eqref{est:neum_2} and  \eqref{est:neum_3}, 
we recall
\eqref{eq:invera00} and obtain, by a simple estimate, 
\begin{align} \label{auxeqq_neu_1}
   \resizebox{.90\hsize}{!}{$
    \Big\| (\AA_0(\ww)^{-1} \mm f \\ g \nn)_1 \Big\|_{{\bf L}^2(D)} \lesssim \frac{1}{|\ww - \ww_0|} (\norm{f}_{{\bf L}^2(D)} + \norm{g}_{{\bf L}^2(D)}) \,,\q \Big\| (\AA_0(\ww)^{-1} \mm f \\ g \nn)_2 \Big\|_{{\bf L}^2(D)} \lesssim  \norm{g}_{{\bf L}^2(D)}\,. $}
\end{align}
Then, by \eqref{auxeqq_neu_3}, \eqref{auxeqq_neu_1} and noting
\begin{equation*}
    \big(\AA_0(\ww)^{-1}\mathbb{B}(\d,\ww)\big)^n \AA_0(\ww)^{-1} =    \AA_0(\ww)^{-1} \Big(\mathbb{B}(\d,\ww)\big(\AA_0(\ww)^{-1}\mathbb{B}(\d,\ww)\big)^{n - 1} \AA_0(\ww)^{-1} \Big)\,,
\end{equation*}
we have
\begin{align*}
  &\Big\|  \big(\big(\AA_0(\ww)^{-1}\mathbb{B}(\d,\ww)\big)^n \AA_0(\ww)^{-1} \mm f \\ g \nn \big)_1 \Big\|_{{\bf L}^2(D)} \lesssim \frac{ (2 C_A)^{-n} \d^{2n}}{|\ww - \ww_0|^{n+1}}(\norm{f}_{{\bf L}^2(D)} + \norm{g}_{{\bf L}^2(D)})\,,
\end{align*}
and 
\begin{align*}
  &\Big\|  \big(\big(\AA_0(\ww)^{-1}\mathbb{B}(\d,\ww)\big)^n \AA_0(\ww)^{-1} \mm f \\ g \nn \big)_2 \Big\|_{{\bf L}^2(D)} \lesssim \frac{(2 C_A)^{-n} \d^{2n}}{|\ww - \ww_0|^{n}}(\norm{f}_{{\bf L}^2(D)} + \norm{g}_{{\bf L}^2(D)})\,,
\end{align*}
where the generic constants are independent of $n$, $\ww$ and $\d$. The proof is complete by taking the
sum of the above two estimates from $n = k$ to infinity respectively. 
\end{proof}}


By Parseval's identity and the Green's formula \eqref{eq:greenfor_1}, we observe $(\w{E}^i(0),E_n\big)_{{\bf L}^2(D)} = 0$ and 
\begin{equation*}
    \big\|\pd \w{E}^i\big\|_{{\bf L}^2(D)}^2 = \sum_{n = 0}^\infty \big|\big(\w{E}^i - \w{E}^i(0),E_n\big)_{{\bf L}^2(D)}\big|^2\,,
\end{equation*}
which implies 
\begin{equation} \label{auxeq:targetsysrhs}
    \pd \w{E}^i = O(\d) \q \text{and}\q  [\pd \w{E}^i, \tau^{-1} \pw \w{E}^i] = O(\d)\,,
\end{equation}
since $\w{E}^i - \w{E}^i(0) = O(\d)$. Then, by the estimate \eqref{est:neum_1}, we have 
\begin{equation} \label{eq:targetsyscoro}
\mbb{   \w{E} = O(\d/|\ww - \ww_0|)\,,}
\end{equation}
and the following lemma holds from the expansion \eqref{eq:cartesp}.
\begin{lemma}  \label{lem:multiradapp}
\mb{Under the conditions \eqref{1} and \eqref{2}, for $\ww \in U$ and $0 \le \d \le  \sqrt{C_A |\ww - \ww_0|}$,} the scattering amplitude $E_\infty^s$ can be approximated by  
\begin{align}  \label{eq:multiradiation3} 
    E_\infty^s = \frac{\tau \ww^2 \d^{3}}{4 \pi}(\I - \h{x}\otimes \h{x}) \Big\{\int_D \w{E}(\w{y})d\w{y} - i \d \ww  &\int_D \h{x} \dd \w{y} \w{E}(\w{y}) d\w{y} - \frac{\d^2 \ww^2}{2} \int_D (\h{x}\dd \w{y})^2\w{E}(\w{y})d\w{y} \Big\} + \mbb{ O\Big(\frac{\d^5}{|\ww - \ww_0|}\Big).}
\end{align}
\end{lemma}

\begin{remark} \label{rem:compare_non}
If $\tau$ is a constant of order one, we will have $[\pd \w{E}^i, \tau^{-1} \pw \w{E}^i] = O(1)$, and hence also $\w{E} = O(1)$. In this case, the error in the approximation \eqref{eq:multiradiation3} would be of order $\d^6$. By using \eqref{eq:cassepem} to separate the electric and magnetic multipoles, we can derive the following formula in classical electrodynamics \cite{jackson1999classical,evlyukhin2016optical}: 
\begin{align} 
    E^s_\infty(\ww,\h{x}) \approx \frac{\ww^2}{4 \pi}\big\{ \underbrace{\h{x}\t (\ed \t \h{x})}_{O(\d^3)} + \underbrace{\md \t \h{x} 
    + \h{x} \t (\h{x} \t \eq \h{x})}_{O(\d^4)} + \underbrace{\h{x} \t \mq \h{x}  + \h{x} \t \left(\h{x} \t \eo(\dd,\h{x},\h{x})\right)}_{O(\d^5)}\big\} \,,\label{eq:multiradiation2}
\end{align} 
where the electric dipole (ED) $\ed$, magnetic dipole (MD) $\md$, electric quadrupole (EQ) $\eq$, magnetic quadrupole (MQ) $\mq$ and electric octupole (EO) $\eo$ are given by $\eq_0^{\w{E}}$, $\mq_1^{\w{E}}$, $\eq_1^{\w{E}}$,  $\mq_2^{\w{E}}$ and $\eq_2^{\w{E}}$, respectively, up to some constant factors. 
The approximation \eqref{eq:multiradiation2} permits us to conclude that in the quasi--static regime, any dielectric nanoparticle behaves like an electric dipole in the far field, and its magnetic dipole radiation is a higher--order term. However, we shall see in Theorem \ref{thm:mainmultimom} below that in the case of high refractive index nanoparticles, the orders of these lower--order moments will be completely changed, and \mb{such change} does not depend on the excitation of dielectric resonances. 
\end{remark}

We return to our goal of finding the asymptotics of the solution $\w{E}$ to \eqref{eq:targetsys}. In view of Lemma \ref{lem:multiradapp} and the formula:
\begin{align*}
    \int_D \w{E} = \int_D \pw \w{E} 
\end{align*}
from $\int_D \pd \w{E} = 0$, in order to approximate the scattering amplitude $E^s_\infty$ up to $O(\d^5)$, we need to compute the second--order approximation of the field $\w{E}$, and also, particularly, the third--order approximation of $\pw \w{E}$. For this, by \eqref{eq:targetsyscoro}, it suffices to consider the second--order Neumann series expansion of $\AA(\d,\ww)^{-1}$:
\begin{align} \label{eq:neuexpad}
    \AA(\d,\ww)^{-1} =  &(I + \d \AA_0^{-1} \AA_1 + \d^2 \AA_0^{-1}\AA_2 + O(\d^3))^{-1} \AA^{-1}_0  \notag \\
    = &\AA^{-1}_0 - \d^2 \AA_0^{-1}\AA_2 \AA^{-1}_0  + O(\d^3)\,,
\end{align}
which follows from \eqref{eq:asyeqahat22}. By acting the expansion \eqref{eq:neuexpad} on $(\pd\w{E}^i, \tau^{-1}\pw \w{E}^i)$, we have the following proposition, the detailed proof of which is given in Appendix \ref{app:prfapp}. 

\begin{proposition} \label{prop:ledsolmain}
\mb{Under the conditions \eqref{1} and \eqref{2},} 
the solution to the equation \eqref{eq:targetsys} has the expansion: 
   \begin{equation} \label{eq:repsolmain}
    \resizebox{.90\hsize}{!}{$
        \left\{  \begin{aligned}
              &\pd \w{E} = \frac{-\ww_0}{2(\ww-\ww_0)} \P_{\lad_0}\w{E}^i + \frac{\ww_0  \tau^{-1} \ww^2}{2(\ww-\ww_0)}  \poo \kbdw (\tbb|_W)^{-1} \pw \big[\w{E}^i\big] + 
              \d \rr(\ww) + \mb{O\Big( \frac{\d^3}{|\ww-\ww_0|^2} \Big)}
              \,, \\
              & \pw \w{E} = - \tau^{-1}(\tbb|_W)^{-1} \pw \big[\w{E}^i\big] + \frac{\d^2\ww^2 \ww_0}{2(\ww-\ww_0)} (\tbb|_W)^{-1}\kbwd \P_{\lad_0} \big[\w{E}^i\big]  + 
              \d^3 \rr(\ww) + 
              \mb{O\Big(\frac{\d^4}{|\ww-\ww_0|} + \frac{\d^5}{|\ww - \ww_0|^2}\Big)}\,,
          \end{aligned}\right. $}
      \end{equation}
      for $\ww \in U$ and $0 \le \d \le \sqrt{C_A |\ww - \ww_0|}$.
\end{proposition}
\mbb{
Several remarks are in order. First, $\d^n \rr(\ww)$ means a term that allows a power series expansion in $\d$ of order $n$ with analytic coefficients, but it is convenient to understand it as a term which is of order $\d^n$, and analytic in a small neighborhood of $\ww_0$. Second, the remainder term of $\pw \w{E}$ in \eqref{eq:repsolmain} is estimated by $\d^4/|\ww - \ww_0|$ and $\d^5/|\ww - \ww_0|^2$, either of which may be dominant in the overall error, if there is no further assumption on the relative rate between $\d \to 0$ and $\ww \to \ww_0$. Third, recalling the representation  \eqref{def:resolmep}, 
Proposition \ref{prop:ledsolmain} essentially gives the approximation of the (scaled) scattering resolvent $(\M_0 - (\d\ww)^2 n^2)^{-1}$ with plane wave sources $f = (\d \ww)^2 \tau \chi_D \w{E}^i$ and $n = \sqrt{1 + \tau \chi_D}$. Indeed, by a simple computation, we have, in this case,
the resolvent is exactly the standard Lippmann--Schwinger formula:
\begin{align*}
  E^s = (\M_0 - (\d\ww)^2 n^2)^{-1} \big[(\d \ww)^2 \chi_D \w{E}^i\big]  = \TT^{\d \ww}_D \A_\d(\tau,\ww)^{-1} [\w{E}^i]\,,
\end{align*}
where $\A_\d(\tau,\ww)^{-1} [\w{E}^i]$ can be approximated by \eqref{eq:repsolmain}.


Proposition \ref{prop:ledsolmain} clearly shows the dominant parts of the $\hzz$ and $W$ components of $\w{E}$, and the blow--up rate and the order of the amplitude of each term involved in it. We are now ready to give the main results of this section. We first observe from the asymptotic formula \eqref{eq:repsolmain} and $\P_{\lad_0}\w{E}^i = O(\d)$ that 
\begin{equation} \label{est:prinpart}
    \pd \w{E} = O\Big(\frac{\d}{|\ww - \ww_0|} \Big)\,, \q \pw \w{E} = O\Big(\d^2 + \frac{\d^3}{|\ww - \ww_0|} \Big)\,.
\end{equation}
To see this, it is sufficient to write \eqref{eq:repsolmain} in the following form: 
 \begin{equation} \label{est:prinparttt}
        \left\{  \begin{aligned}
              &\pd \w{E} =  \frac{\d}{|\ww - \ww_0|} \Big( O(1) + O(\d) + O(|\ww - \ww_0|) 
              + O\Big(\frac{\d^2}{|\ww - \ww_0|} \Big) \Big)
              \,, \\
              & \pw \w{E} = \d^2 \Big(O(1) + O(\d) + O \Big(\frac{\d^2}{|\ww - \ww_0|} \Big) \Big)  + \frac{\d^3}{|\ww - \ww_0|} \Big(O(1) + O\Big(\frac{\d^2}{|\ww - \ww_0|}\Big)\Big) \,, 
          \end{aligned}\right.
\end{equation}
and note that $\d^2/|\ww - \ww_0|$ is bounded by a constant for $\ww \in U$ and $ 0 \le \d \le \sqrt{C_A|\ww - \ww_0|}$. Here, similarly to the remainder term of $\pw \w{E}$ in \eqref{eq:repsolmain}, $\pw \w{E}$ involves two leading--order terms: $- \tau^{-1}(\tbb|_W)^{-1} \pw \big[\w{E}^i\big]$ which is of order $\d^2$, and $\frac{ \d^2\ww^2 \ww_0}{2(\ww-\ww_0)} (\tbb|_W)^{-1} \kbwd\P_{\lad_0} \big[\w{E}^i\big]$ which is of order $\d^3$ but with a blow--up rate $(\ww-\ww_0)^{-1}$.

Then it follows from \eqref{est:prinpart} and Definition \ref{def:EMmoment} that
\begin{equation} \label{est:emmoments}
   \mq_l^{\w{E}} = O\Big(\frac{\d}{|\ww - \ww_0|} \Big)\q \text{for}\q  l = 1,2, \cdots\,,\q \eq_l^{\w{E}} = O\Big(\d^2 + \frac{\d^3}{|\ww - \ww_0|} \Big)\q \text{for}\q l =0, 1,2, \cdots\,.
\end{equation}
Thus, by \eqref{eq:samfuexp} and Lemma \ref{lem:multiradapp}, there are four moments involved: $\eq_l^{\w{E}}$, $l = 0, 1$ and $\mq_l^{\w{E}}$, $l = 1,2$, in order to achieve a fourth--order approximation of $E_\infty^s$ (in terms of $\d$), which is summarized as follows.} 
\begin{theorem} \label{thm:mainmultimom}
\mb{Under the conditions \eqref{1}, \eqref{2} and \eqref{3},} the scattering amplitude $E^s_\infty$ has the asymptotics:
\begin{align} \label{eq:highappfor}
    E^s_\infty(\ww,\h{x}) =  \frac{\tau \ww^2 \d^3}{4 \pi}(\I - \h{x}\otimes \h{x}) \Big\{\underbrace{\big(\eq_0^{\w{E}} - i\d \ww \mq_1^{\w{E}} \t \h{x}  \big)}_{O(\d^2/|\ww - \ww_0|)} - \underbrace{\big(i\d\ww  \eq_1^{\w{E}}\h{x} + \frac{\d^2\ww^2}{2} \mq_2^{\w{E}}(\dd,\h{x}) \t \h{x}\big)}_{O(\d^3/|\ww - \ww_0|)} \Big\} + \mbb{O\Big(\frac{\d^5}{|\ww - \ww_0|}\Big)} \,,
\end{align}
for $\ww \in U$ and $0 \le \d \le \sqrt{C_A |\ww - \ww_0|}$. 
\end{theorem}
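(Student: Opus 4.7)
The plan is to combine the Taylor-truncation of the scattering amplitude from Lemma \ref{lem:multiradapp} with the electric-magnetic decomposition \eqref{eq:cassepem} underlying Proposition \ref{lem:samfuexp}, and then read off the orders of each moment from the dominant-part formula \eqref{eq:appmainsolu} in Proposition \ref{prop:ledsolmain}. The theorem is essentially a book-keeping exercise once these three ingredients are assembled.

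First, I would invoke Lemma \ref{lem:multiradapp} to replace $E^s_\infty$ by $E^{s,{\rm app}}_\infty$ at the expense of an $O(\d^5)$ error, reducing matters to analysing the three integrals $\int_B (\h{x}\cdot \w{y})^l\,\w{E}(\w{y})\,d\w{y}$ for $l=0,1,2$. To each of these I would apply \eqref{eq:cassepem}:
\begin{equation*}
\int_B (\h{x}\cdot \w{y})^l\,\w{E}(\w{y})\,d\w{y} = \mq_l^{\w{E}}(\cdot,\Pi^{l-1}\h{x}) \t \h{x} + \eq_l^{\w{E}}(\cdot,\Pi^{l}\h{x}).
\end{equation*}
For $l=0$ only the electric moment survives, since $\mq_0^{\w{E}}=0$ by Definition \ref{def:EMmoment} (a fact consistent with \eqref{eq:avergezero}, i.e.\ $\int_B \pd \w{E}\,d\w{y}=0$ because $\pd\w{E}=\curl\vp$ with $\vp$ vanishing tangentially on $\p B$).

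The key step is to insert the order information from Proposition \ref{prop:ledsolmain}, which yields $\pd \w{E} = O(\d)$ and $\pw \w{E} = O(\d^2)$. Since $\mq_l^{\w{E}}$ is built from the vector potential $\vp$ of $\pd \w{E}$ while $\eq_l^{\w{E}}$ is built from $\na p = \pw \w{E}$, one obtains at once
\begin{equation*}
\mq_l^{\w{E}} = O(\d) \ \ (l\ge 1), \qquad \eq_l^{\w{E}} = O(\d^2) \ \ (l\ge 0).
\end{equation*}
Plugging these into the three integrals and pairing with the Taylor factors $(-i\d\ww)^l/l!$, the terms inside the braces in \eqref{eq:multiradiation3} acquire the sizes $\eq_0^{\w{E}}=O(\d^2)$, $\d\ww\,\mq_1^{\w{E}}\t\h{x}=O(\d^2)$, $\d\ww\,\eq_1^{\w{E}}\h{x}=O(\d^3)$, $\d^2\ww^2\,\mq_2^{\w{E}}(\cdot,\h{x})\t\h{x}=O(\d^3)$ and $\d^2\ww^2\,\eq_2^{\w{E}}(\cdot,\h{x},\h{x})=O(\d^4)$.

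Because the overall prefactor $\tau\ww^2\d^3$ is of order $\d$ under the assumption $\tau\sim \d^{-2}$, the electric quadrupole contribution $\eq_2^{\w{E}}(\cdot,\h{x},\h{x})$ generates a term of order $\d^5$ that can be absorbed into the remainder. Grouping the surviving terms by their size inside the braces into an $O(\d^2)$ block (electric dipole and magnetic dipole) and an $O(\d^3)$ block (electric quadrupole and magnetic quadrupole) produces \eqref{eq:highappfor}. I do not anticipate any genuine obstacle: all the analytical work---controlling the Taylor remainder in the Lippmann-Schwinger kernel and the delicate order estimates for $\pd \w{E}$ and $\pw \w{E}$ that rely on the upper triangular structure of $\AA_0^{-1}$ and the Neumann-series expansion \eqref{eq:neuexpad}---is already packaged in Lemma \ref{lem:multiradapp} and Proposition \ref{prop:ledsolmain}. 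The only point deserving care is to verify that the separation \eqref{eq:cassepem} applies term-by-term (this needs $B$ of genus zero, as assumed, so that the Helmholtz decomposition \eqref{eq:decomstand} is available for $\w{E}\in H(\ddiv 0,B)$), after which the proof is a direct computation.
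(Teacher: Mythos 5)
Your proof is correct and follows essentially the same route as the paper: truncate the Taylor expansion via Lemma \ref{lem:multiradapp}, separate electric and magnetic contributions via \eqref{eq:cassepem}, and feed in the orders $\pd\w{E}=O(\d)$, $\pw\w{E}=O(\d^2)$ from Proposition \ref{prop:ledsolmain} to conclude $\mq_l^{\w{E}}=O(\d)$, $\eq_l^{\w{E}}=O(\d^2)$, so that the $\eq_2^{\w{E}}$ term drops into the $O(\d^5)$ remainder. The one step you compress (reading $\norm{\vp}\lesssim\norm{\curl\vp}$ from $\vp\in\w{X}_N^0(B)$ so that $\mq_l^{\w{E}}=O(\d)$) is also left implicit in the paper, so there is no substantive gap.
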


By Theorem \ref{thm:mainmultimom} above, we can see the change of the orders of the multipole moments compared to those in the asymptotics \eqref{eq:multiradiation2}. This theoretically justifies the experimentally observed phenomenon \cite{kuznetsov2016optically}: when 
the wavelength inside the particle has the same order as its size (equivalently, $\tau \sim \d^{-2}$), the electric and magnetic dipole 
radiations may have the comparable strengths; see the first term in \eqref{eq:highappfor}.
However, this is true only when $\d$ is small enough so that the effect of the factor $1/|\ww - \ww_0|$ is not apparent. When the incident frequency $\ww$ approximates $\ww_0$ but $\d$ is not so small, we may expect from \eqref{est:emmoments} and \eqref{eq:highappfor} that the contribution of the magnetic dipole
is much stronger than the one of the electric dipole.
\mbb{To make this precise, we focus on the quasi--static approximation of $E^s_\infty(\ww,\h{x})$ from \eqref{eq:highappfor}:
\begin{align*}
    E^s_\infty(\ww,\h{x}) =  \frac{\tau \ww^2 \d^3}{4 \pi}(\I - \h{x}\otimes \h{x}) \left(\eq_0^{\w{E}} - i\d \ww \mq_1^{\w{E}} \t \h{x}  \right) + O\Big(\frac{\d^4}{|\ww - \ww_0|}\Big)\,,
\end{align*}
and proceed to compute $\eq_0^{\w{E}}$ and $\mq_1^{\w{E}}$ as follows.} By Definition \ref{def:EMmoment} and  Proposition \ref{prop:ledsolmain}, we have 
\begin{align} \label{auxest_6}
    \eq_0^{\w{E}} & = \int_D \pw \w{E} = \int_D - \tau^{-1}(\tbb|_W)^{-1} \pw \big[\w{E}^i\big] + \mbb{O\Big(\frac{\d^3}{|\ww-\ww_0|}\Big)}
    \\ 
    & = \tau^{-1} \int_{\p D} y(\frac{1}{2} + \np_{\p D}^*)^{-1}[\n \dd \w{E}^i] d \sigma(y) + \mbb{O\Big(\frac{\d^3}{|\ww-\ww_0|}\Big)} \,,
\end{align}
where we have used the following formula from \eqref{eq:inversetd} and \eqref{eq:greenfor_1}:
\begin{align*} 
    &\int_D (\tbb|_W)^{-1}[\vp] dy = - \int_{\p D} y (\frac{1}{2} + \np_{\p D}^*)^{-1}[\vp \dd \n] d\sigma(y)\,,\q \vp \in W\,.
\end{align*}
For the magnetic dipole moment $\mq_1^{\w{E}}$, we write $E_0 = \curl \vp_0$ for a unique $\vp_0 \in X_N^0(D)$ by Lemma \ref{prop:helmddifree} and obtain
\begin{equation} \label{eq:auxrepthmmain}
    \poo[\dd] = (\dd, \curl \vp_0)_{{\bf L}^2(D)} \curl \vp_0\,.
\end{equation}
By definition, it follows that 
\begin{align} \label{auxest_3}
   \mq_1^{\w{E}} =  \frac{-\ww_0}{2(\ww-\ww_0)} (\w{E}^i, \curl \vp_0)_{{\bf L}^2(D)} \int_D \vp_0(y) dy + \mbb{\d \rr(\ww) + O\Big(\frac{\d^2}{|\ww - \ww_0|} + \frac{\d^3}{|\ww - \ww_0|^2} \Big)}\,.
\end{align}
To further simplify the representations \eqref{auxest_6} and \eqref{auxest_3}, we observe 
\begin{align*}
     (\w{E}^i,E_0)_{{\bf L}^2(D)} 
    = i \d \ww(\di \t \ei e^{i\d \ww \di \dd x} ,\vp_0)_{{\bf L}^2(D)} = i\d \ww (\di \t \ei, \vp_0)_{{\bf L}^2(D)} + O(\d^2)\,,
\end{align*}
and 
\begin{align*}
  \n \dd \w{E}^i  = \n \dd \ei + O(\d)\,,
\end{align*}
by the Taylor expansion of $E^i$. Combining the above calculations, we arrive at the following theorem.

\begin{theorem} \label{thm:mainmultimom_3}
\mb{Under the conditions \eqref{1}, \eqref{2} and \eqref{3},} the scattering amplitude has the quasi--static approximation:  
    \begin{equation} \label{eq:quasiappmom_3}
        E^s_\infty(\ww,\h{x}) = \frac{ \ww^2 }{4 \pi} \{\h{x} \t (\ed\t \h{x}) + \md \t \h{x} \} + O\Big(\frac{\d^4}{|\ww - \ww_0|} + \frac{\d^5}{|\ww - \ww_0|^2} \Big)\,,
    \end{equation}
where the approximate electric dipole $\ed$ is given by 
\begin{equation*}
    \ed = \d^3 \int_{\p D}y \otimes (\frac{1}{2} + \np_{\p D}^*)^{-1}[\n]d \sigma(y)\ei\,, 
\end{equation*}
and the approximate magnetic dipole $\md$ has the pole--pencil expansion:
\begin{equation*}
  \md =  \ww^2 \d^3 \bigg\{\frac{-\ww_0}{2(\ww-\ww_0)} \Big(\int_D \vp_0 \otimes \int_D \overline{\vp_0} \Big) \di \t \ei + \rr(\ww)\bigg\}\,,
\end{equation*}
for $\ww \in U$ and $0 \le \d \le \sqrt{C_A|\ww - \ww_0|}$. 
\end{theorem}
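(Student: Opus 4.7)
The plan is to specialize the fourth-order formula of Theorem \ref{thm:mainmultimom} and the refined representation of $\w{E}$ in Proposition \ref{prop:ledsolmain} to the incident plane wave $\w{E}^i(\w{x}) = \ei \, e^{i\d\ww\di\cdot\w{x}}$, retaining only the contributions of order $\d^3$ and discarding everything in $O(\d^4)$. Inspecting the four multipole moments appearing in \eqref{eq:highappfor}, one sees that $\eq_1^{\w{E}}$ and $\mq_2^{\w{E}}$ are each $O(\d^2)$ and enter with a prefactor $\tau\d^5\ww^2 = O(\d^3)\cdot O(\d)$, hence contribute only to the $O(\d^4)$ remainder; so the quasi-static limit is determined entirely by $\eq_0^{\w{E}}$ (the electric dipole) and $\mq_1^{\w{E}}$ (the magnetic dipole), exactly matching \eqref{eq:quasiappmom_3} after using $\h{x}\t(a\t\h{x}) = (\I-\h{x}\otimes\h{x})a$.

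For the electric dipole, I would use the first line of \eqref{eq:repsolmain} together with the observation that any constant vector is in $W$ (since $\ei = \na(\ei\cdot x)$), so $\pw\w{E}^i = \ei + O(\d)$ and $\pd\w{E}^i = O(\d)$. The leading-order $W$-component of $\w{E}$ is then the non-resonant term $-\tau^{-1}(\tbb|_W)^{-1}\ei$, whereas the $\poo\kbdw(\tbb|_W)^{-1}\pw[\w{E}^i]$ correction blows up as $(\ww-\ww_0)^{-1}$ but carries an extra $\d^2\ww^2$ factor, pushing it to $O(\d^4)$ in the amplitude. Substituting the boundary-integral identity
\begin{equation*}
    \int_B (\tbb|_W)^{-1}[\vp]\,dy = -\int_{\p B} y\,(\tfrac{1}{2}+\np_{\p B}^*)^{-1}[\vp\cdot\n]\,d\sigma(y),
\end{equation*}
into $\eq_0^{\w{E}} = \int_B \pw\w{E}$ and multiplying by $\tau\d^3$ produces $\ed = \d^3 \int_{\p B} y\otimes(\tfrac{1}{2}+\np_{\p B}^*)^{-1}[\n]\,d\sigma(y)\,\ei$, because the factor $\tau^{-1}\cdot\tau = 1$ cancels out the contrast dependence, leaving a purely geometric dipole.

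For the magnetic dipole, I would work with the first line of \eqref{eq:repsolmain} and use $\pd\w{E} = \curl\vp$ with $\vp\in\w{X}_N^0(B)$ to evaluate $\mq_1^{\w{E}} = \int_B \vp\,dy$. The resonant term $C_{{\rm m}}\poo\w{E}^i = C_{{\rm m}}(\w{E}^i,E_0)_{L^2(B)}E_0$ dominates; by orthogonality of $W$ and $\hzz$ one has $(\ei,E_0)_{L^2(B)} = 0$, so the Taylor expansion $\w{E}^i = \ei + i\d\ww(\di\cdot x)\ei + O(\d^2)$ gives $(\w{E}^i,E_0) = i\d\ww(\di\cdot x\,\ei,E_0) + O(\d^2)$. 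With $E_0 = \curl\vp_0$ and $\n\t\vp_0 = 0$ on $\p B$, integration by parts together with $\curl(\di\cdot x\,\ei) = \di\t\ei$ yields
\begin{equation*}
    (\w{E}^i,E_0)_{L^2(B)} = i\d\ww\,(\di\t\ei,\vp_0)_{L^2(B)} + O(\d^2).
\end{equation*}
Choosing the representative $\vp = C_{{\rm m}}(\w{E}^i,E_0)\vp_0 + \ldots$ for $\pd\w{E}$, one finds $\mq_1^{\w{E}} \sim \frac{-i\d\ww\,\ww_0}{2(\ww-\ww_0)}(\di\t\ei,\vp_0)\int_B \vp_0\,dy$. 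Plugging into $\md = -i\tau\d^4\ww\,\mq_1^{\w{E}}$ and rewriting the scalar $(\di\t\ei,\vp_0) = (\int_B \overline{\vp_0})\cdot(\di\t\ei)$ via $(u\otimes v)a = (v\cdot a)u$ gives precisely the pole-pencil form with residue $-\tfrac{\ww_0}{2}(\int_B\vp_0\otimes\int_B\overline{\vp_0})$; the analytic remainder $\rr(\ww)\ei$ absorbs the regular part $\rrlw \pd\w{E}^i$, the $\tau^{-1}\ww^2\poo\kbdw(\tbb|_W)^{-1}\pw[\w{E}^i]$ contribution, and the $\d$-correction inside $C_{{\rm m}}$, all of which produce analytic functions of $\ww$ near $\ww_0$ upon being combined with the $\d$-orders.

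The main obstacle, as in Theorem \ref{thm:mainmultimom_2}, is the careful accounting of $\d$-orders against blow-up factors $(\ww-\ww_0)^{-1}$: several terms in \eqref{eq:repsolmain} are borderline and one must verify that after multiplication by $\tau\d^3$ or $\tau\d^4\ww$ and integration against $y$ or $1$ they land in the $O(\d^4)$ remainder rather than in the principal part. The orthogonality $(\ei,E_0)_{L^2(B)} = 0$, which forces an extra factor $\d$ in the resonant coupling, is the essential mechanism keeping the electric dipole analytic while making the magnetic dipole singular, and identifying it is the conceptual crux of the argument.
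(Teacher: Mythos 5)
Your proposal is correct and follows essentially the same route the paper intends: specialize Proposition \ref{prop:ledsolmain} and Theorem \ref{thm:mainmultimom} to the incident plane wave, identify $\ed = \tau\d^3\eq_0^{\w{E}}$ and $\md = -i\tau\d^4\ww\,\mq_1^{\w{E}}$, use the boundary-integral identity for $(\tbb|_W)^{-1}$ to evaluate the electric dipole, and use the vanishing of $(\ei,E_0)_{L^2(B)}$ (which the paper records as $(\w{E}^i,E_0)_{L^2(B)} = i\d\ww(\di\t\ei,\vp_0)_{L^2(B)} + O(\d^2)$ just before the theorem) to extract the $\d$-factor that balances the contrast in the magnetic dipole residue. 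The paper explicitly omits the proof, stating it is "a direct corollary of Proposition \ref{prop:ledsolmain}, Theorem \ref{thm:mainmultimom}" and "similar to the one for Theorem \ref{thm:mainmultimom_2}", and your argument is precisely the fleshing-out of that claim; your clean observation that constant vectors lie in $W$ (as gradients of linear harmonic functions), giving $\pd\w{E}^i = O(\d)$ directly, is a marginally more transparent route to the same orthogonality the paper obtains via Parseval and the Green's formula.
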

\mbb{
We now consider the limit: $\ww \to \ww_0$. Theorem \ref{thm:mainmultimom_3} above suggests that in this case, $\md$ may blow up and the high contrast nanoparticles may exhibit strong magnetic dipole responses. To rigorously formulate this fact, we consider
\begin{align} \label{asp:dww}
    \d  = o(|\ww - \ww_0|^{1/2})\,,
\end{align}
which is a slightly stronger assumption than the one $\d = O(|\ww - \ww_0|^{1/2})$ in Theorem \ref{thm:mainmultimom_3}. 
The next result easily follows from Theorem \ref{thm:mainmultimom_3},} \mb{which clearly shows that the magnetic dipole radiation will dominate in the scattering amplitude
when the incident frequency $\ww$ approaches the limiting resonance $\ww_0$. 
This is in sharp contrast with the approximation derived in \cite{add8} in the non--resonant case, where it is the electric dipole radiation that is the leading--order term (see also Remark \ref{rem:compare_non}). We should also compare the result with the case of non--magnetic plasmonic nanoparticles where the scattering amplitude is approximated by the resonant electric dipole radiation \cite{ammari2016plasmaxwell,ammari2016surface}.}


\begin{corollary} \label{thm:mainmultimom_2}
\mb{Under the conditions \eqref{1}, \eqref{2} and \eqref{3}, it holds for $\ww \in U$ and $\d = o(|\ww - \ww_0|^{1/2})$} that
\begin{align} \label{eq:ppsa_led}
    E_\infty^s(\ww,\h{x}) \backsimeq \frac{\ww^2}{4 \pi} \h{\md} \t \h{x} \,, \q \ww \to \ww_0\,,
\end{align}
where $\h{\md}$ is the approximate resonant magnetic dipole:
\begin{equation} \label{eq:ppmd_led}
  \h{\md} =  \ww^2 \d^3 \frac{-\ww_0}{2(\ww-\ww_0)} \Big(\int_D \vp_0 \otimes \int_D \overline{\vp_0} \Big) \di \t \ei \,.
\end{equation}
\end{corollary} 

\mb{We end this section with some remarks and discussions. First, the assumption \eqref{asp:dww} is necessary for the concise form \eqref{eq:ppsa_led}. If we only consider $\d = O(|\ww - \ww_0|^{1/2})$, some additional terms of order $\d^3/|\ww - \ww_0|$ (i.e., the same order as $\h{\md}$) would be included in \eqref{eq:ppsa_led}. Second, under the quantitative assumptions between $\d$ and $\ww - \ww_0$, such as $\d \sim |\ww - \ww_0|^{1/2 + \epsilon}$, $0 < \epsilon \ll 1$, one may work out the order of the relative error in the approximation \eqref{eq:ppsa_led}, as well as the higher--order approximation involving EM quadrupoles based on \eqref{eq:highappfor}. Such calculations are direct and largely depends on the relative rate between $\d \to 0$ and $\ww \to \ww_0$ so that we choose not to pursue these results in this work.} 

Third, we have assumed that \eqref{3} holds at the beginning of this section for ease of exposition and simplicity of formulas. If $m_0 > 1$,  \eqref{eq:auxrepthmmain} will be given by the sum over all the vector potentials associated with $\lad_0$. Then it is easy to see that Theorem \ref{thm:mainmultimom_3} and Corollary \ref{thm:mainmultimom_2} still hold but with $\md$ and $\h{\md}$ being the sum over all the related potentials. If $D$ has genus $L$ greater than zero, we need to consider the space $K_T(D)$ (cf.\,Lemma \ref{prop:helmddifree}), for which we recall from
\cite[Theorem 3.44]{monk2003finite} that any field $\phi$ from the orthogonal complement 
$K_T(D)$ of $\ccurl X_N^0(D)$ in $\H_0(\ddiv0,D)$ can be represented as 
    \begin{equation*}
        \phi = \na p\q \text{for some}\ p \in H^1(D^0)\,,
    \end{equation*}
    where $D^0$ is constructed by removing some interior cuts from $D$ such that $D^0$ is a union of simply--connected 
    domains; see \cite{monk2003finite,amrouche1998vector} for more details. Clearly, $K_T(D)$ plays a similar role to $W$ in the scattering amplitude but generates a radiating field of order $\d$. To analyze it, one may define the associated multipole moments similar to $\eq^\psi_l$ in Definition \ref{def:EMmoment} and estimate their orders and blow--up rates when $\d \to 0$ and $\ww \to \ww_0$. 
    The detailed and rigorous treatments for this case may need further investigations.

\subsection{Scattering and extinction cross sections}  
This section is devoted to the estimates of the scattering and extinction rates, as well as their enhancements in the case where the nanoparticles are resonant. 

For this purpose, let us briefly review some basic physical concepts for the description of the energy flow (cf.\cite{born2013principles}). We define the time--averaged  energy flux $ \l {\bf S} \r $
for the electromagnetic fields $(E,H)$ and the associated averaged outward energy flow $\W$ through the sphere $\S_R$ by 
\begin{equation*}
    \l {\bf S} \r := \re \left\{ E \t \overline{H}\right\}\,, \q \W := \int_{\S_R} \l {\bf S} \r \dd \n d\sigma\,,
\end{equation*}
respectively.
By the decomposition: $E = E^i + E^s$, we can write $\W$ as $\W = \W^i + \W^s + \W'$, where
\begin{align*}
  &\W^i := \int_{\S_R} \re \big\{ E^i \t \overline{H^i}\big\} \dd \n d\sigma\,, \ \quad \W^s := \int_{\S_R} \re \big\{ E^s \t \overline{H^s}\big\} \dd \n d\sigma\,,\\
  & \W' := \int_{\S_R} \re \big\{ E^i \t \overline{H^s} + E^s \t \overline{H^i}\big\} \dd \n d\sigma\,.
\end{align*}
It follows from a simple calculation that the energy flow $\W^i$ for the incident plane wave is zero. Then the conservation of energy gives us the rate of absorption $\W^a = - \W =  - \W^s - \W'\,.$
Therefore, $\W'$ is nothing else than the rate at which the energy is dissipated by heat and scattering, referred to as the  extinction  rate. 
By normalizing with respect to the incident energy flux
$\l {\bf S}^i \r = \re \{ E^i \t \overline{H^i}\}$, we define the scattering cross section $Q^s$, the absorption cross section $Q^a$ and the extinction cross section $Q'$, respectively, by 
\begin{equation*}
    Q^s := \lim_{R \to \infty} \frac{\W^s}{\l {\bf S}^i \r}, \quad Q^a := \lim_{R \to \infty} \frac{\W^a}{\l {\bf S}^i \r}, \quad Q' := \lim_{R \to \infty} \frac{\W'}{\l {\bf S}^i \r}\,.
\end{equation*}
In particular, for our case of 
incident plane wave \eqref{eq:indplawave}, we have $\l {\bf S}^i \r = |\ei|^2 = 1$,
and the above formulas for cross sections 
are then simplified.


Note that in the far field, the scattered magnetic wave $H^s$ has the form:
\begin{equation} \label{addeqh}
     H^s(x) = \frac{e^{i\ww|x|}}{|x|} H^s_\infty(\ww,\h{x}) + O\Big( \frac{1}{|x|^2}\Big)\q 
     \text{with} \ H^s_\infty(\ww,\h{x}) = \h{x} \t E^s_\infty(\ww,\h{x}) \q \text{as}\ |x| \to \infty\,.
\end{equation}
Then, by \eqref{addeqe} and \eqref{addeqh},
we can directly compute $Q^s$ as follows:
\begin{align} \label{eq:repqs}
    Q^s &= \lim_{R \to \infty} \int_{\S_R} \re \left\{ E^s_\infty(\ww,\h{x}) \t  \overline{H^s}_\infty(\ww,\h{x})\right\} \dd \n(x) \frac{1}{R^2} d\sigma(x)  + O\Big(\frac{1}{R}\Big) \notag\\
    & =  \int_{\S} |E^s_\infty(\ww,\h{x})|^2 d\sigma(\h{x})\,.
\end{align} 
Moreover, the well--known optical cross section theorem \cite{born2013principles} shows the following representation of the extinction rate:
\begin{equation} \label{eq:repextr}
    Q' = \frac{4 \pi}{\ww} \im \left\{\ei \dd E^s_\infty(\ww,\di)\right\} \q \text{for}\ \ww > 0\,.
\end{equation}
With these preparations, we now give the estimates of the averaged cross sections.
\begin{theorem} \label{thm:blowssc}
\mb{Under the conditions \eqref{1}, \eqref{2} and \eqref{3}, 
for $\ww \in U$ with $\ww > 0$ and $\d = o(|\ww - \ww_0|^{1/2})$,} the averages over the orientations of the scattering and extinction cross sections of a randomly oriented nanoparticle can be estimated by
\begin{align*} 
Q^s_m \backsimeq \d^6 \frac{|\ww_0|^2|\ww|^8}{|\ww-\ww_0|^2}\frac{4\pi}{27} \big|\int_D \vp_0(y) dy\big|^4
\quad \text{and}\q 
    Q'_m \backsimeq  \d^3 \frac{-\ww_0\ww^3}{\ww-\ww_0} \frac{16 \pi^2}{9} \big|\int_D \vp_0 dy\big|^2\,, \q \text{as}\ \ww \to \ww_0\,,
\end{align*}
\mb{where $Q^s_m$ and $Q'_m$ are, respectively, given by the averages of $Q^s$ and $Q'$ over all the directions of $\ei$ and $\di$.} 
\end{theorem}

\begin{proof}
We start with the estimate of the scattering cross section $Q^s$. 
Noting from \eqref{eq:ppsa_led} and \eqref{eq:repqs} that
\begin{equation*}
    Q^s =  \int_{\S} |E^s_\infty(\ww,\h{x})|^2 d \sigma(\h{x}) \backsimeq \frac{|\ww|^4}{16 \pi^2} \int_\S |\h{\md} \t \h{x}|^2 d\sigma(\h{x})  = \frac{|\ww|^4}{16 \pi^2} \int_\S |\h{x} \t (\h{\md} \t \h{x})|^2 d\sigma(\h{x})\,,\q \ww \to \ww_0\,,
\end{equation*}
and 
$ 
    \h{x} \t (\h{\md} \t \h{x}) = \h{\md} - (\h{x}\dd \h{\md}) \h{x}\,, 
$ 
we derive
\begin{align*}
     Q^s  \backsimeq \frac{|\ww|^4}{16 \pi^2} \int_\S |\h{\md}|^2 - |(\h{x}\dd \h{\md})|^2 d\sigma(\h{x}) 
    = \frac{|\ww|^4}{16 \pi^2} \int_\S |\h{\md}|^2 -  \h{x}_j \h{\md}^j\h{x}^i \h{\md}_i d\sigma(\h{x}) =  \frac{|\ww|^4}{6 \pi} |\h{\md}|^2\,,
\end{align*}
where we have used the Einstein summation convention and the formula: 
\begin{equation} \label{eq:avgob}
    \int_\S \h{x}_i\h{x}_j d\sigma(\h{x}) = \frac{4 \pi}{3} \d_{ij}.
\end{equation}

Then, it follows from the formula \eqref{eq:ppmd_led} that
$Q^s_m$ can be estimated by
\begin{align} \label{eq:estqsm}
    Q^s_m & \backsimeq \frac{|\ww|^4}{6 \pi} \int_\S \int_\S \left|\h{\md}\right|^2 d \sigma(\di) d\sigma(\ei) \notag\\ 
    & = |\ww|^4 \d^6 \frac{|\ww_0|^2}{4|\ww-\ww_0|^2}\frac{|\ww|^4}{6 \pi}\int_\S \int_\S |\ap \di \t \e|^2 d \sigma(\di) d\sigma(\e)\,,\q \ww \to \ww_0\,, 
\end{align}
where 
\begin{align}\label{def:matrixavp}
\ap := \int_D \vp_0 \otimes \int_D \overline{\vp_0}   \,.
\end{align}
To calculate the integral \eqref{eq:estqsm}, we introduce the Levi--Civita symbol $\ep_{kij}$ to avoid the complicated vector and matrix calculations. First, we recall an important property for the symbol $\ep_{kij}$:
\begin{equation} \label{eq:prolcs}
    \ep_{kij}\ep^{kpq} =  \d_i^p \d_j^q - \d_j^p \d_i^q\,.
\end{equation}
Using \eqref{eq:avgob} and 
\eqref{eq:prolcs}, we can derive, for a general matrix ${\rm{\bf A}}:= (a_{ij})$,
\begin{align*}
  &\frac{1}{(4 \pi)^2}\int_\S \int_\S \left|{\rm{\bf A}}\di \t \e \right|^2d \sigma(\di) d\sigma(\e)  = \frac{1}{(4 \pi)^2} \int_\S \int_\S a^{sk}\ep_{kij} d^ie^j a_{sr}\ep^{rpq} d_p e_q \\
   =& \frac{1}{9} a^{sk}\ep_{kij} \d_p^i \d_q^j a_{sr}\ep^{rpq} = \frac{1}{9}  a^{sk}a_{sr}\ep_{kpq}\ep^{rpq} 
     = \frac{2}{9} a^{sk}a_{sr} \d_k^r = \frac{2}{9} \tr \left({\rm{\bf A}}{\rm{\bf A}}^T\right)\,.
\end{align*}
For the rank--one matrix $\ap$ in \eqref{def:matrixavp}, we have 
\begin{equation*}
    \tr(\ap \ap^T) = \big|\int_D \vp_0(y) dy\big|^4\,.
\end{equation*}
Combining the above formulas with \eqref{eq:estqsm}, we immediately derive the desired estimate:
\begin{equation*}
    Q^s_m \backsimeq \d^6 \frac{|\ww_0|^2|\ww|^8}{|\ww-\ww_0|^2}\frac{4\pi}{27} \big|\int_D \vp_0(y) dy \big|^4\,,\ \ww \to \ww_0\,.
\end{equation*}

To consider the extinction cross section, we first observe from \eqref{eq:ppsa_led} and \eqref{eq:repextr} that 
\begin{equation*} 
    Q' \backsimeq \ww \im \left\{\ei \dd (\h{\md} \t \di)\right\}.
\end{equation*}
With the help of \eqref{eq:ppmd_led}, we are led to the following approximation for the averaged $Q'$ over $\ei$ and $\di$,
\begin{equation} \label{auxestqem}
     Q'_m \backsimeq \ww^3 \d^3 \frac{-\ww_0}{2(\ww-\ww_0)} \im  \int_\S \int_\S \ei \dd \left(\left(\ap\di \t \ei\right) \t \di\right)   d \sigma(\di) d \sigma(\ei)\,,
\end{equation}
where $\ap$ is given in \eqref{def:matrixavp}. Similarly, by using the Levi--Civita symbol and the properties \eqref{eq:avgob} and \eqref{eq:prolcs},
 a direct calculation gives, for a general matrix ${\rm{\bf A}}:= (a_{ij})$,
\begin{align*}
  &\frac{1}{(4 \pi)^2} \int_\S  \int_\S \e \dd (({\rm{\bf A}} \di \t \e) \t \di)   d \sigma(\di) d \sigma(\e) = \frac{1}{(4 \pi)^2} \int_\S  \int_\S e_r \ep^{rst} a_{sk} \ep^{kij}d_i e_j  d_t \\ 
   =& \frac{1}{9} \ep^{rst} a_{sk} \ep^{kij}\d_{it}\d_{rj}  = \frac{1}{9} \ep^{rst} a_{sk} \ep^{ktr} 
   = \frac{2}{9} \tr({\rm{\bf A}})\,,
\end{align*}
which, together with \eqref{def:matrixavp} and \eqref{auxestqem}, completes the estimate for $Q'_m$.
\end{proof}

\section{Explicit formulas for a spherical nanoparticle} \label{sec:spernano}

In this section, we provide some explicit calculations for the special case of a single spherical nanoparticle to validate the asymptotic expansions of subwavelength resonances in \eqref{aim1} and the scattering amplitude in \eqref{eq:ppsa_led} which hold for nanoparticles of arbitrary shape. 



\subsection{Quasi--static resonances} 

We shall first find the eigen--decomposition of the operator $\np_{D}^{\rm d,d}$ with $D = B(0,1)$, by using the spherical multipole expansion and the following proposition. 


\begin{proposition}\label{prop:ledingqua}
Let $D$ be the bounded smooth open set as in Section \ref{sec:prosetanddef}. 
Then $\lad$ is an eigenvalue of $\kbdd$ if and only if the following system has a nontrivial solution in $\H^2_{{\rm loc}}(\R^3)$ with $k^2 = \lad^{-1}$:
\begin{equation} \label{eq:eigsphere}
    \left\{ 
    \begin{aligned}
        &\curl \curl u = 0\,, \ \ddiv u = 0 &\text{in}& \ \R^3 \backslash \bar{D}\,,\\
        &\curl \curl  ( u -\wg \gamma_n u ) = k^2 (u-\wg \gamma_n u) &\text{in}&\ D\,,\\
        &\left[\n \t u \right] = 0\,,\ \left[\n \dd u \right] = 0 \,,\ \left[\n \t \curl u \right] = 0 \quad &\text{on}&\ \p D\,, \\
        &u = O(|x|^{-2}) \quad &\text{as}&\ |x| \to \infty\,.
    \end{aligned}  
    \right.
\end{equation}
Moreover, for a solution $u$ to \eqref{eq:eigsphere}, $u-\wg \gamma_n u$ is an eigenfunction of $\kbdd$ associated with the  eigenvalue $\lad = k^{-2}$.
\end{proposition}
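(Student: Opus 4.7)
\bigskip

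\noindent\textbf{Proof proposal.} The plan is to convert the integral eigenvalue equation $\mathcal{K}_B^{{\rm d,d}}[E]=\lambda E$ into the PDE system \eqref{eq:eigsphere} via the Newtonian potential, exploiting the identity $\mathcal{K}_B^{{\rm d,d}}[E]=\mathbb{P}_{\rm d}\mathcal{K}_B[E]$ on $H_0(\ddiv 0,B)$ (which is a direct consequence of the relation \eqref{rela:simply} with $n=2$) together with the characterization of $\wg$ from Lemma \ref{lem:resfornormal}.

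For the ``only if'' direction, I take an eigenpair $(\lambda, E)$ with $\lambda>0$ and $E\in H_0(\ddiv 0,B)$, and set $u(x) := \lambda^{-1}\mathcal{K}_B[E](x)$ for $x\in\R^3$. Standard Newtonian potential regularity gives $u\in H^2_{loc}(\R^3,\R^3)$ with $-\Delta u = \lambda^{-1}E\chi_B$. Integrating $\ddiv\mathcal{K}_B[E]$ by parts and using $\ddiv E=0$ in $B$ and $\gamma_n E=0$ on $\p B$, I obtain $\ddiv u \equiv 0$ in $\R^3$, so that $\curl\curl u=-\Delta u$ vanishes in $\R^3\setminus \bar B$ and equals $\lambda^{-1}E$ in $B$. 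The eigenvalue equation rewritten via the Helmholtz decomposition gives $E = \lambda^{-1}(u - \wg\gamma_n u)$ in $B$; substituting this into the identity $\curl\curl u = \lambda^{-1}E$ yields, using $\curl\curl(\wg\gamma_n u)=0$ (since $\wg\gamma_n u$ is a gradient by Lemma \ref{lem:resfornormal}), the interior equation $\curl\curl(u-\wg\gamma_n u)=k^2(u-\wg\gamma_n u)$ with $k^2=\lambda^{-1}$. The transmission conditions follow automatically from $u\in H^2_{loc}$, and the decay $u=O(|x|^{-2})$ follows from the multipole expansion of the Newtonian potential after observing that $\int_B E\,dy=0$ by the Green's formula \eqref{eq:greenfor_1} applied to $E\in H_0(\ddiv 0,B)$.

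For the ``if'' direction, given a nontrivial $u$ solving \eqref{eq:eigsphere}, I set $E := (u-\wg\gamma_n u)|_B$. Taking divergence of the interior PDE gives $k^2\ddiv E=0$, so $\ddiv E=0$ in $B$; since $\wg\gamma_n u$ has normal trace equal to $\gamma_n u$, the function $E$ has vanishing normal trace, hence $E\in H_0(\ddiv 0,B)$. To establish the eigenvalue equation, I consider $w := \mathcal{K}_B[E]-\lambda u$ on $\R^3$. In $B$: from $\ddiv u=0$ and the interior PDE, $-\Delta u=\curl\curl u=k^2 E$, so $-\Delta w = E-\lambda k^2 E = 0$. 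In $\R^3\setminus\bar B$: $-\Delta u=\curl\curl u-\nabla\ddiv u=0$, so again $-\Delta w=0$. The $H^2_{loc}$-regularity of both $\mathcal{K}_B[E]$ and $u$ provides full continuity of $w$ and $\nabla w$ across $\p B$, so $w$ is harmonic on all of $\R^3$; together with $w(x)\to 0$ at infinity (using $u=O(|x|^{-2})$ and $\mathcal{K}_B[E]=O(|x|^{-2})$ via the vanishing first moment argument above), Liouville forces $w\equiv 0$. Consequently $\mathcal{K}_B[E]=\lambda u$ in $B$, and projecting with $\mathbb{P}_{\rm d}$ yields $\mathcal{K}_B^{{\rm d,d}}[E]=\lambda(u-\wg\gamma_n u)=\lambda E$.

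The main obstacle I anticipate is the clean verification that the transmission conditions in \eqref{eq:eigsphere}, together with the stated $H^2_{loc}$-regularity, suffice to prevent any jump in $\nabla u$ across $\p B$, which is essential for ruling out a nontrivial harmonic ``correction'' in the Liouville step. The conditions $[\n\t u]=[\n\dd u]=[\n\t\curl u]=0$ together with $\ddiv u=0$ on both sides control five of the nine entries of $\nabla u$; the remaining entries (namely the normal derivative of the normal component and the tangential derivatives of the tangential components) are forced to match by the $H^2_{loc}$ assumption. Once this is clarified, the rest of the argument is straightforward.
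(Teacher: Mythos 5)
Your approach is essentially the same as the paper's: represent the eigenfunction through the Newtonian potential $\np_B[E]$, use the decomposition $\kbdd[E]=\np_B[E]|_B-\wg\gamma_n\np_B[E]$, derive the PDE system, and prove the converse by uniqueness of the Poisson system (which the paper invokes directly and you rederive via a Liouville argument for $w=\np_B[E]-\lambda u$ — these are equivalent).

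There is, however, an algebraic slip in your forward direction. With your normalization $u=\lambda^{-1}\np_B[E]$, the eigenvalue equation $\kbdd[E]=\lambda E$ combined with $\kbdd[E]=\np_B[E]|_B-\wg\gamma_n\np_B[E]=\lambda(u-\wg\gamma_n u)$ gives $E=u-\wg\gamma_n u$, not $E=\lambda^{-1}(u-\wg\gamma_n u)$ as you wrote. If one substitutes your written identity into $\curl\curl u=\lambda^{-1}E$ and uses $\curl\curl\wg\gamma_n u=0$, one obtains $\curl\curl(u-\wg\gamma_n u)=\lambda^{-2}(u-\wg\gamma_n u)$, i.e.\ $k^2=\lambda^{-2}$, not the claimed $k^2=\lambda^{-1}$. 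The corrected identity $E=u-\wg\gamma_n u$ gives $k^2=\lambda^{-1}$ as desired. (The paper avoids this bookkeeping by setting $u=\np_B[E]$ without the $\lambda^{-1}$ prefactor, which yields $\lambda E=u-\wg\gamma_n u$ and $\curl\curl u=E$, hence directly $k^2=\lambda^{-1}$.) Finally, your closing worry about jumps in $\nabla u$ is a non-issue once $u\in H^2_{loc}(\R^3,\R^3)$ is assumed: then $w\in H^2_{loc}$ and $\Delta w=0$ a.e.\ in $\R^3\setminus\p B$ forces $\Delta w=0$ a.e.\ in $\R^3$ (as $\p B$ has measure zero), so Liouville applies with no further transmission-matching argument. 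The paper, by contrast, starts from a variational solution in a weaker class and upgrades to $H^2_{loc}$ using the embedding $H_{loc}(\ccurl,\R^3)\cap H_{loc}(\ddiv,\R^3)\subset H^1_{loc}(\R^3,\R^3)$, which is the step where the transmission conditions genuinely earn their keep.
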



\begin{proof}
Suppose that $(\lad,E)\in \C \t \hzz$ is an eigenpair of the operator $\kbdd$ and define $u = \np_D [E] \in \H^2_{\rm loc}(\R^3)$. It is easy to check that $\ddiv u = 0$ on $\R^3$. Since the integration of $E \in \hzz$ over $D$ is zero, by the mean value theorem, we readily have 
\begin{equation*}
    \np_D[E](x) = \int_D \Big( \frac{1}{4\pi|x-y|}  - \frac{1}{4 \pi |x|}\Big)E(y)dy = O\Big(\frac{1}{|x|^2}\Big) \q \text{as}\ |x| \to \infty\,,
\end{equation*}
which implies that $u$ satisfies 
\bb \label{eq:sys_0}
 -\Delta u &= E \chi_D,\ \ddiv u = 0 \quad &\text{in}&\ \R^3\,, \\
 u &= O(|x|^{-2}) \quad &\text{as}&\ |x| \to \infty\,.
\ee
By use of the vector identity:
$ 
    \curl\curl = \na \ddiv  - \Delta \,,
$ 
we reformulate the equation \eqref{eq:sys_0} as 
\bb \label{eq:sys_1}
 &\curl \curl u = E \chi_D,\ \ddiv u = 0 \quad &\text{in}&\ \R^3\,, \\
 & u = O(|x|^{-2}) \quad &\text{as}&\ |x| \to \infty\,.
\ee
To proceed, since there hold $\curl \np_D [E] = \curl \kbdd [E]$ and $\kbdd[E] = \lad E$ on $D$,
we note from \eqref{eq:sys_1}: 
\begin{equation*}
    \curl \curl \np_D[E] = \curl \curl \lad E = E\chi_D\q \text{in}\ D\,,
\end{equation*}
which also yields $0$ is not an eigenvalue of $\kbdd$. Then, by $\kbdd [E] = \np_D [E] - \wg \gamma_n \np_D[E]$ for $E \in \hzz$, it follows from \eqref{eq:sys_1} that $u$ solves the following equation in the weak sense:
\bb \label{eq:sys_2}
&\curl \curl u = 0\,, \ \ddiv u = 0 \quad &\text{in}& \ \R^3 \backslash \bar{D}\,,\\
&\curl \curl  (u - \wg \gamma_n u ) = \frac{1}{\lad} (u-\wg \gamma_n u)\quad &\text{in}&\ D\,,\\
&\left[\n \t u \right] = 0\,,\ \left[\n \dd u \right] = 0 \,,\ \left[\n \t \curl u \right] = 0 \quad &\text{on}&\ \p D\,, \\
&u = O(|x|^{-2}) \quad  &\text{as}&\ |x| \to \infty\,.
\ee
Letting $\lad$ in \eqref{eq:sys_2} be $k^{-2}$ for some real $k \neq 0$ by the positive-definiteness of $\kbdd$, we obtain the desired system  \eqref{eq:eigsphere}. Conversely, if $u$ satisfies \eqref{eq:sys_2} with $\lad >0$, then $\curl u$, $\curl\curl u$ and $\ddiv u$ are globally well--defined on $\R^3$ and locally $L^2$-integrable. Therefore, we have $u \in \H^2_{\rm loc}(\R^3)$ solves
\bb 
 -\Delta u &= \frac{1}{\lad} (u - \wg \gamma_n u) \chi_D,\ \ddiv u = 0 \quad &\text{in}&\ \R^3\,, \\
 u &= O(|x|^{-2}) \quad &\text{as}&\ |x| \to \infty\,,
\ee
where we have used 
the fact from \cite{amrouche1998vector} that the space $\H_{\rm loc}(\ccurl,\R^3) \cap \H_{\rm loc}(\ddiv,\R^3)$ is continuously imbedded in $\H_{\rm loc}^1(\R^3)$. The proof is complete by the uniqueness of the solution $u$ to \eqref{eq:sys_0} for a given $E \in \hzz$ \cite{folland1995introduction}. 
\end{proof}



We should emphasize that Proposition \ref{prop:ledingqua} above applies to nanoparticles with arbitrary shape and the form of the equation \eqref{eq:eigsphere} is chosen mainly for the explicit calculation for the spherical nanoparticle. It is clear that once we obtain all the nontrivial solutions to the system \eqref{eq:eigsphere} with $D = B(0,1)$ and the associated $k^2$, we can determine the eigenstructure of $\np_{B(0,1)}^{\rm d,d}$. 

For this, we consider the series solutions of \eqref{eq:eigsphere} with $D = B(0,1)$. By the entire electric multipole fields in \eqref{eq:enete} and \eqref{eq:enetm}, we assume that for a nontrivial solution $u \in \H^2_{\rm loc}(\R^3)$ to \eqref{eq:eigsphere},  $u - \wg \gamma_n u$ in $B(0,1)$ is expanded as:
\begin{align} \label{eq:solin_1}
 u(x) - \wg \gamma_n u(x) = \sum_{n=1}^\infty \sum_{m=-n}^n \anm \wete (k,x) + \bnm \wetm(k,x)\,, \q x\in B(0,1)\,.
\end{align}
 Further, by Lemma \ref{lem:resfornormal}, $\wg \gamma_n u$ is the gradient of a solution $v$ to the interior Neumann problem \eqref{eq:neumann} with the boundary condition $\frac{\p}{\p \n} v = \n \dd u$.
We hence have the following series representation for $\wg \gamma_n u$:
\begin{align}
    \wg \gamma_n u(x) & = \sum_{n=1}^\infty \sum_{m=-n}^n \frac{c_{n,m}}{n} \na \left(|x|^n \ynm(\h{x})\right)\notag \\
     & = \sum_{n=1}^\infty \sum_{m=-n}^n \frac{c_{n,m}}{n} \left(|x|^{n-1} \na_{\S}\ynm(\h{x}) +  n |x|^{n-1} \ynm(\h{x}) \h{x}\right)\,, \q x\in B(0,1)\,.
    \label{eq:solin_2}
\end{align}
For the field $u$ outside the domain $B(0,1)$, by making use of the multipole fields $\{\eth\}$ constructed in Appendix \ref{app:B} and noting from $\ddiv u = 0$ on $\R^3$ that
$\int_\S \n \dd u d\sigma = 0$ holds, Proposition \ref{prop:expout} allows us to assume 
the following ansatz:
\begin{align} \label{eq:solout}
   u = \sum_{n=1}^\infty \sum_{m=-n}^n \gnm \eth + \enm \curl \eth\,, \q x \in \R^3 \backslash \overline{B(0,1)}\,.
\end{align}
Here, $\anm,\bnm,\gnm,\enm,c_{n,m}$ in \eqref{eq:solin_1}, \eqref{eq:solin_2} and \eqref{eq:solout} are complex coefficients to be determined by matching the traces of the field $u$ inside and outside $B(0,1)$ on $\S = \p B(0,1)$. 

We next state the main result of this section, the proof of which is given in Appendix \ref{app:eigvalspher}.

\begin{theorem} \label{thm:eigvalspher}
  Let $\{k_{n,s}\}_{s = 1}^\infty$ be the positive zeros of $j_n(z)$, $n \ge 0$, with $k_{n,1} \le k_{n,2} \le \cdots $. Then, 
 \begin{enumerate}
       \item  the eigenvalues of $\np_{B(0,1)}^{\rm d,d}$ are given by $1/k_{n,s}^2$, $n =0, 1,2,3,\cdots$ and $s = 1,2,\cdots$;
       \item  the eigenspace associated with $1/k_{n,s}^2$ is spanned by $\{\w{E}^{TE}_{n+1,m}(k_{n,s},x)\}_{m = -(n+1)}^{n+1}$ and $\{\wetm(k_{n,s},x)\}_{m=-n}^n$ for $n \ge 1$,
  and by $\{\w{E}^{TE}_{1,m}(k_{0,s},x)\}_{m = -1}^1$ for the case $n=0$.
   \end{enumerate}
\end{theorem}
\begin{remark} \label{rem:quasireso}
By the interlacing property of zeros of spherical Bessel functions $j_n$ \cite{watson1995treatise,liu2007zeros}, we have $1/k^2_{0,1} = 1/\pi^2$ and $1/k^2_{1,1}$ are the first and second largest eigenvalues of $\np_{B(0,1)}^{\rm d,d}$,
which corresponds to the magnetic and electric dipole resonances respectively by Mie theory. 
\end{remark}

\begin{remark} The approach developed in \cite{add3} can be generalized to compute the first Hadamard variation of the dielectric resonances. \end{remark} 

\subsection{Validation of the quasi--static approximations}

We now revisit the scattering problem \eqref{eq:model} with $D_\d = B(0,\d)$ and the high contrast $\tau = \d^{-2}$. We first recall the Jacobi--Anger expansion for the incident plane wave \eqref{eq:indplawave} \cite{colton2012inverse,ammari2013enhancement}:
\begin{align*}
    E^i(x) = e^{i \ww \di\dd x} \ei  = - \sum_{n=1}^\infty \frac{ 4 \pi i^n}{\sqrt{n(n+1)}} \sum_{m=-n}^n   \wete(\ww,x)  \overline{\vnm}(\di) \dd \ei + \wetm(\ww,x) \overline{\unm}(\di) \dd \ei\,, \q x\in\R^3\,.
 \end{align*}
We denote by $\ww_\tau$ the wave number $\ww\sqrt{1+\tau}$ inside the particle. The following lemma is standard from 
Mie scattering theory, see, e.g.,
\cite[Lemma A.2]{ammari2016plasmaxwell}. 

\begin{lemma} \label{lem:multiexpes}
The scattered wave $E^s = E - E^i$ outside the particle $B(0,\d)$ has the following representation:
\begin{equation*}
    E^s(x) = \sum_{n=1}^\infty \sum_{m = -n}^n \gnm \ete(\ww,x) + \enm \etm(\ww,x)\,,\q |x| > \d\,,
\end{equation*}
where the coefficients $\gnm$ and $\enm$ are given by 
\begin{equation}
    \begin{aligned}
        &\gnm = \frac{4 \pi i^n \overline{\vnm}(\di) \dd \ei}{\sqrt{n(n+1)}} \dd \frac{- j_n(\d \ww_\tau)\jj_n(\d \ww) + \jj_n(\d \ww_\tau) j_n(\d \ww)}{h_n^{(1)}(\d \ww) \jj_n(\d \ww_\tau) -  j_n(\d \ww_\tau) \hh_n(\d\ww)} \,, \\
        & \enm = \frac{ 4 \pi i^n \overline{\unm}(\di) \dd\ei}{\sqrt{n(n+1)}} \dd \frac{ \frac{1}{ 1+\tau}   \jj_n(\d \ww_\tau) j_n(\d \ww) -  j_n(\d \ww_\tau ) \jj_n(\d \ww) }{ \frac{1}{ 1+\tau} h_n^{(1)}(\d \ww) \jj_n(\d \ww_\tau) -  j_n(\d \ww_\tau) \hh_n (\d \ww)}\,.
    \end{aligned}
\end{equation}
\end{lemma}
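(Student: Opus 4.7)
The plan is to apply the standard Mie scattering decomposition. Since $D=B(0,\d)$ is a ball and the interior wave number is $\ww_\tau=\ww\sqrt{1+\tau}$ (so that $\ww^2 n^2=\ww_\tau^2$ in $D$), the total field $E$ inside $D$ is regular at the origin and solves $\curl\curl E=\ww_\tau^2 E$, so it admits an expansion in the entire multipole fields
\begin{equation*}
E(x)=\sum_{n=1}^\infty\sum_{m=-n}^n \aanm\wete(\ww_\tau,x)+\bbnm\wetm(\ww_\tau,x)\,,\q |x|<\d,
\end{equation*}
with unknown coefficients $\aanm,\bbnm$. Outside $D$, $E^s$ satisfies $\curl\curl E^s=\ww^2 E^s$ together with the outgoing Silver--M\"uller condition, hence it must have the stated expansion in the radiating multipole fields $\ete(\ww,x)$ and $\etm(\ww,x)$; meanwhile $E^i$ is encoded in the Jacobi--Anger expansion recalled above.

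First I would impose the transmission conditions $[\n\t E]=0$ and $[\n\t H]=0$ on $\p D=S_\d$, with $H=(i\ww)^{-1}\curl E$ (since $\mu_r=1$). Using the well-known tangential trace formulas for the multipole fields at $|x|=\d$ (for TE-type fields one obtains a tangential vector harmonic weighted by $j_n(\d\ww_\tau)$ or $h_n^{(1)}(\d\ww)$, while applying $\curl$ swaps the role of the Bessel/Hankel functions with their Riccati-type derivatives $\jj_n$ and $\hh_n$), the transmission system decouples for each index $(n,m)$ and further decouples into a TE block coupling $(\aanm,\gnm)$ with right-hand side proportional to $\overline{\vnm}(\di)\dd\ei$, and a TM block coupling $(\bbnm,\enm)$ with right-hand side proportional to $\overline{\unm}(\di)\dd\ei$. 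This yields two independent $2\t 2$ linear systems per mode.

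Then I would solve these systems by Cramer's rule. In the TE case, both boundary equations are homogeneous in powers of the wave number, so the determinant is $h_n^{(1)}(\d\ww)\jj_n(\d\ww_\tau)-j_n(\d\ww_\tau)\hh_n(\d\ww)$ and the numerator for $\gnm$ is the Wronskian-type combination $-j_n(\d\ww_\tau)\jj_n(\d\ww)+\jj_n(\d\ww_\tau)j_n(\d\ww)$, reproducing the stated formula after incorporating the Jacobi--Anger prefactor $4\pi i^n/\sqrt{n(n+1)}$. In the TM case the magnetic-trace matching introduces an extra factor $\ww^2/\ww_\tau^2=1/(1+\tau)$ on the interior side (coming from $\curl\curl E = \ww_\tau^2 E$ inside versus $\ww^2 E$ outside), which is precisely the $1/(1+\tau)$ appearing in front of the Hankel-type terms in $\enm$; Cramer's rule then gives the stated expression.

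The main technical obstacle is the bookkeeping: verifying the exact constants in the tangential and curl-tangential traces of $\wete(\ww_\tau,\dd),\wetm(\ww_\tau,\dd),\ete(\ww,\dd),\etm(\ww,\dd)$ at $|x|=\d$ in terms of $j_n,\jj_n,h_n^{(1)},\hh_n$, so that the $1/\sqrt{n(n+1)}$ normalization and the $i^n$ factors match the Jacobi--Anger coefficients and produce a cancellation-free right-hand side. Once these trace identities are pinned down, the remaining algebra is a direct solution of two $2\t 2$ systems and simplification using the relation $\ww_\tau^2=(1+\tau)\ww^2$.
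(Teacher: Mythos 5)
Your proposal is the standard Mie/vector spherical wave function derivation, which is precisely what the paper appeals to: the paper does not reprove this lemma but cites it as Lemma A.2 of \cite{ammari2016plasmaxwell}, and that proof proceeds exactly along the lines you describe (interior entire multipole expansion at wave number $\ww_\tau$, exterior radiating expansion at $\ww$, Jacobi--Anger for $E^i$, decoupled $2\times2$ systems per mode from the two tangential transmission conditions, then Cramer's rule). One small inaccuracy worth fixing if you write this out: the factor $1/(1+\tau)$ does not come from the $\n\times H$ condition alone. Because $\wetm(k,x)=-(ik)^{-1}\curl\wete(k,x)$ carries an explicit $1/k$, the tangential trace of $E$ on the interior side already acquires a factor $\ww/\ww_\tau$ relative to the exterior, while the tangential trace of $H=(i\ww)^{-1}\curl E$ acquires the reciprocal factor $\ww_\tau/\ww$; the net $1/(1+\tau)$ in the determinant and in the numerator of $\enm$ emerges only after solving the $2\times2$ system and rescaling. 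The TE system has no such factor precisely because these two contributions cancel under Cramer's rule, which is why $\gnm$ is $\tau$-free apart from the $\ww_\tau$ arguments.
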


\begin{remark}
The multipole fields $\wete$ and $\wetm$ are usually referred to as the magnetic (transverse electric) multipole modes and electric (transverse magnetic) multipole modes, which represent
the EM radiations from the electric--charge density and the magnetic--moment density of order $(n,m)$, respectively \cite[Section 9.7]{jackson1999classical}. In view of this, the coefficients $\gnm$ and $\enm$ may be called the magnetic and electric coefficients, respectively \cite{tzarouchis2018light}.  
\end{remark}

It is clear from Lemma \ref{lem:multiexpes} that the scattering resonances are characterized by the complex
zeros of denominators of $\gnm$ and $\enm$:
  \begin{align*}
    & h_n^{(1)}(\d \ww)\jj_n(\d \ww_\tau) -  j_n(\d \ww_\tau)\hh_n(\d\ww) = 0\,, \\
     & (1+\tau)^{-1} h_n^{(1)}(\d \ww) \jj_n(\d \ww_\tau) - j_n(\d \ww_\tau)\hh_n (\d \ww)  = 0\,.
 \end{align*}
To numerically verify the estimate \eqref{aim1}, we note that $\lad_0 = 1/\pi^2$ and $\ww_0 = \sqrt{\lad_0}^{\sss - 1} = \pi$ for the spherical domain (see Remark \ref{rem:quasireso}) and 
consider the resonance $\ww$ that is nearest to the origin. We plot the first zero of 
$ 
     h_n^{\sss (1)}(\d \ww)  \jj_n(\d \ww_\tau) -  j_n(\d \ww_\tau)\hh_n(\d\ww)
$ 
with $\tau = \d^{-2}$ and $\d \in (0,0.22)$ in Figure \ref{fig:reso_real} below, using the Muller's method \cite{muller1956method} with the initial value $\pi$, from which we  immediately observe that the error between the subwavelength resonance $\ww(\d)$ and its quasi--static approximation $\pi$
is bounded by a term of order $\d^2$ as $\d \to 0$.

\begin{figure}[!htbp]
 \centering
\includegraphics[clip,width=0.5\textwidth]{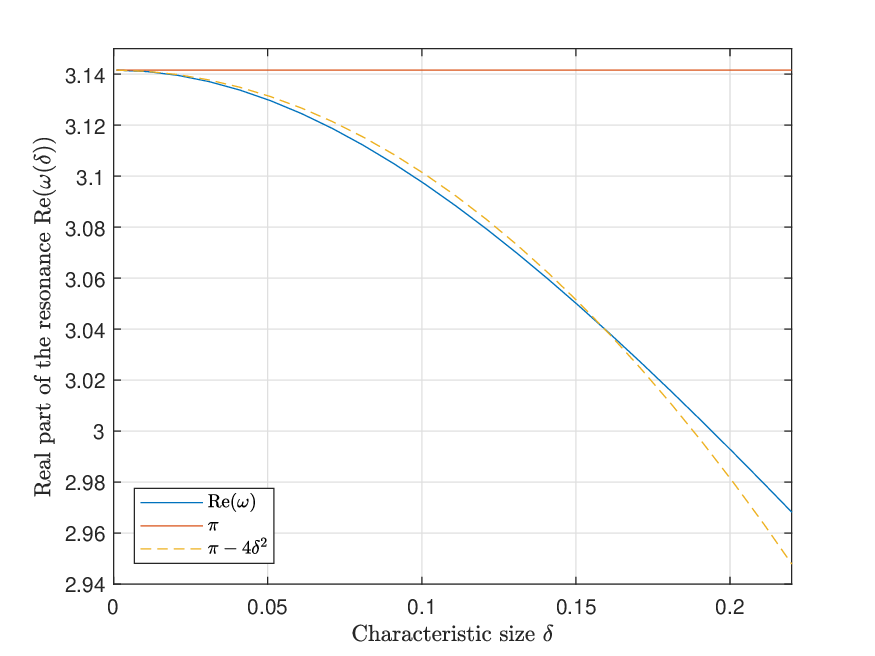}   \hskip -0.5cm
\includegraphics[clip,width=0.5\textwidth]{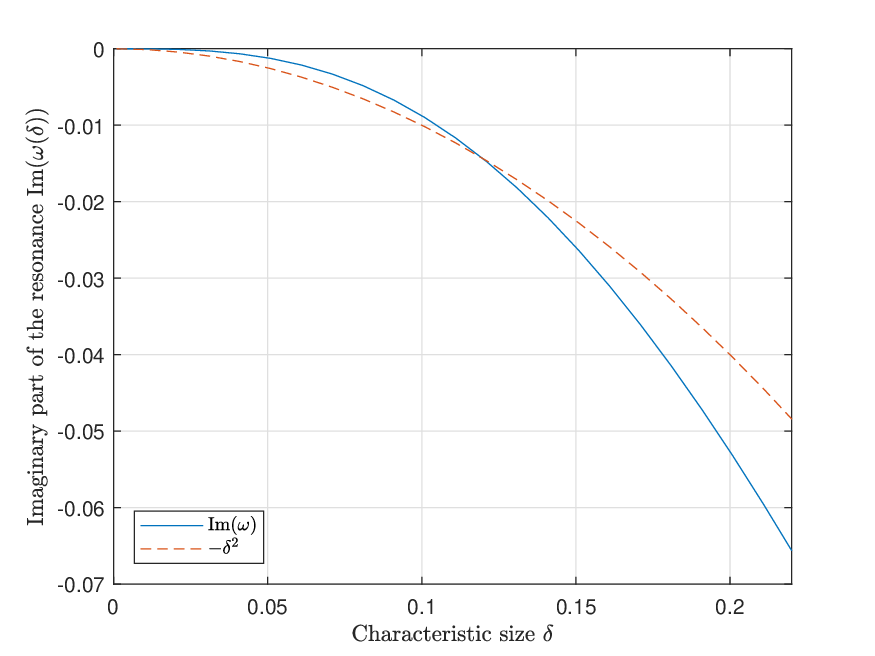} 
\caption{Lowest dielectric subwavelength resonance $\ww$ (blue curve) for a spherical nanoparticle $B(0,\d)$ with $\d \in (0,0.22)$ and the contrast $\tau = \d^{-2}$.} \label{fig:reso_real}
\end{figure}

To proceed our validation of the approximation formula \eqref{eq:ppsa_led} for the scattering amplitude, we consider the regime where $\tau \sim \d^{-2}$, $\ww$ near $\ww_0 = \pi$ and $\d = o(|\ww - \ww_0|^{1/2})$, and have the following asymptotic estimates for the coefficients $\gnm$ and $\enm$ (cf.\,\eqref{eq:estforcoeff_1}--\eqref{eq:estforcoeff_3}):
\begin{align} \label{eq:estco_gamma}
    \gnm = \frac{4 \pi i^n \overline{\vnm}(\di) \dd \ei}{\sqrt{n(n+1)}} \dd \left\{ 
\begin{aligned}
    &\frac{i}{3}(\d \ww)^3 \frac{-2j_1(\d \wt)+\jj_1(\d \ww_\tau) }{\jj_1(\d \ww_\tau) + j_1(\d \ww_\tau)} + O\Big(\frac{\d^5}{\jj_1(\d\wt) + j_1(\d \wt)}\Big)\,,  &  n = 1\,, \\
     & O\Big(\frac{\d^{2n+1}}{ \frac{1}{n}\jj_n(\d \wt) + j_n(\d \ww_\tau)}\Big)\,,     & n \ge 2\,,
\end{aligned}
    \right.
\end{align}
and 
\begin{equation} \label{eq:estco_eta}
    \enm = \frac{ 4 \pi i^n \overline{\unm}(\di) \dd\ei}{\sqrt{n(n+1)}} \dd \left\{\frac{\jj_n(\d \ww)}{\hh_n(\d\ww)} + O\Big(\frac{\d^{2n+3}}{j_n(\d \wt)}\Big)\right\}\,,\q n \ge 1\,.
\end{equation}

By estimates \eqref{eq:estco_gamma} and \eqref{eq:estco_eta}, Lemma \ref{lem:multiexpes} gives the following quasi--static approximation for the scattered wave $E^s$: for $\tau = \d^{-2}$ and $\d = o(|\ww - \ww_0|^{1/2})$,
\begin{align*}
    E^s & = \sum_{m = -1}^1\gamma_{1,m} E_{1,m}^{TE} + \eta_{1,m} E_{1,m}^{TM} + O\big(\frac{\d^5}{\ww - \ww_0}\big)  \backsimeq  \sum_{m = -1}^1\h{\gamma}_{m} E_{1,m}^{TE}\,,\q   \ww \to \ww_0\,,
\end{align*}
where $\h{\gamma}_{m}$ is defined by the leading--order term of $\gamma_{1,m}$:
\begin{equation} \label{eq:hgmspn}
    \h{\gamma}_{m} =- (\d \ww)^3 \frac{2\sqrt{2} \pi  }{3} \dd
    \frac{-2j_1(\d \wt)+\jj_1(\d \ww_\tau) }{\jj_1(\d \ww_\tau) + j_1(\d \ww_\tau)}\overline{V_1^m}(\di) \dd \ei\,.
\end{equation}
Then by \eqref{eq:ffpete}, the scattering amplitude can be approximated by
\begin{equation*}
    E_\infty^s(\ww,\h{x}) \backsimeq \sum_{m = -1}^1 \frac{\sqrt{2}}{\ww} \h{\gamma}_{m}V^m_{1}(\h{x})\,,\q \ww \to \ww_0\,,
\end{equation*} 
which can be further reformulated as 
\begin{equation*}
    E_\infty^s(\ww,\h{x}) \backsimeq \sum_{m = -1}^1 \frac{1}{\ww} \h{\gamma}_{m}\h{x}\t{\bf Y}_m\,, \q \ww \to \ww_0\,,
\end{equation*}
by the vector spherical harmonics in \eqref{def:vsh} and the vectors ${\bf Y}_j$ in  \eqref{def:gahhp}. Comparing it with the formula \eqref{eq:ppsa_led}, we obtain the approximate resonant magnetic dipole for a spherical nanoparticle:
\begin{equation*}
    \w{\md} = - \sum_{j=-1}^1\frac{4\pi}{\ww^3} \h{\gamma}_j {\bf Y}_j\,,
\end{equation*}
which, along with \eqref{eq:hgmspn}, gives
\begin{equation} \label{eq:expappmd}
  \w{\md} = \d^3 \frac{8\pi^2}{3} \dd
    \frac{2j_1(\d \wt)-\jj_1(\d \ww_\tau) }{\jj_1(\d \ww_\tau) + j_1(\d \ww_\tau)} \sum_{j=-1}^1{\bf Y}_j \otimes \overline{{\bf Y}}_j (\di \t \ei)\,,
\end{equation}
where we have used 
\begin{equation*}
    \overline{V_1^m}(\di)\dd \ei = - \frac{1}{\sqrt{2}} \overline{ {\bf Y}}_m \dd( \di \t \dd \ei)\,. 
\end{equation*}
The explicit representation for the magnetic dipole \eqref{eq:expappmd} motives us to define a scattering function:
\begin{equation} \label{def:scafun_exp}
    \w{s}_m(\ww) = \frac{8\pi^2}{3} \dd
    \frac{2j_1(\d \wt)-\jj_1(\d \ww_\tau) }{\jj_1(\d \ww_\tau) + j_1(\d \ww_\tau)}\,.
\end{equation}

We shall numerically compare $\w{s}_m(\ww)$ with the one derived from the general formula \eqref{eq:ppsa_led}.
By generalizing Corollary \ref{thm:mainmultimom_2} to the case of a multidimensional eigenspace, we obtain 
the following representation for the approximate magnetic dipole:
\begin{equation} \label{check:appmd_0}
    \h{\md} = \ww^2 \d^3 \frac{-\ww_0}{2(\ww-\ww_0)} \sum_{j=-1}^1\Big(\int_{B(0,1)} \vp_j \otimes \int_{B(0,1)}\overline{\vp_j} \Big) \di \t \ei \,,
\end{equation}
where $\ww_0 = \pi$ and $\vp_j$ are the potentials chosen from $X_N^0$ such that $\curl \vp_j = \w{E}^{TE}_{1,j}/\|\w{E}^{TE}_{1,j}\|_{{\bf L}^2(B(0,1))}$.
To explicitly find $\vp_j$, recalling the definition of $\w{E}^{TE}_{1,j}$ in  \eqref{eq:enete},
we have
\begin{equation*}
    \int_{B(0,1)}|\w{E}^{TE}_{1,j}(\pi ,x)|^2 dx = \int_0^1  2 j_1(\pi r)j_1(\pi r) r^2 dr = \frac{1}{\pi^2}\,,
\end{equation*}
by the Lommel's integral \cite{watson1995treatise}:
$
    \int_0^1 j_n(a r)j_n(a r)r^2 dr = 
    \frac{1}{2}(j_n^2(a) - j_{n+1}(a)j_{n-1}(a))\,,
$
and then introduce $\vp_j \in X_N^0$ by
\begin{equation*}
    \vp_j(x) = \pi x j_1(\pi |x|) Y_1^{j}(\h{x}) + \na p_j\,,
\end{equation*}
where $p_j \in H_0^1(B(0,1))$ is constructed such that $\ddiv \vp_j = 0$, by solving a Laplacian Dirichlet problem. It further follows that, by \eqref{def:sh1-1}--\eqref{def:sh10}
and \eqref{eq:avgob}, 
\begin{align*}
    \int_{B(0,1)} \vp_j dx 
    = \pi \int_0^1 r^3  j_1(\pi r) dr \int_\S\h{x} Y_1^j(\h{x}) d\sigma(\h{x}) = \frac{4}{\pi} {\bf Y}_j\,,
\end{align*}
due to $\int_{B(0,1)} \na p_j dx = 0$. We substitute the above formula into \eqref{check:appmd_0} and find
\begin{equation*}  
    \h{\md} = \d^3 \frac{-\ww_0 \ww^2}{\ww-\ww_0} \frac{8}{\pi^2} \sum_{j=-1}^1{\bf Y}_j \otimes \overline{{\bf Y}}_j  \di \t \ei \,,
\end{equation*}
which gives us another scattering function:
\begin{equation} \label{def:scafun_gen}
    \h{s}_m(\ww) = - \frac{8}{\pi^2} \frac{\ww^2\ww_0}{\ww-\ww_0}\,.
\end{equation}
We plot the scattering functions $\w{s}_m$ and $\h{s}_m$ in Figure \ref{fig:scafunction} with $\tau = \d^{-2}$  and $\d = 0.15$, from which we see that they agree very well with each other.


\begin{figure}[!htbp]
 \centering
\includegraphics[clip,width=0.53\textwidth]{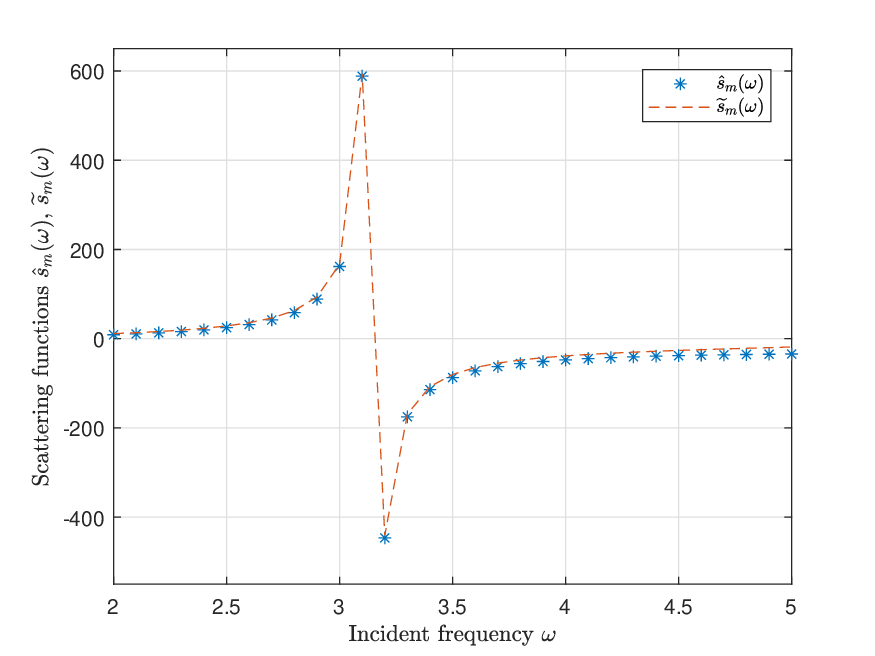}  
\caption{Scattering functions $\w{s}_m(\ww)$ and $\h{s}_m(\ww)$ for incident frequencies near $\ww_0 = \pi$ with $\d = 0.15$ and $\tau = 0.15^{-2}$.}\label{fig:scafunction}
\end{figure}

\section{Concluding remarks and discussions} \label{sec:conanddis}

In this work, we have comprehensively investigated the EM scattering by nanoparticles with high refractive indices. The geometry of the particles can be of very general shape, e.g., a torus with a hole. We have mathematically proved the existence of the dielectric subwavelength resonances and derived their asymptotic expansions when the contrast is large enough. A priori estimates 
for resonances and the associated resonant modes have also been provided. To explore the behavior of the scattering amplitude, we have proposed a systematic  approach to derive the EM multipole moment expansion based on the Helmholtz decomposition. We have theoretically found that near the resonant frequencies, the scattered field by nanoparticles with high refractive indices is dominated by the resonant magnetic dipole radiation. We have also estimated the enhancements of the scattering and extinction cross sections.



Because of the low losses of dielectric nanostructures ($\im \tau \ll 1$; see Figure \ref{fig:parasili}), the dielectric subwavelength resonances $\ww(\d,\tau) \approx (\lad_i \d^2 \tau)^{-1/2}$ (cf.\,\eqref{eq:existence_resonance})  can have a very high quality factor $Q = |\re \ww/ \im \ww|$, which is a desirable property in many potential applications and gives rise to the superior performance of the all--dielectric metamaterials over the lossy plasmonic devices. 
In a forthcoming paper, we plan to combine the results obtained in this paper on the subwavelength resonances in dielectric nanoparticles with high refractive index together with the effective medium approaches in \cite{add5,add6,add4} for the design of all--dielectric, electromagnetic metamaterials. 

\bigskip

\mb{{\bf Acknowledgements}. The authors 
would like to thank two anonymous referees for their very careful reading of the manuscript and their numerous insightful and constructive comments and suggestions, which have helped us improve the presentation and results of this work essentially.}

\titleformat{\section}{\bfseries}{\appendixname~\thesection .}{0.5em}{}
\titleformat{\subsection}{\normalfont\itshape}{\thesubsection.}{0.5em}{}
\appendices

\section{Justification of the asymptotic regime and the assumptions} \label{app:S_1}
\setcounter{figure}{0}
\renewcommand\thefigure{A.\arabic{figure}}

To verify the quasi--static regime \eqref{eq:subwave} and the high contrast assumption \eqref{asp:highcont}, we give the practical values of the physical parameters in the experiments \cite{kuznetsov2016optically,evlyukhin2012demonstration,garcia2011strong}, where the dielectric resonances and the strong magnetic responses are observed. We consider the scattering by a silicon nanoparticle in the free space. We have 
\begin{enumerate}[\textbullet]
    \item for the background medium,\small
    \begin{equation*}
        \mu_{{\rm m}} \approx 12 \times 10^{-7} {\rm H\dd m^{-1}}\,,\q \ep_{{\rm m}} \approx  9 \times 10^{-12} {\rm F\dd m^{-1}}\,,\q \text{(speed of light)}\ c \approx 3 \t 10^8 {\rm m \dd s^{-1}};
    \end{equation*}\normalsize
    \item for the silicon nanoparticle,\small
    \begin{align*}
    \mu_r \approx 1\,,\q \ep_r \approx [10, 50]\,,\q \d \approx [50,200] \t 10^{-9} {\rm m};    
    \end{align*}\normalsize
    \item for the incident wave, \small
    \begin{equation*}
        \ww \approx 2\pi \t [350, 650]\t 10^{12} {\rm Hz}\,,\q \text{(wavelength)}\ \lad = \frac{2 \pi c}{\ww} \approx [461,850]\t 10^{-9} {\rm m}.
    \end{equation*}\normalsize
\end{enumerate}
It is clear that $\d \ll \lad$ and $\ep_r \gg 1$, i.e., \eqref{eq:subwave} and \eqref{asp:highcont} hold. \mb{We also see from the above configuration (where
 the resonance occurs) that the wavelength inside the particle is comparable with the particle size, namely, $\d \sim \lad \ep_r^{\sss -1/2}$. To verify the assumptions $0 < \d \ll 1$ and
 \eqref{assp:taudelta} (or \eqref{asp:tauexpd}), it suffices to consider $10^{-6} {\rm m}$ as the unit length. Then we have $\lad \sim 1$ and $\d \ll 1$, and \eqref{assp:taudelta} readily follows from the relation $\d/\lad \sim \ep_r^{\sss -1/2}$.}  

We plot in Figure \ref{fig:parasili} the experimentally measured values of the relative electric permittivity, as well as the refractive index, of silicon as a function of the incident wavelength (the data is from \cite{aspnes1983dielectric}). We see that in the visible region, the refractive index $n$ is approximately $4$ with a very small imaginary part.
\begin{figure}[!htbp]
    \centering 
    \subfigure[Relative electric permittivity]{\label{fig:siliconep}
       \includegraphics[clip,width=0.5\textwidth]{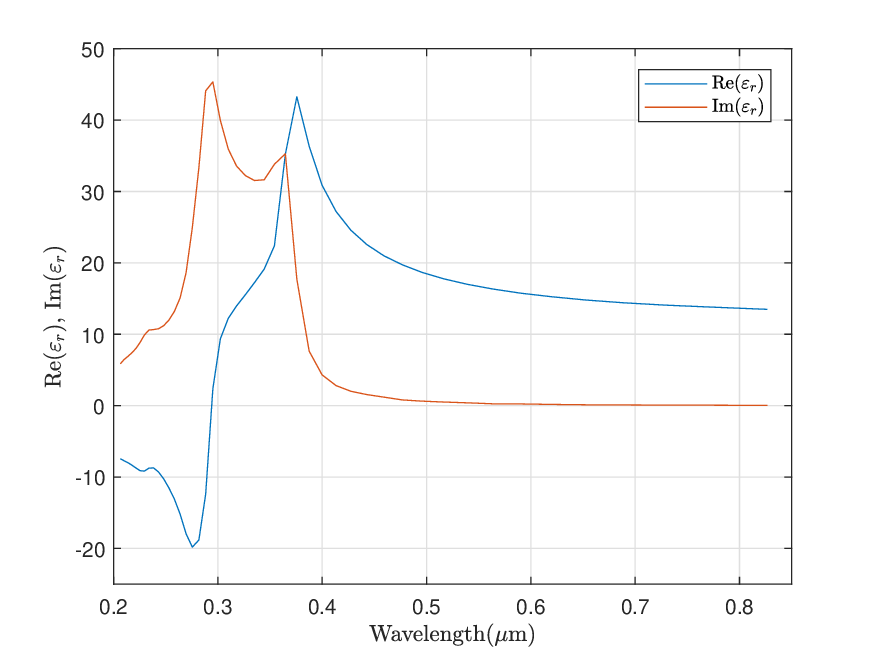}}  \hskip -0.8 cm
        \subfigure[Refractive index]{\label{fig:siliconreind}
       \includegraphics[clip,width=0.5\textwidth]{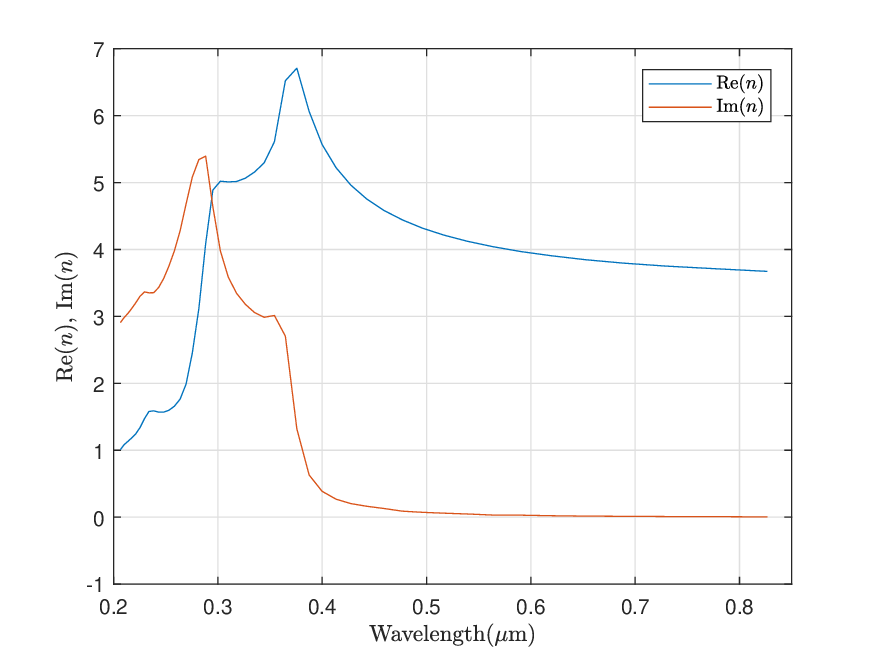}} 
      \caption{Relative electric permittivity and refractive index for silicon \cite{aspnes1983dielectric}.}
    \label{fig:parasili} 
\end{figure}

\if \commentflag = \ct
\section{Trace of the finite rank operator}\label{app:D}
We provide a framework to calculate the trace of a finite rank operator $T$ on a Banach space $X$. Suppose that the range of $T: X\to X$ is spanned by the basis $\{v_k\}_{k=1}^n \subset X$, with its dual set in $X^*$ given by $\{v_k^*\}_{k=1}^n$, i.e., 
$\l v_i^*, v_j \r_X = \d^j_{i}$. Then, the following representation of $T$ readily follows,
\begin{equation*}
    Tx = \sum_{k=1}^n \l v_k^*, T x\r_X v_k\,.
\end{equation*} 
With the help of the above representation, the trace of $T$ is given by 
\begin{align} \label{def:trafinite}
    \tr(T) = \sum_{k=1}^n \l v_k^*, Tv_k \r_X.
\end{align}
By some fundamental algebraic calculations, we can check that the definition in \eqref{def:trafinite} is independent of the choice of the basis $\{v_k\}$. If $X$ is a Hilbert space, we may replace the dual pairing $\l\dd,\dd\r_X$ in the above formulas with the inner product $(\dd,\dd)_X$. In this case, $v_i^*$ is given by $v_i$ if $v_i$ is normalized, i.e., $\norm{v_i}_X = 1$. 
\fi

\section{Some definitions, asymptotics and estimates}\label{app:A}
\begin{enumerate}[\textbullet]
    \item Let $\ynm(\h{x}),\  n =0, 1,2,\cdots,\ m=-n,\cdots,n,$ be the spherical harmonics on the unit sphere $\S$, which are the eigenfunctions of the Laplace--Beltrami operator:
  $
        \Delta_\S \ynm(\h{x}) + n(n+1)\ynm(\h{x}) = 0\,.
$    
    We define the vector spherical harmonics as follows:
    \small
    \begin{equation} \label{def:vsh}
        \unm = \frac{1}{\sqrt{n(n+1)}} \na_\S \ynm\,,\q \vnm = \h{x} \t \unm\,, \ n = 1,2,\cdots,\ m=-n,\cdots,n\,.
    \end{equation}
    \normalsize
  For $n = 1$, $m = -1, 0, 1$, the spherical harmonics $Y_n^m(\h{x})$ can be explicitly represented as:\small
    \begin{align}
        &  Y_{1}^{-1}(\h{x}) =  \frac{1}{2}\sqrt{\frac{3}{2\pi}} \sin \theta e^{-i\vp} = \frac{1}{2}\sqrt{\frac{3}{2\pi}}  (\h{x}_1 - i \h{x}_2)\,, \label{def:sh1-1}\\
        &    Y_{1}^{1}(\h{x}) = - \frac{1}{2}\sqrt{\frac{3}{2\pi}} \sin \theta e^{i\vp}=- \frac{1}{2}\sqrt{\frac{3}{2\pi}} (\h{x}_1 + i \h{x}_2)\,,\label{def:sh11} \\
        &   Y_1^0(\h{x}) = \frac{1}{2}\sqrt{\frac{3}{\pi}} \cos \theta = \frac{1}{2}\sqrt{\frac{3}{\pi}} \h{x}_3\,,\label{def:sh10}
        \end{align}
         \normalsize
        with the azimuthal angle $0 \le \vp \le 2\pi$ and the polar angle $ 0 \le \theta \le \pi$. By \eqref{def:sh1-1}--\eqref{def:sh10}, 
 we have the explicit formulas for
        the gradients of homogeneous harmonic polynomials of degree one ${\bf Y}_j: = \na (|x|Y_{1}^{j}(\h{x}))$, $i = -1,0,1$:
        \small
      \begin{align}  \label{def:gahhp}
              {\bf Y}_{-1} = \frac{1}{2}\sqrt{\frac{3}{2\pi}}  \mm 1 \\ -i \\ 0 \nn, \quad   {\bf Y}_1 =- \frac{1}{2}\sqrt{\frac{3}{2\pi}} \mm 1\\i \\0 \nn,
            \quad   {\bf Y}_0 = \frac{1}{2}\sqrt{\frac{3}{2\pi}}  \mm 0\\0\\ \sqrt{2} \nn.
       \end{align}
        \normalsize
    \item Define the entire electric multipole fields $\w{E}^{TE}_{n,m}(k,x)$ and $\w{E}^{TM}_{n,m}(k,x)$ on $\R^3$ with the wave number $k$ for $n = 1,2,\cdots$ and $m=-n,\cdots,n$:
    \small
    \begin{align}
       \wete(k,x) & = \curl \{x j_n(k |x|)\ynm(\h{x})\} =  - \sqrt{n(n+1)} j_n(k|x|)\vnm(\h{x}), \label{eq:enete} \\
       \wetm(k,x)  & = - \frac{1}{i k } \curl \wete(k,x)\notag\\ &  = -\frac{\sqrt{n(n+1)}}{ik|x|}\jj_n(k|x|)\unm(\h{x}) - \frac{n(n+1)}{ik|x|}j_n(k|x|)\ynm(\h{x})\h{x}\,, \label{eq:enetm}
    \end{align}
     \normalsize
    where $j_n(t)$ is the spherical Bessel function of order $n$, and $\jj_n(t)$ is given by $\jj_n(t) := j_n(t) + tj'_n(t)$.    
    The radiating electric multipole fields $\ete(k,x)$ and $\etm(k,x)$ on $\R^3\backslash\{0\}$ can be introduced similarly with $j_n(t)$ and $\jj_n(t)$ in \eqref{eq:enete} and \eqref{eq:enetm} replaced by the spherical Hankel function $h_n^{\sss (1)}(t)$ and $\hh_n(t) := h_n^{\sss (1)}(t) + t(h_n^{\sss (1)})'(t)$. 
    \item When $t \to 0$, the following asymptotics for spherical Bessel functions and Hankel functions hold  \cite{jackson1999classical,abramowitz1964handbook}:
    \small
    \begin{align}
        j_n(t) & = \frac{t^n}{(2n+1)!!} \big(1 - \frac{1}{2(2n+3)}t^2 + O(t^4)\big)\,, \label{eq:asyzero_j}\\
        h_n^{(1)}(t) & = -i \frac{(2n-1)!!}{t^{n+1}}\big(1 + \frac{1}{2(2n-1)}t^2 + O(t^4)\big)\,,\label{eq:asyzero_h}\\
        \jj_n(t) &= \frac{t^n}{(2n+1)!!} \big(n + 1 - \frac{n+3}{2(2n+3)}t^2  + O(t^4) \big)\,,\label{eq:asyzero_jj}\\
        \hh_n(t) &= -i \frac{(2n-1)!!}{t^{n+1}}\big(  - n - \frac{n - 2}{2(2n-1)}t^2 + O(t^4) \big)\,.\label{eq:asyzero_hh}
   \end{align}
    \normalsize
\item Wave functions  $\ete$ and $\etm$ have the following far--field behaviors: when $|x| \to \infty$,
\small
\begin{align} 
    &\ete(k,x) = -\sqrt{n(n+1)}\frac{e^{ik|x|}}{k|x|} e^{-i\frac{n+1}{2}\pi} \vnm(\h{x}) +O\Big(\frac{1}{|x|^2}\Big)\,, \label{eq:ffpete} \\
&\etm(k,x) = - \sqrt{n(n+1)}\frac{e^{ik|x|}}{k|x|} e^{-i\frac{n+1}{2}\pi}\unm(\h{x}) + O\Big(\frac{1}{|x|^2}\Big)\,. \label{eq:ffpetm}
\end{align}
 \normalsize
\item By the asymptotic forms in  \eqref{eq:asyzero_j}--\eqref{eq:asyzero_hh}, for $\tau \sim \d^{-2}$, $\ww$ near $\ww_0 = \pi$ and $\d = o(|\ww - \ww_0|^{1/2})$,
we have the following estimates used in \eqref{eq:estco_gamma} and \eqref{eq:estco_eta}: 
as $\d \to 0$, \small
 \begin{align}
 \label{eq:estforcoeff_1}
    &\frac{- j_n(\d \ww_\tau)\jj_n(\d \ww) + \jj_n(\d \ww_\tau) j_n(\d \ww)}{h_n^{(1)}(\d \ww) \jj_n(\d \ww_\tau) -  j_n(\d \ww_\tau) \hh_n(\d\ww)}\notag\\
    =& \frac{- j_n(\d \ww_\tau)(\jj_n(\d \ww)/\hh_n(\d\ww)) + \jj_n(\d \ww_\tau) (j_n(\d \ww)/\hh_n(\d\ww))}{(h_n^{(1)}(\d \ww)/\hh_n(\d\ww)) \jj_n(\d \ww_\tau) -  j_n(\d \ww_\tau) }\notag\\
   = & \frac{- j_n(\d \ww_\tau)O(\d^{2n+1}) + \jj_n(\d \ww_\tau) O(\d^{2n+1})} 
   { - \frac{1}{n}\jj_n(\d \wt) -  j_n(\d \ww_\tau) + O(\d^2)} = O\Big(\frac{\d^{2n+1}}{- \frac{1}{n}\jj_n(\d \wt) -  j_n(\d \ww_\tau)}\Big)\,,\q 
   \end{align} 
   and 
   \begin{align} \label{eq:estforcoeff_2}
         &\frac{ \frac{1}{ 1+\tau}   \jj_n(\d \ww_\tau) j_n(\d \ww) -  j_n(\d \ww_\tau ) \jj_n(\d \ww) }{ \frac{1}{ 1+\tau} h_n^{(1)}(\d \ww) \jj_n(\d \ww_\tau) -  j_n(\d \ww_\tau) \hh_n (\d \ww)}\notag\\= &\frac{ \frac{1}{ 1+\tau}   \jj_n(\d \ww_\tau) (j_n(\d \ww) /\hh_n (\d \ww)) -  j_n(\d \ww_\tau ) (\jj_n(\d \ww) /\hh_n (\d \ww)) }{ \frac{1}{ 1+\tau}( h_n^{(1)}(\d \ww) /\hh_n (\d \ww)) \jj_n(\d \ww_\tau) -  j_n(\d \ww_\tau)} \notag\\
    = &\frac{ \jj_n(\d \ww_\tau) O(\d^{2n+3}) -  j_n(\d \ww_\tau ) (\jj_n(\d \ww) /\hh_n (\d \ww))  }{ O(\d^{2})  -  j_n(\d \ww_\tau)} = \frac{\jj_n(\d \ww)}{\hh_n(\d\ww)} + O\Big(\frac{\d^{2n+3}}{j_n(\d \wt)}\Big)\,.
   \end{align}
\end{enumerate}
 \normalsize
In particular, for $n = 1$, we have a sharper estimate than \eqref{eq:estforcoeff_1}:
\small
\begin{align} \label{eq:estforcoeff_3}
    &\frac{- j_1(\d \ww_\tau)\jj_1(\d \ww) + \jj_1(\d \ww_\tau) j_1(\d \ww)}{h_1^{(1)}(\d \ww) \jj_1(\d \ww_\tau) -  j_1(\d \ww_\tau) \hh_1(\d\ww)} = \frac{i}{3}(\d \ww)^3 \frac{\jj_1(\d \ww_\tau) - 2 j_1(\d \ww_\tau) }{\jj_1(\d\wt) + j_1(\d \wt)} + O\Big(\frac{\d^5}{\jj_1(\d\wt) + j_1(\d \wt)}\Big)\,.
   \end{align} 
 \normalsize
\section{Multipole expansion for the Laplacian vector fields}\label{app:B}

We consider a three--dimensional vector field $E$ satisfying \small
\begin{equation} \label{eq:veclap}
    \left\{
    \begin{aligned}
        &\Delta E = 0\,,\ \ddiv E = 0\,, & |x|& > 1\,, \\
        &E = O(|x|^{-2})\,, & |x|& \to \infty\,.
    \end{aligned}    
    \right.
\end{equation}
 \normalsize
By the elliptic regularity theory, $E$ is analytic on $|x| > 1$. We shall find the multipole expansion for $E$ via the Debye potentials. Since many calculations involved are similar to those for the case of Maxwell's equations and can be found in \cite[pp.\,235-236]{monk2003finite}, we will not give all the details but only outline the main steps.

Given a smooth scalar function $u$, the associated Debye potential is defined by 
$ 
    E_u = \curl (u x)\,.
$ 
It is easy to check, by a direct calculation, that if $u$ is harmonic, i.e., $\Delta u = 0$ , then the Debye potential $E_u$ satisfies \small
\begin{equation*} 
    \curl \curl E_u  = 0\,.
\end{equation*}
 \normalsize
This, along with $\ddiv E_u = 0$, 
yields  $E_u$ is harmonic.

Moreover, for any scalar harmonic function $u$ in a neighborhood of infinity, we can expand it by $$ r^{-(n+1)}\ynm(\h{x})\,,\q n =0, 1,2,\cdots\,, \ m=-n,\cdots,n\,, $$ with $r = |x|$. The corresponding Debye potentials and their curls are given by \small
\begin{equation*}
    \eth(x) = \curl \Big(\frac{1}{r^{n+1}}\ynm(\h{x})x\Big)\,, \quad  \curl \eth(x) = \curl \curl \Big(\frac{1}{r^{n+1}}\ynm(\h{x})x\Big)\,,
\end{equation*}  \normalsize
for $n = 1,2,\cdots\,, \ m=-n,\cdots,n$,
which can be calculated explicitly:\small
\begin{align}
&\eth (x) = - \sqrt{n(n+1)} \frac{1}{r^{n+1}} \vnm(\h{x})\,, \label{def:dpeth}\\
&\curl \eth (x) =\frac{n(n+1)}{r^{n+2}}\ynm(\h{x}) \h{x} - \sqrt{n(n+1)} \frac{n}{r^{n+2}}\unm(\h{x})\,; \label{def:cdpeth}
\end{align}
 \normalsize
see the end of this section for the details. To summarize, for any harmonic function $u$, both the Debye potential $E_u$ and its curl solve the equation \eqref{eq:veclap}, and they can be spanned by $\eth$ in \eqref{def:dpeth} and $\curl \eth$ in \eqref{def:cdpeth}, respectively. We remark that $\curl \eth$ are nothing else but, up to factors $-n$, the electrostatic multipole fields
$
    \na \left(r^{-(n+1)} Y_n^m(\h{x})\right)\,.
$
\begin{proposition} \label{prop:expout}
Any solution $E$ to the equation \eqref{eq:veclap} has the following series expansion: \small
\begin{equation*}
    E(x) = \sum_{n=1}^\infty\sum_{m=-n}^n\gnm \eth(x) + \enm \curl \eth(x) + d_0 \frac{\h{x}}{r^2}\,,
\end{equation*}
 \normalsize
where $\gnm$, $\enm$ and $d_0$ are complex constants with  \small
\begin{equation} \label{def:doo}
    d_0 = \frac{1}{4 \pi} \int_\S \h{x}\dd E(\h{x}) d\sigma(\h{x})\,.
\end{equation}
 \normalsize
\end{proposition}

\begin{proof}
Since $\{\vnm\}$ and $\{\unm\}$ form an orthogonal basis for ${\bf L}^2_{\rm T}(\S)$, the 
tangential trace of a solution $E$ on $\S$ admits the following expansion: \small
\begin{equation} \label{eq:traceout}
    \h{x} \t E(x)|_{\S} = \sum_{n=1}^\infty\sqrt{n(n+1)}\sum_{m=-n}^n \big(\gnm \unm(\h{x}) -  n\enm\vnm(\h{x})\big) \,.
\end{equation}
 \normalsize
We define the constant $d_0$ by \eqref{def:doo}. We next show that $E$ equals to
\small
\begin{equation} \label{eq:traceout_2}
    \w{E}:= \sum_{n=1}^\infty\sum_{m=-n}^n\gnm \eth(x) + \enm \curl \eth(x) + d_0 \frac{\h{x}}{r^2}\,,
\end{equation}
 \normalsize
with $\gnm$ and $\enm$ being given in \eqref{eq:traceout}. This will complete the proof.

For this, by Green's formula, it is easy to check 
\small
\begin{align} \label{eq:auxforuniq}
    \int_{1\le |x| \le R} |\curl (E-\w{E})|^2 dx = \int_{\S_R} \n \t (\overline{E - \w{E}})\dd \curl (E - \w{E}) d\sigma\,.
\end{align}
 \normalsize
To proceed, since each Cartesian component of $E$ is harmonic at infinity, we know from \cite[Proposition 2.75]{folland1995introduction} that 
$ 
    \p_j E_i = O(|x|^{-2})\,. 
$ 
We also note from \eqref{eq:veclap} and \eqref{eq:traceout_2} that $E, \w{E}, \curl \w{E} = O(|x|^{-2})$. It then follows from \eqref{eq:auxforuniq} that \small
\begin{align} 
    \int_{1\le |x| \le R} |\curl (E-\w{E})|^2 dx = O\Big(\frac{1}{R^2}\Big) \q \text{as}\ R \to \infty\,,
\end{align}
 \normalsize
which implies $\curl (E - \w{E}) = 0$. Recalling \cite[Theorem 3.37]{monk2003finite} and $\h{x} \t (E-\w{E})|_\S = 0$, we have \small
\begin{equation*}
    E - \w{E} = \na \phi \q \text{for some}\ \phi \in H^1_{\rm loc}(\R^3\backslash \overline{B(0,1)})\ \text{with}\ \phi|_\S = \text{constant}\,.
\end{equation*}
 \normalsize
We now prove that $\phi$ must be zero. Note that $\phi$ solves the exterior Dirichlet problem for the Laplacian. It follows that $\phi$ has the form: \small
\begin{equation} \label{auxeqphi0}
    \phi =  c \frac{1}{r} Y_0^0(\h{x})\,, \q c \in \C\,.
\end{equation}
 \normalsize
By \eqref{def:doo}, \eqref{eq:traceout_2} and \eqref{auxeqphi0},  we derive \small
\begin{equation*}
   0 = \int_\S \h{x} \dd (E-\w{E}) d\sigma(\h{x}) = - c\int_{\S} \frac{1}{r^2}Y_0^0(\h{x}) d\sigma(\h{x})\,, 
\end{equation*}
 \normalsize
which gives $c = 0$. The proof is complete. 
\end{proof}

We end this section with the detailed calculations for \eqref{def:dpeth} and \eqref{def:cdpeth}, which mainly rely on the following fundamental formula for the $\ccurl$ operator \cite[p.187]{nedelec2001acoustic}: \small
\begin{equation*}
    \curl E = \na_{\S_r} \dd (E \t \h{x})\h{x} + \vec{{\rm curl}}_{\S_r}(E \dd \h{x}) - \frac{\p}{\p r}(E \t \h{x}) - \frac{1}{r}(E \t \h{x})\,.
\end{equation*}
 \normalsize
For $\eth$, we have \small
\begin{align*}
    \eth(x) = &\curl \Big( \frac{1}{r^{n+1}} \ynm(\h{x}) x \Big) = \vec{{\rm curl}}_{\S_r} \Big(\frac{1}{r^{n+1}}\ynm(\h{x}) r \Big) \\
  = & \frac{1}{r^n}\vec{{\rm curl}}_{\S_r} \ynm(\h{x}) = \frac{1}{r^{n+1}} \na_\S \ynm(\h{x}) \t \h{x}
  =  - \sqrt{n(n+1)} \frac{1}{r^{n+1}} \vnm(\h{x})\,.
\end{align*}
 \normalsize
 Similarly, it holds for $\curl \eth$ that \small
\begin{align*}
 -\frac{1}{\sqrt{n(n+1)}} \curl \eth(x) &=  \curl \Big(\frac{1}{r^{n+1}} \vnm(\h{x})\Big) =\frac{1}{r^{n+2}} \na_\S \dd \unm(\h{x}) \h{x} 
      - \frac{\p}{\p r} \Big(\frac{1}{r^{n+1}} \unm(\h{x})\Big) - \frac{1}{r}\frac{1}{r^{n+1}} \unm(\h{x})\\
    = &  -\sqrt{n(n+1)}\frac{1}{r^{n+2}} \ynm(\h{x})\h{x} + \frac{n}{r^{n+2}} \unm(\h{x})\,.
\end{align*}
 \normalsize
\section{Some proofs}

\subsection{Proof of Proposition \ref{prop:ledsolmain}}  \label{app:prfapp}

We first approximate $\pd \w{E}$ up to the second order in terms of $\d$, for which, thanks to \eqref{est:neum_1} and \eqref{auxeq:targetsysrhs}, we only need to consider $\AA_0^{-1}$. By \eqref{eq:invera0} with $(f,g) = (\pd \w{E}^i, \tau^{-1}\pw\w{E}^i )$, we derive 
\small
    \begin{align} \label{eq:ppeladpw_0}
         \AA_0(\ww)^{-1}\mm \pd \w{E}^i \\ \tau^{-1}\pw\w{E}^i \nn
          =& \mm \frac{-\ww_0}{2(\ww-\ww_0)}\poo + \rr(\ww) & 0 \\ 0 & I \nn   \mm \pd \w{E}^i - \tau^{-1} \ww^2 \kbdw \tbb^{-1} \pw\w{E}^i \\ - \tau^{-1}\tbb^{-1}\pw\w{E}^i \nn \notag\\ 
          = & \mm \frac{-\ww_0}{2(\ww-\ww_0)}\poo \w{E}^i + \frac{\ww_0  \tau^{-1} \ww^2 }{2(\ww-\ww_0)} \poo \kbdw \tbb^{-1} \pw\w{E}^i + \d \rr(\ww) \\ - \tau^{-1}\tbb^{-1} \pw \w{E}^i 
           \nn \,,
    \end{align}
\normalsize
where we have used $\pd \w{E}^i = O(\d)$ and $\tau = \d^{-2}$.

We next consider $\AA_0^{-1}\AA_2\AA_0^{-1}$ to find the third--order approximation for $\pw \w{E}$. We calculate, by \eqref{eq:ppeladpw_0},
\small
    \begin{align} \label{eq:ppepw_2mid} 
       \left(\AA_0^{-1}\AA_2\AA_0^{-1} \mm \pd \w{E}^i \\ \tau^{-1} \pw \w{E}^i \nn\right)_2  = &- (\tbb|_W)^{-1}\left(\AA_2\AA_0^{-1} \mm \pd \w{E}^i \\ \tau^{-1} \pw \w{E}^i \nn \right)_{2}\notag \\ 
         = &- (\tbb|_W)^{-1}\left\{\left(\AA_2\right)_{2,1}\left(\AA_0^{-1} \mm \pd \w{E}^i \\ \tau^{-1} \pw \w{E}^i \nn \right)_{1} + \left(\AA_2\right)_{2,2}\left(\AA_0^{-1} \mm \pd \w{E}^i \\ \tau^{-1} \pw \w{E}^i \nn\right)_{2}\right\}\notag\\
         = &- (\tbb|_W)^{-1}\left\{\left(\AA_2\right)_{2,1}\left(\frac{-\ww_0}{2(\ww-\ww_0)}\poo \w{E}^i + \d \rr(\ww)  \right) + O\Big(\frac{\d^2}{\ww - \ww_0}\Big)  \right\}\,. 
    \end{align}
\normalsize    
It is easy to compute 
 $ 
    \left(\AA_2(\ww)\right)_{2,1} = - \ww^2 \kbwd\,.
$ 
    Then it follows  from \eqref{eq:ppepw_2mid} that \small
    \begin{equation} \label{eq:ppepw_2}
       \left(\AA_0^{-1}\AA_2\AA_0^{-1} \mm \pd \w{E}^i \\ \tau^{-1} \pw \w{E}^i \nn \right)_2 =  - \frac{\ww^2 \ww_0}{2(\ww-\ww_0)} (\tbb|_W)^{-1}\kbwd \P_{\lad_0} \w{E}^i  + \d \rr(\ww) +  O\Big(\frac{\d^2}{\ww - \ww_0}\Big)  \,.
    \end{equation}
    \normalsize
The proof is complete by the following error estimate:    
\small
\begin{align} \label{eq:neuexpad_rigo}
\Big \|\Big(\big(\AA(\d,\ww)^{-1} - \AA_0^{-1} + \d^2 \AA_0^{-1}\AA_2\AA_0^{-1}\big)\mm \pd \w{E}^i \\ \tau^{-1} \pw \w{E}^i  \nn \Big)_2 \Big\|_{{\bf L}^2(D)} \lesssim \frac{\d^4}{|\ww - \ww_0|} + \frac{\d^5}{|\ww - \ww_0|^2}\,,
\end{align}    
which is derived from \eqref{est:neum_2} and \eqref{auxeqq_neu_1}. 
\normalsize

\subsection{Proof of Theorem \ref{thm:eigvalspher}} \label{app:eigvalspher} 

By the series representations  \eqref{eq:solin_1}, \eqref{eq:solin_2} and \eqref{eq:solout}, we first calculate the tangential traces of $u$ on $\S$ from inside and outside $B(0,1)$:\small
\begin{align*}
   \h{x} \t u(x)|_- = &\sum_{n=1}^\infty \sqrt{n(n+1)} \sum_{m=-n}^n \big(\frac{c_{n,m}}{n}  - \frac{\bnm}{ik} \jj_n(k)\big)\vnm(\h{x}) + \anm j_n(k)\unm(\h{x})\,,
\end{align*}
 \normalsize
and  \small
\begin{equation*}
    \h{x} \t u(x)|_+ = \sum_{n=1}^\infty \sqrt{n(n+1)}\sum_{m=-n}^n \gnm \unm(\h{x}) - n\enm  \vnm(\h{x})\,,
\end{equation*}
 \normalsize
respectively.
Similarly, we have the following tangential traces of $\curl u$ on $\S$:\small
\begin{equation*}
    \h{x} \t \curl u(x)|_- = \sum_{n=1}^\infty \sqrt{n(n+1)}\sum_{m=-n}^n \jj_n(k)\anm \vnm(\h{x}) +  ik j_n(k) \bnm\unm(\h{x})\,,
\end{equation*}
 \normalsize
and \small
\begin{align*}
    \h{x} \t \curl E(x)|_+ = - \sum_{n=1}^\infty  \sqrt{n(n+1)}\sum_{m=-n}^n n \gnm \vnm(\h{x})\,.
\end{align*}
 \normalsize
Matching the tangential traces of $u$ and $\curl u$ from inside and outside $B(0,1)$ gives \small
\begin{equation} \label{eq:coeff1}
    \mm 
  j_n(k) & -1 \\ 
 \jj_n(k) & n
 \nn 
 \mm \anm \\ \gnm \nn = \mm 0 \\0\nn,
\end{equation}
 \normalsize
and 
\small
\begin{equation} \label{eq:coeff2}
    \mm 
    j_n(k) & 0 \\ 
  (ik)^{-1} \jj_n(k) & -n
   \nn 
   \mm \bnm \\ \enm \nn = \mm 0 \\ n^{-1} c_{n,m}  \nn .
\end{equation}
 \normalsize
To complement the second linear system which involves three variables, we match the normal traces of $u$ inside and outside the domain, which yields
$  c_{n,m} = \enm n(n+1)
$. Then, by substituting it into \eqref{eq:coeff2}, we obtain
\small
\begin{equation} \label{eq:coeff3}
    \mm 
    j_n(k) & 0 \\ 
  (ik)^{-1}\jj_n(k) & -2n-1
   \nn 
   \mm \bnm \\ \enm \nn = \mm 0\\ 0\nn\,.
\end{equation}
 \normalsize



It is easy to observe that the equation \eqref{eq:eigsphere} with $D = B(0,1)$ has nontrivial solutions if and only if one of the linear systems \eqref{eq:coeff1} and \eqref{eq:coeff3} is singular for some $n,m$, equivalently, one of the determinants of \eqref{eq:coeff1} and \eqref{eq:coeff3} vanishes:\small
\begin{equation} \label{eq:det_1}
   n j_n(k) + \jj_n(k) = 0\,, 
\end{equation}
 \normalsize
or  
\small
\begin{equation} \label{eq:det_2}
    j_n(k) = 0\,.
\end{equation}
 \normalsize
If \eqref{eq:det_1} holds for some $n$ and $k$, then the system \eqref{eq:coeff1}, as well as \eqref{eq:eigsphere}, has nontrivial solutions, and hence the eigenspace of $\np_{\sss B(0,1)}^{\sss \rm d,d}$ associated with the eigenvalue $\lad = k^{\sss -2}$ contains a linear space spanned by ${\scriptstyle \{\wete\}_{m=-n}^n}$. Similarly, in the case where \eqref{eq:det_2} holds, we can conclude \eqref{eq:coeff3} is singular and $ {\scriptstyle\{\wetm\}_{m = -n}^n }$ are the linearly independent eigenfunctions of $\np_{\sss B(0,1)}^{\sss \rm d,d}$ for the eigenvalue $k^{\sss -2}$. We proceed to claim that \eqref{eq:det_1} is equivalent to  $j_{n-1}(k) = 0$. Indeed, by the recurrence relation of Bessel functions:
$
    j_n'(k) = j_{n-1}(k) - \frac{n+1}{k}j_n(k)\,,
$
we can rewrite \eqref{eq:det_1} as 
$
    0 = (n+1)j_n(k) + k j_{n-1}(k)-(n+1)j_n(k) =  k j_{n-1}(k).
$

To conclude, we have shown that if $k \neq 0$ is a zero of some Bessel function $j_n$ ($n \ge 0$) which must be real, simple, and symmetric with respect to the origin \cite{abramowitz1964handbook}, then $k^{\sss -2} > 0$ is an eigenvalue of $\np_{\sss B(0,1)}^{\sss \rm d,d}$ with the eigenspace containing linearly independent eigenfunctions:
\small
\begin{equation*}
    \Big(\bigcup_{m=-n-1}^{n+1}\w{E}^{TE}_{n+1,m}\Big)\bigcup \Big(\bigcup_{m=-n}^{n}\w{E}^{TM}_{n,m}\Big)\q \text{for}\ n \ge 1\,,
\end{equation*}
\normalsize
and 
\small
\begin{equation*}
    \bigcup_{m=-1}^{1}\w{E}^{TE}_{1,m} \q \text{for}\ n = 0\,.
\end{equation*}
\normalsize
To prove that these eigenfunctions exactly span the corresponding eigenspace, we need the Bourget’s hypothesis \cite{watson1995treatise,petropoulou2003common}, which states that the Bessel functions of the first kind: $J_\mu(z)$ and $J_{\mu + m}(z)$, with $\mu$ being rational and $m$ being a positive integer, can not have common zeros except $z = 0$. 
It readily follows that $j_n(z)$, $n \ge 0$, have no common positive zeros, which completes the proof.


\if \commentflag = \ct

\section{Algebraic equations and functions} \label{app:D}

In this appendix, we briefly recall the fundamentals of algebraic functions, following \cite{ahlfors1966complex,baumgartel1985analytic,suwa2007introduction}. Let 
$\mathcal{O}_n$ be the set of analytic functions in some neighborhood of $0 \in \C^n$. We consider the ring of polynomials
in a complex variable $w$ with the coefficients in $\mathcal{O}_{n-1}$: \begin{align*}
     \mathcal{O}_{n-1}[w]:= \big\{f(z,w) = \sum_{i = 0}^k a_i(z) w^i \,; \ a_i(z) \in \mathcal{O}_{n-1}\,,\ k \in \mathbb{N}\cup \{0\}\big\} \,,
\end{align*}
and define the Weierstrass polynomials in $w$ of degree $k$:
\begin{align} \label{def:weipoly}
    f(z,w) = a_0(z) + a_1(z) w  + \cdots + a_{k - 1}(z) w^{k-1} + w^k \in \mathcal{O}_{n-1}[w]  \q \text{with}\ a_i(0) = 0\,.
\end{align}




We are interested in the zero set of an irreducible monic polynomial $f \in \mathcal{O}_{1}[w]$ of degree $k$. The main result is that there is a one-to-one correspondence between the irreducible monic polynomials and the $k$--valued algebraic functions; see Theorem 1 in \cite[p.\,403]{baumgartel1985analytic}, Theorem 2 in \cite[p.\,404]{baumgartel1985analytic}, and Theorem 4 in \cite[p.\,306]{ahlfors1966complex}. The following special case \cite[p.\,406]{baumgartel1985analytic} is enough for our use. 

\begin{lemma} \label{prop:corres}
Let $p$ be an irreducible Weierstrass polynomial in $w$ of degree $k$:
\begin{equation*}
    p(z,w) = a_0(z) + a_1(z) w + \cdots + a_{k - 1}(z) w^{k-1} + w^k\,, 
\end{equation*}
where $a_i(z) \in \mathcal{O}_1$ are analytic functions on $|z| < \ep$ with $a_i(0) = 0$. We assume that $p(z,w)$ has only simple roots with respect to $w$ for $0 < |z | < \ep$. Then $f(z) := \{w\,;\ p(z,w) = 0\}$ is a $k$--valued algebraic function on the disk $|z| < \ep$ with $0$ being its only (algebraic) singularity. 
In particular, the Puiseux series expansion holds: 
\begin{align} \label{eq;puiseries}
      f(z) = \sum_{j = 0}^\infty c_j z^{j/k}\,, \q c_0 = 0\,,  \q |z| < \ep\,. 
\end{align}
 Moreover, substituting the $k$ different roots of $z$ into \eqref{eq;puiseries} yields exactly $k$ (different) values of $f$ at $z$. 
\end{lemma}


We end this appendix with the concept of algebraic functions, which is necessary for understanding the above lemma. 
Adopting the terminology in \cite{ahlfors1966complex}, given a domain $D \subset \C$ and an analytic function $f$ at $a \in D$, we say that the set of all the analytic function elements obtained by the analytic continuation of $f$ in $D$ is a \emph{global analytic function}, which may be multivalued; see \cite[p.\,285]{ahlfors1966complex}  for the rigorous statement and also \cite[p.\,390]{baumgartel1985analytic} for an equivalent definition. Then we can define the algebraic functions as follows; see \cite[p.\,394]{baumgartel1985analytic} and \cite[p.\,306]{ahlfors1966complex}. 
\begin{definition}
    A $p$--valued global analytic function $f$ on $D$ is called algebraic if its singular points are at most algebraic, which means 
    \begin{enumerate}
    \item the set of algebraic singularities $S$ must be isolated in $D$. Hence, $D \backslash S$ is a subdomain of $D$, and there are exactly $p$ values of $f$ at each point of $D \backslash S$.  
    \item for a singular point $b$ of $f$ in $D$ and a $q$--valued branch of $f$ in a neighborhood of $b$, the point $b$ is a algebraic branch point, that is, near $b$, the branch $f$ can be represented as
    \begin{align*}
        f(z) = h((z - b)^{1/q})\,, \q |z - b| < r\,,  
    \end{align*}
    where $h$ is a single--valued analytic function. Then the power series expansion of $h$ at $b$ gives us the Puiseux series expansion of $f$.
\end{enumerate}
\end{definition}
\fi

\end{document}